\newfont{\handw}{cmmi10 scaled 1200}
\newtheorem{Prop}{Proposition}[section]
\newtheorem{Lem}[Prop]{Lemma}
\newtheorem{Th}[Prop]{Theorem}
\newtheorem{Rm}[Prop]{Remark}
\newtheorem{Def}[Prop]{Definition}
\newtheorem{Ex}[Prop]{Example}
\newtheorem{Cor}[Prop]{Corollary}
\newtheorem{Cond}[Prop]{Condition}
\newfont{\smcal}{cmu10 scaled 1200}
\newcommand{\re}{\operatorname {Re}}
\newcommand{\diag}{\operatorname {diag}}
\newcommand{\E}{\operatorname {\mathbb E}}
\newcommand{\Prob}{\operatorname {\mathbb P}}
\newcommand{\tr}{\operatorname {tr}}
\newcommand{\rank}{\operatorname {rank}}
\newcommand{\argmin}{\operatorname {argmin}}
\begin{document}
\title{On the Meaning of Mean Shape}
\author{Stephan F. Huckemann}
  \date{}
        \maketitle

\begin{abstract}
	Various concepts of mean shape previously unrelated in the literature are brought into relation. In particular for non-manifolds such as Kendall's 3D shape space, this paper answers the question, for which means one may apply a two-sample test. The answer is positive if intrinsic or Ziezold means are used. The underlying general result of manifold stability of a mean on a shape space, the quotient due to an isometric action of a compact Lie group on a Riemannian manifold, blends the Slice Theorem from differential geometry with the statistics of shape. For 3D Procrustes means, however, a counterexample is given.  To further elucidate on subtleties of means, for spheres and Kendall's shape spaces, a first order relationship between intrinsic, residual/Procrustean and extrinsic/Ziezold means is derived stating that for high concentration the latter approximately divides the (generalized) geodesic segment between the former two by the ratio $1:3$. This fact, consequences of coordinate choices for the power of tests and other details, e.g. that extrinsic Schoenberg means may increase dimension are discussed and illustrated by simulations and exemplary datasets.
\end{abstract}
\par
\vspace{9pt}
\noindent {\it Key words and phrases:} intrinsic mean, extrinsic mean, Procrustes mean, Schoenberg mean, Ziezold mean, shape spaces, compact Lie group action, slice theorem, horizontal lift, manifold stability
\par
\vspace{9pt}Figure 
\noindent {\it AMS 2000 Subject Classification:} \begin{minipage}[t]{6cm}
Primary 60D05\\ Secondary 62H11
 \end{minipage}
\par

\section{Introduction}\label{intro-scn}

	The analysis of shape may be counted among the very early activities of mankind; be it for representation on cultural artefacts, or for morphological, biological and medical applications. In modern days shape analysis is gaining increased momentum in computer vision, image analysis, biomedicine and many other fields. For a recent overview cf. \cite{KY06}. 

	A \emph{shape space} can be viewed as the quotient of a Riemannian manifold -- e.g. the pre-shape sphere of centered unit size landmark configurations -- modulo the isometric and proper action of a Lie group (cf. \cite{Bre72}), conveying shape equivalence -- e.g. the group of rotations, cf. \citep[Chapter 11]{KBCL99}. Thus, it carries the canonical quotient structure of a union of manifold strata of different dimensions, which give in general a Riemannian \emph{manifold part} -- possibly with singularities comprising the \emph{non-manifold part} of \emph{non-regular shapes} at some of which sectional curvatures may tend to infinity, cf. \citep[Chapter 7.3]{KBCL99} as well as \cite{HHM07}. 

	In a Euclidean space, there is a clear and unique concept of a mean in terms of least squares minimization: the arithmetic average. Generalizing to manifolds, however, the concept of expectation, average or \emph{mean}  is surprisingly non trivial and not at all canonical. In fact, it resulted in an overwhelming number of different concepts of means, each defined by a specific concept of a distance, all of which are identical for the Euclidean distance in a Euclidean space. More precisely, with every embedding in a Euclidean space come specific \emph{extrinsic} and \emph{residual means} and with every Riemannian structure comes a specific \emph{intrinsic mean}. Furthermore, due to the non-Euclidean geometry, local minimizers introduced as \emph{Karcher means} by \cite{KWS90} may be different from global minimizers called \emph{Fr\'echet means} by \cite{Z77}, and, neither ones are necessarily unique. Nonetheless, carrying statistics over to manifolds, strong consistency (by \cite{Z77}, \cite{BP03}) and under suitable conditions, central limit theorems (CLTs) for such means have been derived (by \cite{J88}, \cite{HL96,HL98},  \cite{BP05} as well as \cite{H_Procrustes_10}). 
	On shape spaces, various other concepts of means  have been introduced, e.g. the famous \emph{Procrustes means} (cf. \cite{Z77},\cite{DM98}). As we show here, these means are related to the above ones via a 
	\emph{horizontal lifting} from the bottom quotient to the top manifold, cf. Table \ref{means:tab}. In particular, since there are many -- and often confusing -- variants of Procrustes means in the literatur this paper introduces the terminology of \emph{Procrustean means} standing for inheritance from residual means.
	\begin{table}[h!]
 	 	\centering
	 \fbox{$\begin{array}{c|c}
	  \mbox{manifold means} & \mbox{shape means}\\ \hline
	\mbox{intrinsic}&\mbox{intrinsic}\\
	\mbox{extrinsic}&\mbox{Ziezold}\\
	\mbox{residual} & \mbox{Procrustean}
	 \end{array}$}
	\caption{\it Three fundamental types of means on a shape space (right column) and their horizontal lifts to the respective manifold (left column). 
	\label{means:tab}} 
	\end{table}

	For a CLT to hold, a manifold structure locally relating to a Euclidean space is sufficient. 
	This leads to the question under which conditions it can be guaranteed that a mean shape lies on the manifold part.

	Due to strong consistency, 
	for a \emph{one-sample test} for a specific mean shape on the manifold part, it may be assumed that sample means eventually lie on the manifold part as well, thus making the above cited CLTs available. To date however, a two- and a multi-sample test could not be justified because of a lacking result on the following manifold stability. 
	\begin{Def} 
	A mean shape enjoys \emph{manifold stability} if it is assumed on the manifold part for any random shape assuming the manifold part with non-zero probability. 
	\end{Def}

	A key result of this paper establishes manifold stability for intrinsic and Ziezold means under the following condition.
	\begin{Cond}\label{realistic:cond}
	 On the non-manifold part the distribution of the random shape contains at most countably many point masses.
	\end{Cond}

	Since the non-manifold part is a null-set (e.g. \cite{Bre72}) under the projection of the Riemannian volume, this condition covers most realistic cases.

	We develop the corresponding theory for a general shape space quotient based on lifting a distribution on the shape space to the pre-shape space and subsequently exploiting the fact that intrinsic means are zeroes of an integral involving the Riemann exponential. The similar argument can be applied to Ziezold means but not to Procrustean means. More specifically, we develop the notion of a \emph{measurable horizontal lift} of the shape space except for its \emph{quotient cut locus} (introduced as well) to the pre-shape space. This requires the geometric concept of \emph{tubular neighborhoods admitting slices}.


	Curiously, the result applied to the finite dimensional subspaces exhausting the quotient shape space of closed planar curves with arbitrary initial point introduced by \cite{ZR72} and further studied by \cite{KSMJ04}, gives that the shape of the circle, since it is a singularity, can never be an intrinsic shape mean of non-circular curves.

	As a second curiosity, 3D full Procrustes means do not enjoy manifold stability in general, a counterexample involving low concentration is given. This is due to the fact that for low concentration, full Procrustes means may be `blinder' in comparison to intrinsic and Ziezold means to distributional changes far away from a mode.
	Included in this context is also a discussion of the \emph{Schoenberg means}, recently introduced by \cite{BandPat05} as well as by \cite{DKLW08} for the non-manifold Kendall reflection shape spaces, which in the ambient space, also allow for a CLT. Schoenberg means, as demonstrated, however, may feature `blindness' in comparison to intrinsic and Ziezold means, with respect to changes in the distribution of nearly degenerate shapes. 
	In a simulation we show that these features render Schoenberg means less effective for a discrimination involving degenerate or nearly degenerate shapes. 

	As a third curiosity, for spheres and Kendall's shape spaces, it is shown that, given uniqueness, with order of 
	concentration, 
	the (generalized) geodesic segment between the intrinsic mean and the residual/Procrustean mean 
	is in approximation divided by the extrinsic/Ziezold mean by the ratio $1:3$. This 
	first order relationship can be readily observed in existing data sets.
	In particular, this result supports the conjecture that Procrustean means of sufficiently concentrated distributions enjoy stability as well.

	This paper is structured as follows. 
	For convenience of the reader, first in Section \ref{Kendalls-ss:scn}, Kendall's shape spaces are introduced along with the specific result on manifold stability, followed by a classification of concepts of means on general shape spaces in Section \ref{Frechet-rho-means:scn}. 
	The rather technical Section \ref{convex:scn} develops horizontal lifting and establishes manifold stability, technical proofs are deferred to the appendix. 
	 In Section \ref{ext-means:scn} extrinsic Schoenberg means are discussed and 
	Section \ref{local:scn} tackles local effects of curvature on spheres and Kendall's shape spaces. 
%
	Section \ref{class_data_simulations_scn} illustrates practical consequences using classical data-sets as well as simulations. 
	Note that lacking stability 
	does not affect the validity of the Strong Law, on which the considerations on asymptotic distance in Sections \ref{local:scn} and \ref{class_data_simulations_scn} are based.

	An R-package for all of the computations performed is provided online: \cite{Hshapes}.

\section{Stability of Means on Kendall's Shape Spaces}\label{Kendalls-ss:scn}

	In the statistical analysis of similarity shapes based on landmark configurations, geometrical $m$-dimensional objects (usually $m=2,3$) are studied by placing $k>m$ \emph{landmarks} at specific locations of each object. Each object is then described by a matrix in the space $M(m,k)$ of $m\times k$ matrices, each of the $k$ columns denoting an $m$-dimensional landmark vector. $\langle x,y\rangle := \tr(xy^T)$ denotes the usual inner product with norm $\|x\| = \sqrt{\langle x,x\rangle}$. For convenience and without loss of generality for the considerations below, only \emph{centered} configurations are considered. Centering can be achieved by multiplying with a sub-Helmert matrix ${\cal H}\in M(k,k-1)$ 
	from the right, yielding a configuration $x{\cal H}$ in $M(m,k-1)$. For this and other centering methods cf. \citep[Chapter 2]{DM98}. 
	Excluding also all configurations with all landmarks coinciding 
	gives the space of \emph{configurations} 
	\begin{eqnarray*}
	F_m^k&:=& M(m,k-1) \setminus \{0\} \,.
	\end{eqnarray*}
	Since 
	only the similarity shape is of concern, we may assume that all configurations are contained in the unit sphere
	$ S_m^k :=\{x\in F_m^k: \|x\|=1\}$ called the \emph{pre-shape sphere}. Then, \emph{Kendall's shape space} is the canonical quotient 
	$$\Sigma_m^k := S_m^k/SO(m) = \{[x]:x\in S_m^k\}\mbox{ with the \emph{orbit} } [x] = \{gx:g\in SO(m)\}\,.$$
	In some applications reflections are also filtered out giving \emph{Kendall's reflection shape space}  $$R\Sigma_m^k := \Sigma_m^k/\{e,\widetilde{e}\}= S_m^k/O(m)\,.$$
	Here, $O(m) = \{g\in M(m,m): g^Tg=e\}$ denotes the orthogonal group with the unit matrix $e=\diag(1,\ldots,1)$,  $\widetilde{e} = \diag(-1,1,\ldots,1)$ 
	and $SO(m) = \{g\in O(m): \det(g)=1\} $ 
	is the special orthogonal group.
 
	For $1\leq j < m < k$ consider the isometric embedding
	\begin{eqnarray}\label{pre-shape-emb:eq} \begin{array}{rclcrcl} S_j^k &\hookrightarrow& S_m^k&:& x&\mapsto&\left(\begin{array}{c}x\\\hline 0\end{array}\right)\end{array}\,
	\end{eqnarray}
	giving rise to a canonical embedding $R\Sigma_j^k\hookrightarrow \Sigma_m^k$ which is isometric w.r.t. the canonical intrinsic distance, the Procrustean distance and the Ziezold distance, respectively, defined in Section \ref{Frechet-rho-means:scn}, cf. \citep[p. 29]{KBCL99}, cf. also Remark \ref{sharp:rm} below.

	We say that a configuration in $\mathbb R^m$ is \emph{$j$-dimensional}, or more precisely \emph{non-degenerate $j$-dimensional} if its preshape $x\in S_m^k$ is of rank $j$. Moreover, for $j\geq 3$ the shape spaces $\Sigma_j^k$ and $R\Sigma_j^k$ decompose into a \emph{manifold part} (defined in Section \ref{Frechet-rho-means:scn}, cf. also 
	Section \ref{ext-means:scn}) of \emph{regular shapes}
	$$(\Sigma_j^k)^* = \{[x]\in \Sigma_j^k: \rank(x) \geq j-1\}\mbox{ and }(R\Sigma_j^k)^* = \{[x]\in R\Sigma_j^k: \rank(x) =j\}\,,$$
	respectively, given by the shapes corresponding to configurations of at least dimension $j-1$ and $j$, respectively and a non void part of singular shapes corresponding to lower dimensional configurations, respectively.

	The following Theorem for intrinsic means, full Procrustes means and Ziezold means (also defined in Section \ref{Frechet-rho-means:scn}) of random elements taking values in $R\Sigma_j^k$ follows from Proposition \ref{res_means_sphere:prop}, Remark \ref{kend-mf-quot-mean:rm} and the fact that $R\Sigma_j^k\subset \Sigma_m^k$ contains all shapes in $\Sigma_m^k$ of configurations of dimension up to $j$, $1\leq j<m<k$.

	\begin{Th}\label{decr_dim:th}
	 Suppose that $X$ is a random shape on $\Sigma_m^k$ assumes shapes in $R\Sigma_j^k$  ($1\leq j < m<k$) with probability one. Then every full Procrustes mean shape of $X$ and every unique intrinsic or Ziezold mean shape under Condition \ref{realistic:cond} w.r.t. $(R\Sigma_j^k)^*$ corresponds to a configuration of dimension less than or equal to $j$.
	\end{Th}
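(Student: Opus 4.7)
The plan is to derive Theorem~\ref{decr_dim:th} as a specialization of two more general results established later in the paper: the manifold stability theorem for intrinsic and Ziezold means (Section~\ref{convex:scn}) applied to the sub-shape space $R\Sigma_j^k$, and Proposition~\ref{res_means_sphere:prop} together with Remark~\ref{kend-mf-quot-mean:rm} for the Procrustean case. The common starting point is the canonical isometric embedding $R\Sigma_j^k\hookrightarrow\Sigma_m^k$ induced by (\ref{pre-shape-emb:eq}), which, as recorded in \citep[p.\ 29]{KBCL99}, preserves each of the intrinsic, Ziezold and Procrustean distances. Restricted to $R\Sigma_j^k$, the Fr\'echet functional on $\Sigma_m^k$ therefore agrees with the one intrinsic to $R\Sigma_j^k$, and since $R\Sigma_j^k$ consists precisely of shapes whose configurations have dimension $\leq j$, it suffices to show that no minimizer over $\Sigma_m^k$ can escape the embedded subspace.

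For intrinsic and Ziezold means, I would first lift $X$ horizontally to a random pre-shape supported on the subsphere $S_j^k\subset S_m^k$. Splitting any candidate pre-shape $y\in S_m^k$ into its first $j$ rows $y_1$ and its last $m-j$ rows $y_2$, and replacing $y$ by $y':=(y_1/\|y_1\|,0)\in S_j^k$ (when $y_1\neq 0$), one has $\langle x,y\rangle=\|y_1\|\,\langle x,y'\rangle$ for every pre-shape $x\in S_j^k$. Since the optimal rotation in $SO(m)$ relating two elements of $S_j^k$ may be chosen block-diagonal in $SO(j)\times SO(m-j)$, this inequality survives the quotient and the Fr\'echet functional at $[y']$ does not exceed the one at $[y]$, so every minimizer already lies in $R\Sigma_j^k$. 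Applying the manifold stability theorem inside $R\Sigma_j^k$ under Condition~\ref{realistic:cond} (now understood relative to $(R\Sigma_j^k)^*$) in fact gives the sharper conclusion that the mean lies in $(R\Sigma_j^k)^*$; the weaker membership in $R\Sigma_j^k$ required here follows at once.

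For the full Procrustes mean, general manifold stability fails, but the subsphere-projection argument is cleaner because the Procrustean residual squared distance on the pre-shape sphere is a quadratic expression in the pre-shape, making the minimization linear in the ``extra'' rows $y_2$. This is the content of Proposition~\ref{res_means_sphere:prop}: residual means on $S_m^k$ of distributions supported on $S_j^k$ are themselves contained in $S_j^k$. Remark~\ref{kend-mf-quot-mean:rm} then identifies full Procrustes means on $\Sigma_m^k$ with equivalence classes of such residual means of horizontally lifted pre-shape distributions, transferring the statement from sphere to quotient and doing so without any uniqueness or concentration hypothesis.

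The main obstacle I anticipate is the interaction between the horizontal lift and the rotational optimization hidden in the Procrustes distance: to pass between a pre-shape estimate and its shape class one must pick an optimal rotation in $SO(m)$, and it has to be verified that for pre-shapes of rank $\leq j$ this optimal rotation can be chosen block-diagonal, so that no mixing of the first $j$ rows with the last $m-j$ rows occurs. Once this compatibility is set up, which is the technical heart of Proposition~\ref{res_means_sphere:prop} and Remark~\ref{kend-mf-quot-mean:rm}, the present theorem reduces to a citation of those results combined with the isometric embedding $R\Sigma_j^k\hookrightarrow\Sigma_m^k$.
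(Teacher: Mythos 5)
Your reduction for the intrinsic and Ziezold case hinges on the claim that replacing a candidate $y\in S_m^k$ by $y'=(y_1/\|y_1\|,0)\in S_j^k$ never increases the Fr\'echet functional, and this step fails. The identity $\langle x,y\rangle=\|y_1\|\,\langle x,y'\rangle$ only controls the inner product for the particular representative $x\in S_j^k$, whereas the quotient distances involve $\max_{g\in SO(m)}\langle gx,y\rangle$; since $\rank(x)\leq j<m$ this maximum equals the sum of the singular values of $xy^T$, which also feeds on the block $x_1y_2^T$ that you discard. Concretely, for $j=1$, $m=2$, $k=3$, $x=\left(\begin{array}{cc}0&1\\0&0\end{array}\right)\in S_1^3$, $y=\left(\begin{array}{cc}a&0\\0&b\end{array}\right)$ with $a,b>0$, $a^2+b^2=1$, one gets $\max_{g\in SO(2)}\langle gx,y\rangle=b>0=\max_{g\in SO(2)}\langle gx,y'\rangle$, so passing from $[y]$ to $[y']$ strictly increases both the Ziezold and the intrinsic distance to $[x]$; for distributions charging such shapes the Fr\'echet functional can go up, and ``every minimizer already lies in $R\Sigma_j^k$'' does not follow. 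Your appeal to block-diagonal optimal rotations concerns two elements of $S_j^k$, but the pair being aligned here has $y\notin S_j^k$. A structural warning sign is that your argument never uses the uniqueness hypothesis: the theorem asserts the conclusion for intrinsic/Ziezold means only when unique, and the paper's engine, Proposition \ref{res_means_sphere:prop}, is a symmetry argument ($\|X-(v+\nu)\|=\|X-(v-\nu)\|$ a.s., forcing $\nu=0$ only under uniqueness); Proposition \ref{spherical_means:thm}(ii) shows that without uniqueness means of a distribution on a subsphere can, and do, leave the subsphere. A monotone-projection argument valid for all (possibly non-unique) intrinsic/Ziezold means therefore cannot be right.

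For the Procrustean part you defer exactly the crucial compatibility --- that a rank-$\leq j$ pre-shape can be put into optimal position to an arbitrary $\mu\in S_m^k$ by a block-diagonal rotation, so that an optimally positioned lift stays inside $S_j^k$ --- to Proposition \ref{res_means_sphere:prop} and Remark \ref{kend-mf-quot-mean:rm}, but neither result contains it, and as stated it is false: optimal positioning only makes $x\mu^T$ symmetric nonnegative and in general mixes the first $j$ with the last $m-j$ rows (the same $2\times 2$ example exhibits this). Remark \ref{kend-mf-quot-mean:rm} supplies a lift in optimal position to the putative mean $\mu$ whose support need not lie in any fixed subsphere, while Proposition \ref{res_means_sphere:prop} requires support in a subsphere; bridging these two is the actual mathematical content behind the paper's one-line derivation of Theorem \ref{decr_dim:th} from Proposition \ref{res_means_sphere:prop}, Remark \ref{kend-mf-quot-mean:rm} and the isometric embedding, and your proposal leaves it unestablished --- indeed it needs information about the rank and row space of $\mu$, which is precisely what the theorem is trying to prove, so simply citing those results is circular.
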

 
	The following theorem is the application of the key result 
	applied to Kendall's shape spaces. 

	\begin{Th}[Stability Theorem for Intrinsic and Ziezold means]
	\label{Kendall_mean_dim} Let $X$ be a random shape on $\Sigma_m^k$, $0<m<k$, with unique intrinsic or Ziezold mean shape $[\mu] \in \Sigma_m^k$, $\mu \in S_m^k$ and let $1\leq j \leq m$ be the maximal dimension of configurations of shapes assumed by $X$ with non-zero probability. Suppose moreover that shapes of configurations of strictly lower dimensions are assumed with at most countably many point masses.
	\begin{enumerate}\item[(i)]
	If $j<m$ then $\mu$ corresponds to a non-degenerate $j$-dimensional configuration. 
	\item[(ii)]If $j=m$ then $\mu$ corresponds to a non-degenerate  configuration of dimension $m-1$ or $m$.
	\end{enumerate}
	\end{Th}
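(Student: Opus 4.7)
The plan is to reduce Theorem \ref{Kendall_mean_dim} to the general manifold-stability result that the paper promises to establish for arbitrary shape-space quotients in Section \ref{convex:scn}, treating the two cases separately. The common strategy in both cases is: identify the correct ambient shape space for $[\mu]$ together with its manifold/non-manifold decomposition, verify that the hypothesis implies Condition \ref{realistic:cond} on the non-manifold part of that space, and then invoke manifold stability to push $[\mu]$ onto the manifold part.

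For case (i), $j<m$, I would first observe that the hypothesis forces $X$ to take values almost surely in the subset of $\Sigma_m^k$ corresponding to configurations of dimension at most $j$, which by the isometric embedding in (\ref{pre-shape-emb:eq}) is canonically identified with $R\Sigma_j^k$. Theorem \ref{decr_dim:th} then yields that every unique intrinsic or Ziezold mean $[\mu]$ corresponds to a configuration of dimension $\leq j$, so $[\mu]\in R\Sigma_j^k$. Within $R\Sigma_j^k$ the manifold part $(R\Sigma_j^k)^*$ is exactly the set of shapes of dimension $j$, while the non-manifold part consists of shapes of strictly lower dimension, carrying only countably many atoms by hypothesis; hence Condition \ref{realistic:cond} is satisfied. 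Since $j$ is the maximal dimension attained with positive probability, $X$ hits $(R\Sigma_j^k)^*$ with positive probability, and the general manifold-stability result places $[\mu]$ on $(R\Sigma_j^k)^*$, i.e.\ on a non-degenerate $j$-dimensional configuration.

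For case (ii), $j=m$, I would work directly in $\Sigma_m^k$. Its manifold part $(\Sigma_m^k)^*$ consists of configurations of dimension $m-1$ or $m$, so the non-manifold part consists of configurations of dimension at most $m-2$. These are shapes of dimension strictly less than $j=m$, so the hypothesis again supplies Condition \ref{realistic:cond}. As $X$ attains dimension $m$ with positive probability, it charges the manifold part, and the general stability result places $[\mu]$ on $(\Sigma_m^k)^*$, yielding the asserted dimension $m-1$ or $m$.

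The principal obstacle is of course the general manifold-stability theorem itself, which the paper proves in Section \ref{convex:scn} using measurable horizontal lifts, tubular neighborhoods admitting slices, and the characterization of intrinsic and Ziezold means as zeros of an integral involving the Riemann exponential; all genuinely new technical content lives there. Once that result is available, the remaining delicate points here are bookkeeping: one needs the embedding (\ref{pre-shape-emb:eq}) to be isometric with respect to both the intrinsic and the Ziezold distance (so means computed inside $R\Sigma_j^k$ coincide with means in $\Sigma_m^k$ restricted to $R\Sigma_j^k$), and one needs uniqueness of $[\mu]$ in $\Sigma_m^k$ to descend to uniqueness in $R\Sigma_j^k$ in case (i); both are already recorded in the paragraph preceding Theorem \ref{decr_dim:th} and built into the hypotheses.
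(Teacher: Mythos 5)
Your overall route is the paper's route: reduce to Theorem \ref{decr_dim:th} together with the general manifold stability result (Corollary \ref{pop_mean_reg:cor}) applied to $R\Sigma_j^k$ and to $\Sigma_m^k$, using (\ref{mf-part:eq}) to translate ``regular'' back into dimension $j$, respectively $m-1$ or $m$. The case split and the remarks on the isometric embedding (\ref{pre-shape-emb:eq}) and on uniqueness descending to $R\Sigma_j^k$ are fine and match what the paper compresses into one sentence.

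The gap is in your claim that, once the general stability theorem is available, everything else is bookkeeping. Corollary \ref{pop_mean_reg:cor} is not unconditional: it requires (a) that $X$ be supported by $Q\setminus C^{quot}([\mu])$, and (b), for Ziezold means, that optimal positioning be invariant and that condition (\ref{Ziez-inj:cond}) hold. Neither is automatic, and both are Kendall-specific verifications that your proposal never makes. For (a) the paper proves Lemma \ref{cut_locus:lem}: although the manifold cut locus on the pre-shape sphere is never empty (it is $\{-p\}$), the \emph{quotient} cut loci of $\Sigma_m^k$ and $R\Sigma_m^k$ are void, because $[p]=[-p]$ or $p,-p$ fail to be in optimal position; without this you cannot apply the stability corollary to an arbitrary distribution on $\Sigma_m^k$. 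For (b) the paper invokes Remark \ref{kendall_book:rm} (invariance of optimal positioning on $\Sigma_m^k$ and $R\Sigma_m^k$) and Remarks \ref{half-sphere:rm} and \ref{Ziez-half-sphere:rm}: since the maximal intrinsic distance on Kendall's (reflection) shape spaces is $\pi/2$, any horizontal lift in optimal position lies in a closed half sphere, where (\ref{Ziez-inj:cond}) is valid. So your argument for intrinsic means needs the cut-locus verification added, and your argument for Ziezold means needs both additions; with these inserted it coincides with the paper's proof.
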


	\begin{proof}  
	Lemma \ref{cut_locus:lem} teaches that for Kendall's shape spaces, all quotient cut loci are void. Since 
	for Ziezold means, Remark \ref{kendall_book:rm} provides invariant optimal positioning and Remark \ref{half-sphere:rm} provides the validity of (\ref{Ziez-inj:cond}), Corollary 	\ref{pop_mean_reg:cor} applied to $R\Sigma_{j}^k$ as well as to $\Sigma_m^k$ states that intrinsic and Ziezold means are also assumed on the manifold parts of $R\Sigma_{j}^k$ and $\Sigma_m^k$, respectively. In conjunction with Theorem \ref{decr_dim:th}, this gives the assertion.	
	\end{proof}
 

	\begin{Rm}\label{sharp:rm} The result of Theorem \ref{Kendall_mean_dim} is sharp. 
	To see this, consider for  $\alpha >\beta >0$, $\alpha^2+\beta^2 =1$ the pre-shapes 
 	$$ x = \left(\begin{array}{ccc}\alpha &0&0\\0&\beta&0 \end{array}\right),~~y = \left(\begin{array}{ccc}\alpha &0&0\\0&-\beta&0 \end{array}\right)\mbox{ and }z = \left(\begin{array}{ccc}1 &0&0\\0&0&0 \end{array}\right)\, \in S_2^4\,.$$
	Then $x$ and $y$ correspond to non-degenerate two-dimensional quadrilateral configurations while $z$ corresponds to a one-dimensional (collinear) quadrilateral. Still, $[z]$ is regular in $\Sigma_2^4$ and it is the intrinsic and Ziezold mean of $[x]$ and $[y]$ in $\Sigma_2^4$. Under the embedding $R\Sigma_2^4\hookrightarrow \Sigma_3^4$ we have the pre-shapes
	$$x' = \left(\begin{array}{ccc}\alpha &0&0\\0&\beta&0 \\0&0&0\end{array}\right),~~y' = \left(\begin{array}{ccc}\alpha &0&0\\0&-\beta&0\\0&0&0 \end{array}\right)\mbox{ and }z' = \left(\begin{array}{ccc}1 &0&0\\0&0&0 \\0&0&0\end{array}\right)\in S_3^4\,.$$
	Just as $[x] = [y]$ in $R\Sigma_2^4$ so do $x'$ and $y'$ have regular and identical shape in $\Sigma_3^4$. However, $[z']$ is not regular and it is not the intrinsic or Ziezold mean in $\Sigma_3^4$. 
	\end{Rm}

\section{Fundamental Types of Means} 
\label{Frechet-rho-means:scn}

	In the previous section we introduced Kendall's \emph{shape} and \emph{reflection shape space} based on invariance  under similarity transformations and, including reflections, respectively. Invariance under congruence transformations only leads to Kendall's \emph{size-and-shape space}. More generally in image analysis, invariance may also be considered under the affine or projective group, cf. \cite{MP01,MP05}. A different yet also very popular popular set of shape spaces for two-dimensional configurations modulo the group of similarities has been introduced by \cite{ZR72}. Instead of building on a finite dimensional Euclidean matrix space modeling landmarks, the basic ingredient of these spaces modeling closed planar unit speed curves is the infinite dimensional Hilbert space of Fourier series, cf. \cite{KSMJ04}. In practice for numerical computations, only finitely many Fourier coefficients are considered.

	To start with, a shape space is a metric space $(Q,d)$.
	For this entire paper suppose that $X, X_1,X_2,\ldots$ are i.i.d. random elements mapping from an abstract probability space $(\Omega,\cal A,\Prob)$ to $(Q,d)$ equipped with its self understood Borel $\sigma$-field. Here and in the following, \emph{measurable} will refer to the corresponding Borel $\sigma$-algebras, respectively. Moreover, denote by $\mathbb E(Y)$ the classical expected value of a random vector $Y$ on a $D$-dimensional Euclidean space $\mathbb R^D$, if existent. 

	\begin{Def}\label{Frechet_means:def} For a continuous function $\rho:Q\times Q \to [0,\infty)$ define the \emph{set of  population Fr\'echet $\rho$-means} by
	$$ E^{(\rho)}(X) = \argmin_{\mu\in Q} \mathbb E\big(\rho(X,\mu)^2\big) 
	\,.$$
	For $\omega\in \Omega$ denote the \emph{set of sample Fr\'echet $\rho$-means} by
	$$ E^{(\rho)}_n(\omega) = \argmin_{\mu\in Q} \sum_{j=1}^n \rho\big(X_j(\omega),\mu\big)^2\,.$$
	\end{Def}

	 	By continuity of $\rho$, the $\rho$-means are closed sets, additionally, sample $\rho$-means are random sets, all of which may be empty. For our purpose here, we rely on the definition of \emph{random closed sets} as introduced and studied by \cite{Choq54}, \cite{Kend74} and \cite{Math75}. Since their original definition for $\rho=d$ by \cite{F48} such means have found much interest. 

	\paragraph{Intrinsic means.} Independently, for a connected Riemannian manifold with geodesic distance $\rho^{(i)}$, \cite{KN69} defined the corresponding means as \emph{centers of gravity}. They are nowadays also well known as \emph{intrinsic means} by \cite{BP03,BP05}. 

	\paragraph{Extrinsic means.} W.r.t. the chordal or \emph{extrinsic metric} $\rho^{(e)}$ due to an embedding of a Riemannian manifold in an ambient Euclidean space, Fr\'echet $\rho$-means have been called \emph{mean locations} by \cite{HL96} or \emph{extrinsic means}  by \cite{BP03}. 

	More precisely, let $Q=M\subset \mathbb R^D$ be a complete Riemannian  manifold embedded in a Euclidean space $\mathbb R^D$ with standard inner product $\langle x,y\rangle$, $\|x\|=\sqrt{\langle x,x\rangle}, \rho^{(e)}(x,y)=\|x-y\|$ and let $\Phi: \mathbb R^D \to M$ denote the orthogonal projection, $\Phi(x) = {\rm argmin}_{p\in M}\|x-p\|$. For any Riemannian manifold an embedding that is even isometric can be found for $D$ sufficiently large, see \cite{Na56}. Due to an extension of Sard's Theorem by \citep[p.12]{BP03} for a closed manifold, $\Phi$ is univalent up to a set of Lebesgue measure zero. Then the set of extrinsic means is given by the set of images $\Phi\big(\mathbb E(Y)\big)$ where $Y$ denotes $X$ viewed as taking values in $\mathbb R^D$ (cf.  \cite{BP03}).

	\paragraph{Residual means.} In this context, setting $\rho^{(r)}(p,{p'}) = \|d\Phi_{p'}(p-{p'})\|$ ($p,{p'}\in M$) with the derivative $d\Phi_{p'}$ at ${p'}$ yielding the orthogonal projection to the embedded tangent space $T_{p'}\mathbb R^D \to T_{p'}M\subset T_{p'}\mathbb R^D$, call the corresponding mean sets $E^{(\rho^{(r)})}(X)$ and $E^{(\rho^{(r)})}_n(\omega)$, the sets of \emph{residual population means} and  \emph{residual sample means}, respectively. For two-spheres, $\rho^{(r)}(p,{p'})$ has been studied under the name of \emph{crude residuals} by \cite{J88}. On unit-spheres
	\begin{eqnarray}\label{res_dist_sphere:def}
	 \rho^{(r)}(p,{p'}) =\|p - \langle p,{p'}\rangle {p'}\| = \sqrt{1-\langle p,{p'}\rangle^2} = \rho^{(r)}({p'},p)
	\end{eqnarray}
	is a quasi-metric (symmetric, vanishing on the diagonal $p={p'}$ and satisfying the triangle inequality). On general manifolds, however, the \emph{residual distance} $\rho^{(r)}$ may be neither symmetric nor satisfying the triangle inequality. 
	~\\

	Obviously, for $X$ uniformly distributed on a sphere, the entire sphere is identical with the set of intrinsic, extrinsic and residual means: non-unique intrinsic and extrinsic means may depend counterintuitively on the dimension of the ambient space. 
	Here is a simple illustration.

	\begin{Prop}\label{spherical_means:thm} Suppose that $X$ is a random point on a unit sphere $S^{D-1}$ that is uniformly distributed on a unit subsphere $S$. Then 
	 	\begin{enumerate}
	 	 \item[(i)] every point on $S^{D-1}$ is an extrinsic mean and,
	 	 \item[(ii)] if $S$ is a proper subsphere then the set of intrinsic means is equal to the unit subsphere $S'$ orthogonal to $S$.
	 	\end{enumerate}
	\end{Prop}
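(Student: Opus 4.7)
For part (i), the plan is a one-line moment calculation. Expand $\rho^{(e)}(X,\mu)^2 = \|X-\mu\|^2 = 2 - 2\langle X,\mu\rangle$ and take expectations to obtain the extrinsic objective $2 - 2\langle \mathbb{E}X,\mu\rangle$. Writing $S = V\cap S^{D-1}$ with $V = \mathrm{span}\,S$, the uniform distribution on $S$ is invariant under the antipodal map $x\mapsto -x$, so $\mathbb{E}X = -\mathbb{E}X$, forcing $\mathbb{E}X = 0$. The objective is therefore constantly equal to $2$ on $S^{D-1}$, so every point is an extrinsic mean.

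For part (ii), I would use the orthogonal splitting $\mathbb{R}^D = V\oplus V^\perp$ so that $S' = V^\perp\cap S^{D-1}$, and parameterize an arbitrary $\mu \in S^{D-1}$ as $\mu = \alpha\hat\mu_V + \beta\hat\mu_{V^\perp}$ with unit vectors $\hat\mu_V\in V$, $\hat\mu_{V^\perp}\in V^\perp$ and $\alpha^2+\beta^2=1$. Since $X\in V$, the inner product reduces to $\langle X,\mu\rangle = \alpha\xi$ for $\xi := \langle X,\hat\mu_V\rangle$, and the distribution of $\xi$ is symmetric about $0$ by invariance of $X$ under $x\mapsto -x$ on $V$. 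The crux of the argument is the clean identity, obtained by applying $\arccos(-t) = \pi - \arccos(t)$ and averaging over the symmetry of $\xi$,
$$\mathbb{E}\bigl(\arccos(\alpha\xi)^2\bigr) = \mathbb{E}\bigl(\arcsin(\alpha\xi)^2\bigr) + \frac{\pi^2}{4},$$
which exhibits the intrinsic objective as bounded below by $\pi^2/4$, with equality iff $\alpha\xi = 0$ almost surely.

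The final step is to identify the equality set. Because $\xi$ is not a.s.\ zero for any unit $\hat\mu_V\in V$ (indeed $\xi = \pm 1$ when $\dim V = 1$, and $\mathbb{E}(\xi^2) = 1/\dim V > 0$ otherwise), equality forces $\alpha = 0$, which is precisely the condition $\mu\in V^\perp\cap S^{D-1} = S'$. Conversely every $\mu\in S'$ yields $\alpha = 0$ and attains the lower bound $\pi^2/4$. I expect the main (admittedly modest) obstacle to be producing the symmetrization identity cleanly; once it is in hand, the characterization of the equality case follows immediately from the strict positivity of $\mathbb{E}(\xi^2)$ whenever $\hat\mu_V$ is a unit vector of $V$.
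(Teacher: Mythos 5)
Your proof is correct and follows essentially the same route as the paper: both parts rest on the antipodal symmetry of the uniform distribution on $S$, which in (i) makes the extrinsic objective constant (your $\E X=0$ computation is just the paper's pairing of $y$ with $-y$), and in (ii) yields the lower bound $\pi^2/4$ with equality exactly when $\langle X,\mu\rangle=0$ a.s., i.e.\ $\mu\in S'$. Your symmetrization identity via $\arccos(t)=\pi/2-\arcsin(t)$ is the same estimate the paper obtains from $\rho^{(i)}(x,y)^2+\rho^{(i)}(x,-y)^2\geq\pi^2/2$, with the equality case spelled out a bit more explicitly.
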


	\begin{proof} The first assertion is a consequence of
	$ \rho^{(e)}(x,y)^2 + \rho^{(e)}(x,-y)^2 = 4$ for every $x,y\in S^{D-1}$.
	The second assertion follows from
	$$ \rho^{(i)}(x,y)^2 + \rho^{(i)}(x,-y)^2 = \rho^{(i)}(x,y)^2 + \big(\pi -\rho^{(i)}(x,y)\big)^2~\geq~ \frac{\pi^2}{2} \mbox{ for every }x,y\in S^{D-1}\,$$
	for the intrinsic distance $\rho^{(i)}(x,y) = 2\arcsin(\|x-y\|/2)$ with equality if and only if $x$ is orthogonal to $y$.
	\end{proof}


	\begin{Prop}\label{res_means_sphere:prop} If a random point $X$ on a unit sphere is a.s. contained in a unit subsphere $S$ then $S$ contains every residual mean as well as every unique intrinsic or extrinsic mean .\end{Prop}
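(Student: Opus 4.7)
The plan is to exploit the linear structure of the ambient space. Let $V\subseteq\mathbb{R}^D$ be the linear subspace with $S=S^{D-1}\cap V$, so that $X\in V$ almost surely, and write $V^\perp$ for its orthogonal complement.

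For residual means I would use formula (\ref{res_dist_sphere:def}) to rewrite the Fr\'echet functional as
$$\mathbb{E}\bigl[\rho^{(r)}(X,\mu)^{2}\bigr]=1-\mu^{T}A\mu,\qquad A:=\mathbb{E}[XX^{T}].$$
Minimization over $\mu\in S^{D-1}$ is then equivalent to maximizing a quadratic form under a unit-norm constraint, whose solutions are the unit top eigenvectors of $A$. Since $X\in V$ a.s., $A$ is positive semidefinite, annihilates $V^\perp$, and has $\tr(A)=\mathbb{E}\|X\|^{2}=1$. Hence its top eigenvalue is strictly positive and every unit-norm maximizer lies in $V$, so every residual mean lies in $S$. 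Notably, this argument requires no uniqueness hypothesis.

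For the extrinsic case I would invoke the description of the extrinsic mean set as $\Phi(\mathbb{E}[X])$. Since $\mathbb{E}[X]\in V$, uniqueness forces $\mathbb{E}[X]\neq 0$ (otherwise $\|\mathbb{E}[X]-\mu\|^{2}$ is constant on $S^{D-1}$ and every point is a minimizer), in which case the unique extrinsic mean equals $\mathbb{E}[X]/\|\mathbb{E}[X]\|\in V\cap S^{D-1}=S$.

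For intrinsic means I would use a reflection symmetry. Let $R\in O(D)$ be the orthogonal reflection that is the identity on $V$ and negation on $V^\perp$; it is an isometry of $S^{D-1}$ for $\rho^{(i)}$, and $R(X)=X$ a.s. Invariance of the Fr\'echet functional under $R$ then yields that $R\mu$ is an intrinsic mean whenever $\mu$ is, and uniqueness forces $R\mu=\mu$, so $\mu\in V\cap S^{D-1}=S$. The only mildly subtle point is ensuring that the maximum of $\mu^{T}A\mu$ is not attained on $V^\perp$ in the residual argument, which is handled by the trace identity forcing a strictly positive eigenvalue on $V$.
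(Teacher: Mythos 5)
Your argument is correct, and it partly differs from the paper's. The paper writes a candidate mean as $x=v+\nu$ with $v$ in the linear span of $S$ and $\nu$ normal to it, and then (a) for residual means compares $x$ directly with the renormalized point $v/\|v\|$ via the pointwise inequality $1-\langle X,v+\nu\rangle^2=1-\langle X,v\rangle^2\geq 1-\langle X,v\rangle^2/\|v\|^2$ (equality iff $\nu=0$), using (\ref{res_dist_sphere:def}); and (b) for intrinsic and extrinsic means uses the single identity $\|X-(v+\nu)\|=\|X-(v-\nu)\|$ a.s.\ together with uniqueness. Your intrinsic argument is that same symmetry, merely phrased as invariance under the reflection $R$ fixing $V$; your extrinsic argument instead invokes the characterization of the extrinsic mean set as $\Phi\big(\mathbb E (X)\big)$, which is also fine and is exactly the description recalled in Section \ref{Frechet-rho-means:scn}. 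The genuinely different piece is the residual one: you pass to the spectral characterization of residual means as top unit eigenvectors of $A=\mathbb E(XX^T)$ (a fact the paper itself uses later, in Section \ref{1:3:scn}), note $V^\perp\subset\ker A$ and $\tr A=1$, and conclude the top eigenspace lies in $V$. This buys two small things: no uniqueness is needed for residual means (matching the statement), and the degenerate situations $v=0$ or $\langle X,v\rangle=0$ a.s., which the paper's renormalization $v/\|v\|$ silently presupposes away, are excluded cleanly by the trace identity. Both routes are elementary and equally rigorous; the paper's is shorter, yours is slightly more robust in the residual case.
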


	\begin{proof} Suppose that $x = v+\nu$ is a mean of $X$ with $v/\|v\| \in S$ and $\nu\in S^{D-1}$ normal to $S$. Since $1-\langle X,v+\nu\rangle^2 = 1-\langle X,v\rangle^2 \geq 1 - \langle X,v\rangle^2/\|v\|^2$ a.s. with equality if and only if $\nu=0$, the assertion for residual means follows at once from (\ref{res_dist_sphere:def}). For intrinsic and extrinsic means we argue with
	$\|X-(v+\nu)\| = \|X-(v-\nu)\|$ a.s. yielding $\nu=0$ in case of uniqueness.
	\end{proof}
%
%

	Let us now incorporate more of the structure common to shape spaces. The following definition is due to \citep[p. 249]{KBCL99}. We additionally require that the group acting be compact in order to ensure that the quotient be Hausdorff. More generally, one could assume a non-compact group acting \emph{properly}, cf. \cite{P61}.

	\begin{Def}\label{shape:def}
	A complete connected finite-dimensional Riemannian manifold $M$ with geodesic distance $d_M$ on which a compact Lie group $G$ acts isometrically from the left is called a \emph{pre-shape space}. Moreover the canonical quotient 
	$$\pi: M \to Q:= M/G = \{[p]: p \in M\}\mbox{  with the \emph{orbit} } [p] = \{gp: g\in G\}\,,$$ 
	is called a \emph{shape space}.
	\end{Def}

	 As a consequence of the isometric action we have that $d_M(gp,{p'}) = d(p,g^{-1}{p'})$ for all $p,{p'}\in M$, $g\in G$. For $p,{p'}\in M$ we say that $p$ is in \emph{optimal position} to ${p'}$ if $d_M(p,{p'}) =\min_{g\in G}d_M(gp,{p'})$, the minimum is attained since $G$ is compact. As is well known (e.g. \citep[p. 179]{Bre72}) there is an open and dense submanifold $M^*$ of $M$ such that the canonical quotient $Q^* = M^*/G$ restricted to $M^*$ carries a natural manifold structure also being open and dense in $Q$. Elements in $M^*$ and $Q^*$, respectively, are called \emph{regular}, the complementary elements are \emph{singular}; $Q^*$ is the \emph{manifold part} of $Q$. 

	\paragraph{Intrinsic means on shape spaces.}
	The canonical quotient distance 
	$$d_Q([p],[{p'}]) :=\min_{g\in G}d_M(gp,{p'}) = \min_{g,h\in G}d_M(gp,h{p'})$$
	is called \emph{intrinsic distance} and the corresponding $d_Q$-Fr\'echet mean sets are called \emph{intrinsic means}. Note that the intrinsic distance on $Q^*$ is equal to the canonical geodesic distance. 
	
	\paragraph{Ziezold and Procrustean means.}
	Now, assume that we have an embedding with orthogonal projection $\Phi: \mathbb R^D \to M\subset\mathbb R^D$ as above. If the action of $G$ is isometric w.r.t. the extrinsic metric, i.e. if $\|gp-g{p'}\| = \|p-{p'}\|$ for all $p,{p'}\in M$ and $g\in G$ then call 
	\begin{eqnarray*}
	\rho^{(z)}_Q([p],[{p'}])&:=&\min_{g\in G}\|gp-{p'}\|\mbox{~~~and~~~}\\ 
	\rho^{(p)}_Q([p],[{p'}])&:=&\min_{\footnotesize\begin{array}{l}g\in G,~ gp \mbox{ in}\\ \mbox{opt. pos. to }{p'}\end{array}}\|d\Phi_{p'}(gp-{p'})\|	 	
	\end{eqnarray*}
 	the \emph{Ziezold distance} and the \emph{Procrustean distance} on $Q$, respectively. Call the corresponding population and sample Fr\'echet $\rho^{(z)}_Q$-means, respectively, the sets of population and sample  \emph{Ziezold means}, respectively. Similarly, call the corresponding population and sample Fr\'echet $\rho^{(p)}_Q$-means, respectively, the sets of population and sample  \emph{Procrustean means}, respectively.

	We say that \emph{optimal positioning is 
	invariant} if for all $p,{p'} \in M$ and $g^*\in G$,
	$$d_M(g^*p,{p'}) = \min_{g\in G}d_M(gp,{p'}) \Leftrightarrow \|g^*p-{p'}\| = \min_{g\in G}\|gp-{p'}\|\,.$$

	\begin{Rm}\label{kendall_book:rm}
	Indeed for $Q=\Sigma_m^k, R\Sigma_m^k$,  optimal positioning is invariant (cf. \citep[p. 206]{KBCL99}), Procrustean means coincide with means of \emph{general Procrustes analysis} introduced by \cite{Gow} and Ziezold means coincide with means as introduced by \cite{Z94} for $\Sigma_2^k$. Moreover for $\Sigma_2^k$, Procrustean means agree with extrinsic means w.r.t. the Veronese-Whitney embedding, cf. \cite{BP03} and Section \ref{ext-means:scn}. 
	\end{Rm}

	Procrustean means on $\Sigma_m^k$ are also called \emph{full Procrustes means} in the literature to distinguish them from \emph{partial Procrustes means} on the \emph{size-and-shape spaces} not further discussed here (e.g. \cite{DM98}). We only note that partial Procrustes means are identical to the respective intrinsic, Procrustean and Ziezold means which on size-and-shape spaces, all agree with one another.

\section{Horizontal Lifting and Manifold Stability}\label{convex:scn}
	In this section we derive a measurable horizontal lifting and the stability theorem underlying Theorem \ref{Kendall_mean_dim}. To this end we first recall how a shape space is made up from manifold strata of varying dimensions. Unless otherwise referenced, we use basic terminology that can be found in any standard textbook on differential geometry, e.g. \cite{KN63,KN69}. For the results derived here we assume that the shape space is a quotient modulo a compact group. We note that 
	these results remain valid in the more general case of a non-compact group acting \emph{properly}, cf. \cite{P61}. 

\subsection{Preliminaries}\label{prelims:scn}
	Assume that $Q=M/G$ is a shape space as in Definition \ref{shape:def}. 
	$T_pM$ is the tangent space of $M$ at $p\in M$ and $\exp_p$ denotes the \emph{Riemannian exponential} at $p$. Recall that on a Riemannian manifold the \emph{cut locus} $C(p)$ of $p$ comprises all points $q$ such that the extension of a length  minimizing geodesic joining $p$ with $q$ is no longer minimizing beyond $q$. In consequence, on a complete and connected manifold $M$ we have for every $p'\in M$ that there is $v' \in T_pM$ such that $p'= \exp_p v'$ while $v' = \exp^{-1}_pp'$ of minimal modulus is uniquely determined as long as $p'\in M \setminus C(p)$.  It is well known that the cut locus has measure zero in the sense that its image in any local chart has Lebesgue measure zero. From now on we call the cut locus the \emph{manifold cut locus} in order to distinguish it from the \emph{quotient cut locus} $C^{quot}(q)$ of $q\in Q$ which we define as
	$C^{quot}(q) := \{[p']: p'\in C(p) \mbox{ is in optimal position to some } p\in q\}\,.$
	Due to the isometric action we have for any $p\in q$ that 
	\begin{eqnarray}\label{quot-cut-loc:eq}
	C^{quot}(q) &=& \{[p']: p'\in C(p) \mbox{ is in optimal position to } p\}\subset \pi\big(C(p)\big). 	 
	\end{eqnarray}
	The following Lemma teaches that in general, the projection of the manifold cut locus, the manifold cut locus of the manifold part $Q^*$ and the quotient cut locus are different. In particular, quotient cut loci are void in the special case of Kendall's shape spaces.

	\begin{Lem}\label{cut_locus:lem}$C(q)\neq \emptyset$ for every $q\in \Sigma_2^k$ while $C^{quot}(q)=\emptyset\mbox{ for all }q\in \Sigma_m^k\,.$ Similarly $C^{quot}(q)=\emptyset\mbox{ for all }q\in R\Sigma_m^k\,.$\end{Lem}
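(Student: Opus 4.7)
The lemma contains three assertions. For the nonemptiness of the manifold cut locus in $\Sigma_2^k$, my plan is to invoke the standard identification $\Sigma_2^k \cong \mathbb{CP}^{k-2}$ equipped with (a rescaling of) the Fubini--Study metric, in which the cut locus of any point is the complex hyperplane $\mathbb{CP}^{k-3}$ at distance $\pi/2$, nonempty for $k\ge 3$. A direct verification would proceed by lifting horizontally to $S_2^k \subset \mathbb{C}^{k-1}$: a unit-speed great circle $\gamma(t) = \cos(t)\,p + \sin(t)\,v$ with $v$ horizontal (equivalently $\langle v, p\rangle_{\mathbb{C}} = 0$) descends to a closed geodesic of period $\pi$ in $\Sigma_2^k$, while the companion horizontal geodesic $\tilde\gamma(t)=\cos(t)\,p+\sin(t)\,(iv)$ also reaches a representative of $[v]=[iv]$ at time $t=\pi/2$, producing two distinct length-minimizers and pinpointing the cut locus of $[p]$ as the projection of the $\mathbb{C}$-orthogonal complement of $p$.

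For the two quotient-cut-locus assertions, the plan is to combine the containment (\ref{quot-cut-loc:eq}) with the elementary fact that on the round pre-shape sphere $S_m^k$ the manifold cut locus of $p$ is just the antipode $\{-p\}$. Hence $C^{quot}(q) \subseteq \pi(\{-p\}) = \{[-p]\}$, and it suffices to show that $-p$ is never in optimal position to $p$, for any $p \in q$, under the action of $G = SO(m)$ or $G = O(m)$ respectively. Unwinding the definition, and using $g(-p) = -gp$ together with the strict monotonicity of $\arccos$, the condition ``$-p$ is in optimal position to $p$'' is equivalent to $d(g(-p), p) \ge \pi = d(-p, p)$ for every $g \in G$, which in turn is equivalent to $gp = p$ for every $g \in G$, i.e., to $p$ being a global fixed point of the group action.

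The remaining structural input is that for $m \ge 2$ no preshape $p \in S_m^k$ can be pointwise fixed by all of $SO(m)$: since for $m \ge 2$ the only vector in $\mathbb{R}^m$ fixed by the full rotation group is the origin, the column-wise condition $gp = p$ for all $g$ would force $p = 0$, contradicting $\|p\|=1$. The identical argument covers $O(m)\supset SO(m)$, settling the reflection shape space. I expect the only real subtlety to be bookkeeping: keeping straight which argument of the non-symmetric ``optimal position'' relation must play the role of the antipode in (\ref{quot-cut-loc:eq}), and using the isometry of the $G$-action to reduce the quotient cut locus of $q$ from an a priori unwieldy subset of $Q$ to the projection of the manifold cut locus of a single representative $p \in q$. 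Beyond this, the computation is a routine manipulation of distances on the sphere.
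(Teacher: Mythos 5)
Your argument is correct, and for the two quotient assertions it rests on the same ingredients as the paper's proof: $C(p)=\{-p\}$ on the pre-shape sphere, the reduction (\ref{quot-cut-loc:eq}) to a single representative, and a check that $-p$ is never in optimal position to $p$. Where you differ is in how that check is organized: the paper splits into cases ($[p]=[-p]$ for even $m$, and for odd $m$ with $p$ non-regular, versus ``not in optimal position'' for odd $m$ with $p$ regular), while you observe once and for all that, since $d_M(-p,p)=\pi$ is the spherical diameter, optimal position of $-p$ to $p$ forces $g(-p)=-p$, i.e.\ $gp=p$ for every $g$, which no unit-norm pre-shape satisfies for $G=SO(m)$, $m\geq 2$, or $G=O(m)$. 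This uniform fixed-point criterion is a genuine (and arguably cleaner) simplification, avoiding the parity/regularity bookkeeping; the paper's case analysis, in turn, makes explicit the geometric fact $[p]=[-p]$ that is reused elsewhere (e.g.\ in Remark \ref{Ziez-half-sphere:rm} and the discussion of maximal shape distance). For the first assertion the routes diverge more: the paper disposes of it in one line by compactness of $\Sigma_2^k$ (every point of a compact manifold has nonempty cut locus), whereas you exhibit the cut locus concretely in $\mathbb CP^{k-2}$ via two distinct horizontal minimizers $\gamma$ and $\tilde\gamma$ reaching $[v]=[iv]$ at distance $\pi/2$; your computation is sound (note it needs $k\geq 3$, which holds since $k>m=2$) and buys an explicit description of the cut locus, at the cost of more work than the soft compactness argument.
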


	\begin{proof} The first assertion follows from the fact that $\Sigma_2^k$ is a compact manifold. For the second assertion consider $[p]\in \Sigma_m^k$. Since $C(p) = \{-p\}$ for $p\in S_m^k$ and $[p]=[-p]$ for even $m$ as well as for odd $m$ if $p$ is not regular, and, since $p,-p$ are not in optimal position for odd $m$ if $p$ is regular, we have that $C^{quot}([p])=\emptyset$. The third assertion follows from the fact that $[p]=[-p]$ for all $[p] \in R\Sigma_m^k$.
	\end{proof}

	Next we collect consequences of the isometric Lie group action, see \cite{Bre72}. 

	\begin{enumerate}
	 \item[(A)] With the \emph{isotropy group} $I_p=\{g\in G: gp =p\}$ for $p\in M$, every orbit carries the natural structure of a coset space $[p] \cong G/I_p$. Moreover, $p' \in M$ is of \emph{orbit type} $(G/I_p)$ if $I_{p'} = gI_pg^{-1}=I_{gp}$ for a suitable $g\in G$. If $I_p \subset I_{gp'}$ for suitable $g\in G$ then $p'$ is of \emph{lower orbit type} than $p$ and $p$ is of \emph{higher orbit type} than $p'$.
	\item[(B)] The pre-shapes of equal orbit type 
	$M^{(I_p)} := \{p'\in M:\mbox{ $p'$ is of orbit type }\linebreak  (G/I_p)\}$
	and the corresponding shapes $Q^{(I_p)}:=\{[p']:p'\in M^{(I_p)}\}$ are manifolds in $M$ and $Q$, respectively. Moreover, for $q\in Q$ denote by $Q^{(q)}$ the shapes of higher orbit type.

	\item[(C)]The orthogonal complement $H_pM$  in $T_pM$ of the tangent space $T_p[p]$ along the orbit is called the \emph{horizontal space}: $T_pM = T_p[p]\oplus H_pM$.
	\item[(D)] The \emph{Slice Theorem} 
	states that every $p\in M$ has a \emph{tubular neighborhood} $[p]\subset U \subset M$ such that with a suitable subset $D\subset H_pM$ the \emph{twisted product} $\exp_p D\times_{I_p} G$ is diffeomorphic with $U$. Here, the twisted product is the natural topological quotient of the product space $\exp_p D\times G$ modulo the equivalence 
	$$ (\exp_pv,g) \sim_{I_p} (\exp_pv', g')~\Leftrightarrow \exists h \in I_p\mbox{ such that } v'=dh v,~g'=gh^{-1}\,.$$
	We then say that the tubular neighborhood $U$ \emph{admits a slice} $\exp_p D$ \emph{via} 
	$U\cong \exp_p D\times_{I_p}G$.
	\item[(E)] Every $p\in M$ has a tubular neighborhood $U$ of $p$ that admits a slice $\exp_p D$ such
	that every $p' \in \exp_pD$ is in optimal position to $p$. Moreover, for any tubular neighborhood $U$ admitting a slice  $\exp_p D$, all points $p'\in U$ are of orbit type higher than or equal to the orbit type of $p$ and only finitely many orbit types occur in $U$. If $p$ is regular, i.e. of maximal orbit type, then the product is trivial: $\exp_p D \times_{I_p} G \cong \exp_p D \times G/I_p$. 
	\end{enumerate}

	Finally let us extend the following uniqueness property for the intrinsic distance to the Ziezold distance.
	The differential of the mapping $f^{p'}_{int} : M\setminus C(p') \to [0,\infty)$ defined by $f^{p'}_{int}(p) = d_M(p,\exp_pp')^2$ is given by $d f^{p'}_{int}(p) = -2v$ with $v=\exp_p^{-1}p'$ (cf. \citep[p. 110]{KN69}, \cite{Ka77}). Hence, we have for $p_1,p_2 \in M\setminus C(p)$ that 
	\begin{eqnarray}\label{intr-inj:cond}
	d f^{p_1}_{int}(p)&=&  df^{p_2}_{int}(p) ~\Leftrightarrow~ p_1=p_2\,.
	\end{eqnarray}
	In view of the extrinsic distance let $f^{p'}_{ext} : M\setminus C(p')  \to [0,\infty)$ be defined by $f^{p'}_{ext}(p)=\|p-p'\|^2=\|p-\exp_p(\exp_p^{-1}p')\|^2$. Mimicking (\ref{intr-inj:cond}) introduce the following condition
	\begin{eqnarray}\label{Ziez-inj:cond}
	 d f^{p_1}_{ext}(p) = d f^{p_2}_{ext}(p) &\Leftrightarrow& p_1=p_2
	\end{eqnarray}
	 for $p_1,p_2 \in M\setminus C(p)$. 
	\begin{Rm}\label{half-sphere:rm}
	 (\ref{Ziez-inj:cond}) is valid on closed half spheres since on the unit sphere
	$$ df_{ext}^{p'}(p) = -2\,\frac{v}{\|v\|}\,\sin(\|v\|)\mbox{ with } v=\exp_p^{-1}p'\,.$$
	\end{Rm}

\subsection{A Measurable Horizontal Lift}

	In order to establish the stability of means in Theorem \ref{population_mean_iso_grp:thm} in the following Section \ref{population_convex:scn}, here we lift a random shape $X$ from $Q$ horizontally to a random pre-shape $Y$ on $M$. In order to do so we need to guarantee the measurability of the horizontal lift in Theorem \ref{glob_hor_measurable_lift:thm} below, the proof of which can be found in the appendix. 

	Before continuing, let us consider a simple example for illustration. Suppose that $G=S^1\subset \mathbb C$ acts  on $M=\mathbb C$ by complex scalar  multiplication. Then $[0,\infty) \cong Q=M/G$ having the two orbit types $(S^1/I_0) = \{1\}$ and $(S^1/I_1) = S^1$ gives rise to $Q^{(I_0)} = \{0\}$ and $Q^{(I_1)} = (0,\infty)$. Obviously, $M$ admits a global slice via the polar decomposition $[0,\infty) \times_{S^1} S^1 =\{0\} \cup \big((0,\infty)\times S^1\big)\cong M$ about $0\in M$ (the Riemannian exponential is the identity if $T_0\mathbb C$ is identified with $\mathbb C$). Here, $X$ can be identified with its horizontal lift $Y$ to the global slice $[0,\infty) \subset M$. If, say, $X$ is uniformly distributed on $[1,2]$ then $\mathbb P\{X\in Q^{(I_1)} \}>0$. In this case the stability theorem states the obvious fact that $0 \in  Q^{(I_0)}$ cannot be a mean of $X$. 

	\begin{Def}\label{hor-lift:def}
	Call a measurable subset $L \subset M$ a \emph{measurable horizontal lift} of a measurable subset $R$ of $M/G$ \emph{in optimal position to $p \in M$} if
	\begin{enumerate}
	 \item the canonical projection $L \to R\subset M/G$ is surjective,
	\item every $p' \in L$ is in optimal position to $p\in L$,
	\item every orbit $[p']$ of $p'\in L$ meets $L$ once.
	\end{enumerate}
	\end{Def}

	\begin{Th}\label{glob_hor_measurable_lift:thm}  Let $p\in [p]\in Q$ and $A\subset Q$ countable. Then there is a measurable horizontal lift $L$ of $Q^{([p])} \cup A$ in optimal position to $p$.
	\end{Th}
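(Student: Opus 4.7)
The plan is to construct $L$ as a countable disjoint union of local horizontal lifts produced via the Slice Theorem (D)--(E), then adjoin one representative for each orbit in $A \setminus Q^{([p])}$.

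First I would handle a neighborhood of $[p]$ itself: property (E) applied to $p$ yields a tubular neighborhood $U_0$ of $[p]$ admitting a slice $\exp_p D_0$ all of whose points are in optimal position to $p$, with $\pi(U_0) \subseteq Q^{([p])}$ because every point in $U_0$ has orbit type at least as high as $p$. Since each $G$-orbit in $U_0$ meets $\exp_p D_0$ in precisely one $I_p$-orbit (by the identification $U_0 \cong \exp_p D_0 \times_{I_p} G$), it suffices to choose one point per $I_p$-orbit in $D_0$. As $I_p$ is compact and acts orthogonally on $H_p M \supset D_0$, a standard Borel selection result (e.g.\ Kuratowski--Ryll-Nardzewski) gives a Borel fundamental domain, producing a measurable local lift $L_0 \subset \exp_p D_0$ of $\pi(U_0)$.

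For each remaining $q \in Q^{([p])}$ I pick $p_q \in q$ in optimal position to $p$, which exists because the continuous function $g \mapsto d_M(g\tilde p, p)$ attains its minimum on the compact group $G$ for any $\tilde p \in q$. Property (E) then gives a tubular neighborhood $U_q \cong \exp_{p_q} D_q \times_{I_{p_q}} G$ with $\pi(U_q) \subseteq Q^{([p_q])} \subseteq Q^{([p])}$. Within this slice, each $G$-orbit is traced out by an $I_{p_q}$-orbit; inside each such $I_{p_q}$-orbit I would select the element that minimizes $d_M(\cdot, p)$, which can be done Borel measurably by applying the selection theorem to the closed-valued multifunction $q' \mapsto \mathrm{argmin}_{h \in I_{p_q}}\, d_M(h\sigma(q'), p)$ for a Borel section $\sigma$ of the slice. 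Shrinking $D_q$ if necessary, continuity of $(g, p') \mapsto d_M(gp', p)$ together with compactness of $G$ and the fact that $p_q$ is already in optimal position to $p$ force the global minimum over $G$ to be attained inside $U_q$, hence inside the slice; therefore the selected representative is in optimal position to $p$ with respect to the entire $G$-orbit.

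Third, second countability of $M$ and hence of $Q$ extracts a countable subcover $\{\pi(U_{q_n})\}_{n \in \mathbb N}$ of $Q^{([p])}$. Setting $V_n := (\pi(U_{q_n}) \cap Q^{([p])}) \setminus \bigcup_{k<n} V_k$ disjointifies the cover, and lifting each $V_n$ via the corresponding local section yields disjoint Borel sets $L_n$; their union $L_{\mathrm{main}} := \bigsqcup_n L_n$ is a measurable horizontal lift of $Q^{([p])}$ in optimal position to $p$. Finally, for each of the at most countably many orbits in $A \setminus Q^{([p])}$ I adjoin one explicitly chosen optimal-position representative (possible by compactness of $G$), yielding $L$ with all three properties of Definition \ref{hor-lift:def}.

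The main obstacle is twofold: ensuring that the representative chosen within the $I_{p_q}$-orbit in the slice is genuinely in optimal position to $p$ with respect to the full $G$-orbit — handled by shrinking the slice and invoking continuity together with compactness of $G$ — and performing all selections Borel measurably, which reduces to standard Borel selection for closed-valued multifunctions and for compact group actions on Polish spaces. The geometric content lies entirely in the Slice Theorem; the descriptive-set-theoretic content is routine but must be set up carefully so that the final glued set lies in the Borel $\sigma$-algebra of $M$.
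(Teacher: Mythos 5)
Your overall scheme (countably many slice neighborhoods, disjointification, piecewise lifting, adjoining countably many representatives for $A$) parallels the paper, and the descriptive-set-theoretic part could be made to work; the genuine gap is the geometric step in which you claim that, after shrinking $D_q$, for every $p'$ in the slice about $p_q$ the minimizer of $d_M(\cdot,p)$ over the $I_{p_q}$-orbit of $p'$ is automatically a minimizer over the whole $G$-orbit (``hence inside the slice''). Property (E) only guarantees that slice points are optimally positioned to the slice center $p_q$, not to the remote reference point $p$, and your upgrade fails whenever the orbit of $p_q$ has optimal representatives w.r.t.\ $p$ outside the $I_{p_q}$-orbit of $p_q$. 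Concretely, take $M=S^2$, $G=\{e,\sigma\}$ with $\sigma$ the rotation by $\pi$ about the $z$-axis, $p=(1,0,0)$ and $p_q=(0,1,0)$: then $p_q$ is in optimal position to $p$ (both points of its orbit lie at distance $\pi/2$) and $I_{p_q}$ is trivial, yet every neighborhood of $p_q$ contains points $p'$ with slightly negative first coordinate whose unique optimal representative is $\sigma p'$, near $(0,-1,0)$ and far outside any small slice. No shrinking helps: continuity and compactness give only upper semicontinuity of the set of optimal group elements, and that set at $p_q$ already contains $\sigma$ as well as $e$. Hence the $L$ you build contains points violating condition 2 of Definition \ref{hor-lift:def}; note that such orbits are not excluded by the theorem (non-uniqueness of optimal positioning is not the same as membership in the quotient cut locus).

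This is exactly the point the paper organizes its proof around: Lemma \ref{hor_measurable_lift:lem} constructs the lift entirely inside a slice centered at $p$ itself, where (E) makes optimal position to $p$ automatic, by an explicit construction (stratify a sphere in $\exp_pD$ by $I_p$-orbit type, cover each stratum by countably many trivializing neighborhoods for the $I_p$-action, disjointify, and extend radially along horizontal geodesics, whose isotropy is constant away from the initial point); the global theorem then reaches remote orbits via neighborhoods of minimizing horizontal geodesics emanating from $p$, precisely so that optimal position to $p$ stays under control, rather than via slices centered at remote points $p_q$. To repair your argument you would either follow that route, or abandon the within-$I_{p_q}$-orbit selection and instead select measurably from the full set of optimal-to-$p$ representatives in each orbit (a closed-valued measurable multifunction), then argue that the image of this injective Borel section is Borel (Lusin--Suslin); note also that Kuratowski--Ryll-Nardzewski alone gives a selection, not yet a Borel set meeting each orbit exactly once, which is what the paper's explicit disjointified construction secures.
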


	\begin{Th}\label{pop_int_mean_bottom_int_mean_top:Th} Assume that $X$ is a random shape on $Q$ and 
	that there are $p\in M$  and $A\subset Q$ countable
	such that $X$ is supported by $\big(Q^{([p])} \cup A\big)\setminus C^{quot}([p])$. With a measurable horizontal lift $L$ of $\big(Q^{([p])}\cup A\big)\setminus C^{quot}([p])$ in optimal position to $p$ define the random element $Y$ on $L\subset M$ by $\pi\circ Y = X$. 
	\begin{enumerate}\item[(i)]
	If $[p]$ is an intrinsic mean of $X$ on $Q$, then $p$ is an intrinsic mean of $Y$ on $M$ and
	$$\mathbb E(\exp^{-1}_pY)=0\,.$$
	\item[(ii)]
	If $[p]$ is a Ziezold mean of $X$ on $Q$ and optimal positioning is invariant, then $p$ is an extrinsic mean of $Y$ on $M$ and
	$$\mathbb E\big( d f_{ext}^{Y}(p)\big)=0\,.$$
	\item[(iii)] If $[p]$ is a Procrustean mean of $X$ on $Q$ and optimal positioning is invariant, then $p$ is a residual mean of $Y$ on $M$.
	\end{enumerate}
	\end{Th}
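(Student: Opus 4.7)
The plan is to exploit, uniformly across the three parts, the fact that the horizontal lift $L$ places $Y$ in optimal position to $p$ almost surely, so that the ambient manifold distance from $Y$ to $p$ \emph{coincides} with the corresponding quotient distance from $X$ to $[p]$, whereas for any other reference point $q\in M$ the ambient distance only \emph{exceeds} the quotient distance from $X$ to $[q]$. Squaring and integrating, and invoking the mean-property of $[p]$ on $Q$, then identifies $p$ as a mean of $Y$ on $M$; a second step, available in (i) and (ii), extracts an integral identity via the appropriate gradient formula.

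For (i), the quotient construction immediately gives $d_M(Y,q)\geq d_Q(X,[q])$, with equality at $q=p$ by optimal positioning. Hence
\begin{equation*}
 \mathbb E\bigl(d_M(Y,q)^2\bigr)\geq \mathbb E\bigl(d_Q(X,[q])^2\bigr)\geq \mathbb E\bigl(d_Q(X,[p])^2\bigr) = \mathbb E\bigl(d_M(Y,p)^2\bigr),
\end{equation*}
so $p$ is an intrinsic mean of $Y$ on $M$. Because $X$ is supported off $C^{quot}([p])$ and $Y$ sits in optimal position to $p$, (\ref{quot-cut-loc:eq}) forces $Y\notin C(p)$ a.s., so the gradient identity $df^{p'}_{int}(p)=-2\exp^{-1}_p p'$ applies. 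Differentiating $q\mapsto \mathbb E(d_M(Y,q)^2)$ under the integral at its critical point $p$ then delivers $\mathbb E(\exp^{-1}_p Y)=0$. For (ii), invariance of optimal positioning lifts the opt.\ pos.\ of $Y$ from intrinsic to extrinsic, giving $\|Y-p\|=\min_g\|gY-p\|=\rho^{(z)}_Q(X,[p])$ and $\|Y-q\|\geq \rho^{(z)}_Q(X,[q])$ for every $q$; the same squeeze shows $p$ is an extrinsic mean of $Y$, and Remark~\ref{half-sphere:rm} together with (\ref{Ziez-inj:cond}) justifies differentiating under the integral to conclude $\mathbb E\bigl(df^Y_{ext}(p)\bigr)=0$.

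For (iii) the skeleton is the same but hinges on the pointwise inequality $\rho^{(r)}(Y,q)\geq \rho^{(p)}_Q(X,[q])$. At $q=p$ invariance of opt.\ pos.\ places $g=e$ into the admissible set of the defining minimum of $\rho^{(p)}_Q$, so $\rho^{(r)}(Y,p)=\|d\Phi_p(Y-p)\|=\rho^{(p)}_Q(X,[p])$. For general $q$ one argues from invariance of opt.\ pos.\ and the definition of $\rho^{(p)}_Q$ as a minimum that any representative lying in $L$ gives a residual distance at least as large as the opt.\ pos.\ representative realizing $\rho^{(p)}_Q$. The Procrustean mean property of $[p]$ then yields $\mathbb E(\rho^{(r)}(Y,q)^2)\geq \mathbb E(\rho^{(p)}_Q(X,[q])^2)\geq \mathbb E(\rho^{(p)}_Q(X,[p])^2)=\mathbb E(\rho^{(r)}(Y,p)^2)$, so $p$ is a residual mean of $Y$.

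The main obstacle is precisely this pointwise comparison in (iii). Unlike the intrinsic and Ziezold distances, $\rho^{(p)}_Q$ is defined with an optimal positioning constraint \emph{inside} the minimum, so one cannot simply invoke the quotient construction; instead, one must use the invariance hypothesis to match the intrinsic and extrinsic notions of opt.\ pos.\ and exploit the resulting geometry of the projection $d\Phi$. The quasi-metric nature of $\rho^{(r)}$ is also the reason why no clean integral identity analogous to those obtained in (i) and (ii) is available in (iii), which foreshadows the paper's later counterexample to manifold stability for 3D full Procrustes means. Measurability of $Y$ and interchange of differentiation and expectation are standard given Theorem~\ref{glob_hor_measurable_lift:thm} and compactness of the ambient sphere.
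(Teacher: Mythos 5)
Your parts (i) and (ii) follow essentially the paper's own route: the lift realizes the quotient distance at $p$ (by optimal positioning, respectively its invariance for the Ziezold case) and dominates the quotient distance at every other reference point, so the Fr\'echet property of $[p]$ squeezes $p$ into being an intrinsic, respectively extrinsic, mean of $Y$; the displayed identities are then the standard first-order conditions, for which the paper simply cites \cite{KN69}, \cite{Ka77}, \cite{KWS90}. Your observation that the support condition off $C^{quot}([p])$ together with optimal positioning forces $Y\notin C(p)$ a.s.\ is exactly the point that makes $\exp_p^{-1}Y$ and $f^{Y}_{ext}$ well defined, and your inequality chain is in fact stated more carefully than the paper's (which writes an equality where only ``$\leq$'' holds, harmlessly). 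One small correction: (\ref{Ziez-inj:cond}) and Remark \ref{half-sphere:rm} play no role in part (ii) here; they are hypotheses of Theorem \ref{population_mean_iso_grp:thm} (injectivity of the gradient map), not of the differentiation under the integral, which needs only smoothness off the cut locus and compactness.

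In part (iii) there is a genuine gap. Your pivotal claim is the pointwise inequality $\rho^{(r)}(Y,q)\geq \rho^{(p)}_Q(X,[q])$ for arbitrary $q$, justified by saying that ``any representative lying in $L$ gives a residual distance at least as large as the opt.\ pos.\ representative realizing $\rho^{(p)}_Q$''. The definition of $\rho^{(p)}_Q$ as a minimum cannot give this: the minimum ranges only over $g$ with $gY$ in optimal position to $q$, and $g=e$ is in general not admissible; minima only bound from above by admissible competitors, i.e.\ the opposite direction. Worse, the underlying assertion that optimal positioning minimizes the residual distance over the orbit is false: on the pre-shape sphere $\rho^{(r)}(x,q)=\sqrt{1-\langle x,q\rangle^2}$ is minimized over $\{gx\}$ by maximizing $|\langle gx,q\rangle|$, whereas optimal positioning maximizes $\langle gx,q\rangle$; for $SO(m)$ with $m$ odd (the Kendall 3D case, e.g.\ when $\det(xq^T)<0$) these differ, so $\rho^{(r)}(Y,q)<\rho^{(p)}_Q(X,[q])$ can occur -- the residual quasi-metric is not monotone in the geodesic distance and even vanishes at antipodes. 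Likewise, at $q=p$ the admissibility of $g=e$ yields only $\rho^{(p)}_Q(X,[p])\leq\rho^{(r)}(Y,p)$; the equality you use additionally needs that all optimally positioned representatives share the same residual value (true on spheres by the inner-product formula, but it should be said). To be fair, the paper disposes of (iii) with a one-line ``replace the distances'' substitution and supplies no more detail than you do, so your squeeze skeleton is the intended one; but the specific pointwise justification you offer would fail as stated, and a correct write-up of (iii) must either restrict to the spherical/invariant situation where the needed comparisons can be verified directly, or argue differently -- consistent with the fact that no integral identity (and later no stability) is claimed for Procrustean means.
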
 

	\begin{proof} 
	Suppose that $[p]$ is an intrinsic mean of $X$. If $p$ would not be an intrinsic mean of $Y$, there would  some $M\ni p'\neq p$  leading to the contradiction
	\begin{eqnarray*}
	  \mathbb E \big(d_Q([p'], X)^2\big) &=&  \mathbb E \big(d_M(p',Y)^2\big)
	~<~ \mathbb E \big(d_M(p,Y)^2\big) ~=~  \mathbb E \big(d_Q([p], X)^2\big)\,.
	\end{eqnarray*}
	Hence, $p$ is an intrinsic mean of $Y$. Replacing $d_Q$ by $\rho_Q^{(z)}$ and $d_M$ by the Euclidean distance gives the assertion for Ziezold and extrinsic means, respectively; and, using the Procrustean distance $\rho_Q^{(p)}$ on $Q$ as well as the residual distance $\rho^{(r)}_M$ on $M$ gives the assertion for Procrustean and residual means, respectively.

	For intrinsic means $p\in M$, the necessary condition $\mathbb E \big(\exp_{p}^{-1}Y\big) =0\,$ is developed in \citep[p. 110]{KN69}, cf. also \citep{Ka77} and \citep[p. 395]{KWS90}
	which yields the asserted equality in (i). By definition, the analog condition for an extrinsic mean $p\in M$ is
	$\mathbb E\big(d f^{Y}_{ext}(p)\big) =0$ 
	which is the asserted equality in (ii) completing the proof. 
	\end{proof}


	\begin{Rm}\label{Ziez-half-sphere:rm} Since the maximal intrinsic distance on $\Sigma_m^k$ and $R\Sigma_m^k$ is 
	$$\frac{\pi}{2} = \max_{x,y\in S_m^k}\mathop{\min}_{g\in SO(m)}\arccos\big(\tr(gxy^T)\big)=\max_{x,y\in S_m^k}\mathop{\min}_{g\in O(m)}\arccos\big(\tr(gxy^T)\big)\,,$$ 
	taking into account Remark \ref{half-sphere:rm}, condition (\ref{Ziez-inj:cond}) is satisfied for any horizontal lift in optimal position.	 
	\end{Rm}

\subsection{Manifold Stability}
\label{population_convex:scn}

	The proof of the following central theorem is deferred to the appendix.

	\begin{Th}\label{population_mean_iso_grp:thm}  Assume that $X$ is a random shape on $Q$, 
	  $p\in M$ and that $A\subset Q$ is countable such that $X$ is supported by $\big(Q^{([p])}\cup A\big)\setminus C^{quot}([p])$ 
	and let $p'\in [p'] \in Q^{([p])}$. If ${\Prob} \{X\in Q^{(I_{p'})}\} \neq 0$ and if either $[p]$ is
	\begin{enumerate}
	 \item[(i)]  an intrinsic mean of $X$ or
	 \item[(ii)] a Ziezold mean of $X$ while optimal positioning is invariant
	and (\ref{Ziez-inj:cond}) is valid,
	\end{enumerate}
	then $p'$ is of lower orbit type than $p$.
	\end{Th}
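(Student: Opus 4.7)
The plan is by contradiction: assume $[p]$ is an intrinsic (respectively Ziezold) mean while $I_{p'}\subsetneq I_p$ (after conjugation), i.e.\ $p'$ has strictly higher orbit type than $p$. First I would lift $X$ via the measurable horizontal lift $L$ from Theorem \ref{glob_hor_measurable_lift:thm} to a random pre-shape $Y$ on $M$ in optimal position to $p$; since $X$ avoids $C^{quot}([p])$, relation (\ref{quot-cut-loc:eq}) guarantees $Y\notin C(p)$ a.s., so $V:=\exp_p^{-1}Y$ is unambiguous, and lies in $H_pM$ because a shortest geodesic from $p$ to a point in optimal position is horizontal at $p$. Theorem \ref{pop_int_mean_bottom_int_mean_top:Th}(i) then yields $\mathbb E(V)=0$, and in the Ziezold case part (ii) gives the analogous identity $\mathbb E(W)=0$ for $W:=df^{Y}_{ext}(p)$.

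The central observation is that the horizontal lift is highly non-unique on the stratum $Q^{(I_{p'})}$: because $I_{p'}\subsetneq I_p$, the $I_p$-orbit of any horizontal representative of $q\in Q^{(I_{p'})}$ has more than one element, and each furnishes another valid horizontal lift. Concretely, for any fixed $h\in I_p$ and any measurable $R\subset Q^{(I_{p'})}$, the set $L_R^h:=h(L\cap\pi^{-1}(R))\cup(L\setminus\pi^{-1}(R))$ is again a measurable horizontal lift in optimal position to $p$ (invariance of optimal positioning is automatic in the intrinsic case and part of the hypothesis in the Ziezold case). Re-running the argument of Theorem \ref{pop_int_mean_bottom_int_mean_top:Th} on $L_R^h$, and using the equivariance $\exp_p^{-1}(hy)=h\exp_p^{-1}y$ (respectively $df^{hy}_{ext}(p)=h\,df^{y}_{ext}(p)$, which follows from linearity of the $G$-action on the ambient Euclidean space together with $hp=p$), yields
\[
(h-\mathrm{Id})\,\mathbb E\bigl(V\,\mathbf{1}_{\{X\in R\}}\bigr)=0.
\]
Varying $h\in I_p$ forces $\mathbb E(V\,\mathbf{1}_{\{X\in R\}})\in (H_pM)^{I_p}$, and since $R\subset Q^{(I_{p'})}$ is an arbitrary measurable subset, a routine vector-valued measure argument extracts $V(q)\in (H_pM)^{I_p}$ for $\Prob$-almost every $q\in Q^{(I_{p'})}$.

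This is the desired contradiction. If $V(q)\in (H_pM)^{I_p}$ then $hY(q)=h\exp_pV(q)=\exp_p V(q)=Y(q)$ for every $h\in I_p$, hence $I_p\subseteq I_{Y(q)}$; but $Y(q)\in M^{(I_{p'})}$ means $I_{Y(q)}$ is conjugate to $I_{p'}\subsetneq I_p$ and so cannot contain $I_p$, a contradiction on grounds of dimension (or cardinality). Thus $\Prob\{X\in Q^{(I_{p'})}\}=0$, contrary to the hypothesis. The Ziezold case runs identically, with condition (\ref{Ziez-inj:cond}) playing precisely the role of invertibility of $\exp_p^{-1}$ at the final step: it is what permits the passage from $hW(Y(q))=W(Y(q))$ (for all $h\in I_p$) to $hY(q)=Y(q)$, and thence again to $I_p\subseteq I_{Y(q)}$. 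I expect the principal technical obstacle to be verifying that each $L_R^h$ is genuinely admissible in the sense of Definition \ref{hor-lift:def} and measurable; this reduces to observing that $h$ merely permutes horizontal representatives within each $G$-orbit while preserving optimal position, and that $h$, being a diffeomorphism, takes Borel sets to Borel sets.
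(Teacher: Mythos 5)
Your proof is correct and rests on the same central mechanism as the paper's: perturb the measurable horizontal lift (Theorem \ref{glob_hor_measurable_lift:thm}) by an element of $I_p$ over a sub-region of the charged stratum, apply the first-order conditions of Theorem \ref{pop_int_mean_bottom_int_mean_top:Th} to both lifts, and use (\ref{intr-inj:cond}) resp.\ (\ref{Ziez-inj:cond}) to convert equality of differentials into equality of points. Where you genuinely differ is in how the contradiction is extracted. The paper works inside the slice construction of Lemma \ref{hor_measurable_lift:lem}: it selects a point $p_j$ on the slice sphere with $gp_j\neq p_j$ for some $g\in I_p$ and a sufficiently small piece $M^j_k(\epsilon)$ around the cone over it so that, by continuity and the injectivity condition, $\int_{M^j_k(\epsilon)}\big(\exp_p^{-1}Y-\exp_p^{-1}(gY)\big)\,d\Prob_Y\neq 0$, and swaps the lift on exactly that piece. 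You instead let the perturbation region $R$ range over \emph{all} measurable subsets of $Q^{(I_{p'})}$, obtain $(dh_p-\mathrm{Id})\,\E\big(V\,\mathbf{1}_{\{X\in R\}}\big)=0$ for every $h\in I_p$, and then a vector-valued measure argument (project onto the orthogonal complement of the fixed space) yields the stronger pointwise statement that $V$ is a.e.\ $I_p$-fixed on the charged stratum, hence $I_p\subset I_{Y}$ there, contradicting the orbit type; this frees you from the fine structure of Lemma \ref{hor_measurable_lift:lem} (no choice of $p_j$, $U^j_k$, $\epsilon$, no continuity estimate) at the cost of a routine measure-theoretic step, and your verification that $L^h_R$ is again a measurable horizontal lift in optimal position is as easy as you indicate. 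Two details deserve explicit mention: in the Ziezold case the equivariance should be derived from the assumed extrinsic isometry of the $G$-action, which gives $f^{hy}_{ext}=f^{y}_{ext}\circ h^{-1}$ and hence $\nabla f^{hy}_{ext}(p)=dh_p\,\nabla f^{y}_{ext}(p)$ because $h^{-1}p=p$ (literal linearity of the action on $\mathbb R^D$ holds for Kendall's spaces but is not part of the general setup); and to apply (\ref{Ziez-inj:cond}) to the pair $(hY,Y)$ you should note that $hY\notin C(p)$, which follows since $C(p)$ is invariant under isometries fixing $p$.
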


	We have at once the following Corollary.

	\begin{Cor}[Manifold Stability Theorem]\label{pop_mean_reg:cor}
	 Suppose that $X$ is a random shape on $Q$ that is supported by $Q\setminus C^{quot}([p])$ for some $[p] \in Q$ assuming the manifold part $Q^*$ with non-zero probability and having at most countably many point masses on the singular part. Then $[p]$ is regular if it 
	is an intrinsic mean of $X$, or if it 
	is a Ziezold mean, optimal positioning is invariant
	and (\ref{Ziez-inj:cond}) is valid. 
	\end{Cor}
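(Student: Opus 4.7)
The plan is to invoke Theorem \ref{population_mean_iso_grp:thm} by contradiction: assume that $[p]$ is singular and deduce a violation of its conclusion. Since ``singular'' means the orbit type of $p$ is strictly less than maximal, the regular stratum $Q^*$ lies entirely in the strictly higher orbit type set, $Q^*\subset Q^{([p])}$.

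First, I would verify the support hypothesis of Theorem \ref{population_mean_iso_grp:thm}. By assumption, $X$ has at most countably many point masses on the singular part $Q\setminus Q^*$. Under the contradiction hypothesis, $Q\setminus Q^{([p])}\subset Q\setminus Q^*$, so the restriction of the distribution of $X$ to $Q\setminus Q^{([p])}$ is concentrated on a countable set $A\subset Q$. Combined with the given support in $Q\setminus C^{quot}([p])$, this shows that $X$ is supported by $(Q^{([p])}\cup A)\setminus C^{quot}([p])$, matching exactly the hypothesis of Theorem \ref{population_mean_iso_grp:thm}.

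Next, I would pick any $p'\in M^*$. Then $[p']\in Q^*\subset Q^{([p])}$ and $Q^{(I_{p'})}=Q^*$, so $\Prob\{X\in Q^{(I_{p'})}\}=\Prob\{X\in Q^*\}>0$ by hypothesis. Under the invariance of optimal positioning and (\ref{Ziez-inj:cond}) in the Ziezold case, all remaining assumptions of Theorem \ref{population_mean_iso_grp:thm} hold, so its conclusion gives that $p'$ is of lower orbit type than $p$, i.e. $I_p\subset gI_{p'}g^{-1}$ for some $g\in G$. But $p'$ being regular means $I_{p'}$ is a principal (hence minimal, up to conjugacy) isotropy group, so this inclusion forces $I_p$ to be conjugate to $I_{p'}$. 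Hence $p$ is regular, contradicting the choice of $[p]$ as singular.

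I do not foresee any genuine obstacle here: the statement is essentially a packaging of Theorem \ref{population_mean_iso_grp:thm} applied at a regular pre-shape $p'$. The only point requiring a moment of care is recognizing that the clause ``at most countably many point masses on the singular part'' delivers precisely the countable exceptional set $A$ demanded by Theorem \ref{population_mean_iso_grp:thm}, which is valid because, when $[p]$ is singular, the complement $Q\setminus Q^{([p])}$ is contained in the singular locus.
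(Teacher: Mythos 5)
Your proof is correct and follows exactly the route the paper intends: the corollary is stated there as an immediate consequence of Theorem \ref{population_mean_iso_grp:thm}, and your argument simply spells out the details (taking $A$ to be the countable set of point masses on the singular part, choosing a regular $p'$ so that $Q^{(I_{p'})}=Q^*$, and using minimality of the principal isotropy group to upgrade ``$p'$ of lower orbit type than $p$'' to regularity of $p$). No gaps; the contradiction framing is harmless packaging of the same deduction.
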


	Since $Q\setminus Q^{(q)}$  is a null set in $Q$ for every $q\in Q$ (cf. \citep[p. 184]{Bre72}) and so is  $C^{quot}(q)$ -- by (\ref{quot-cut-loc:eq}) it is contained in the projection of a null set -- we have the following practical application.

	\begin{Cor}\label{cont_distr:cor}  Suppose that a random shape on $Q$ is absolutely continuously distributed w.r.t. the projection of the Riemannian volume on $M$. Then intrinsic and Ziezold population means are regular; the latter if optimal positioning is invariant and (\ref{Ziez-inj:cond}) is valid. And, intrinsic and Ziezold sample means are a.s. regular. 
	\end{Cor}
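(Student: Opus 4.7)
The plan is to derive both statements from Corollary \ref{pop_mean_reg:cor} by using absolute continuity to push the three hypotheses of that corollary -- assumption of the manifold part with positive probability, at most countably many point masses on the singular part, and avoidance of the quotient cut locus of a candidate mean -- through automatically.

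For the population case, I would first observe that the singular part $Q\setminus Q^*$ is a countable union of orbit-type strata of dimension strictly smaller than the regular one, and so is a null set under the projected Riemannian volume by \citep[p.~184]{Bre72}; absolute continuity then yields $\Prob\{X\in Q^*\}=1$, whence $X$ charges the manifold part with probability one and has no point masses on the singular part. For the cut-locus hypothesis, the inclusion $C^{quot}([p])\subset \pi(C(p))$ recorded in (\ref{quot-cut-loc:eq}), combined with the standard fact that the manifold cut locus $C(p)$ has measure zero in $M$, makes $C^{quot}([p])$ a null set in $Q$; absolute continuity again forces $\Prob\{X\in C^{quot}([p])\}=0$. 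Thus for any prospective intrinsic or Ziezold mean $[p]$ all hypotheses of Corollary \ref{pop_mean_reg:cor} hold (for the Ziezold case I would additionally invoke the assumed invariance of optimal positioning and condition (\ref{Ziez-inj:cond})), and regularity of $[p]$ follows at once.

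For the sample assertion I would apply the same corollary to the empirical measure $\tfrac{1}{n}\sum_{j=1}^n\delta_{X_j(\omega)}$. By the population step each $X_j$ lies almost surely in $Q^*$, so almost surely the empirical measure is supported on at most $n$ atoms in $Q^*$; the first two hypotheses are then trivial. The delicate step is the cut-locus condition, since the relevant set $C^{quot}([p])$ depends on $\omega$ through the sample mean $[p]=[p](\omega)$. I would handle this by fixing $j$ and using the absolute continuity of $X_j$ conditional on the remaining samples to conclude $\Prob\{X_j\in C^{quot}([p])\}=0$ for each sample mean, then intersect over $j=1,\dots,n$ to obtain an almost sure statement, after which Corollary \ref{pop_mean_reg:cor} finishes the proof.

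The main obstacle is precisely this last point: the null set $C^{quot}([p])$ to be avoided is random because it depends on the sample mean, so a pointwise application of absolute continuity does not suffice and a conditional (Fubini-type) argument is needed. The population case, by contrast, is immediate once one notes that the sets obstructing the hypotheses are all null under the reference measure.
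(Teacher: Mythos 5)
Your population argument is exactly the paper's: the corollary is stated immediately after the observation that $Q\setminus Q^{(q)}$ is a null set for every $q\in Q$ (\citep[p.~184]{Bre72}) and that $C^{quot}(q)$, being contained by (\ref{quot-cut-loc:eq}) in the projection of the manifold cut locus, is a null set as well, so that absolute continuity makes all hypotheses of Corollary \ref{pop_mean_reg:cor} hold for any candidate mean $[p]$. Up to that point you reproduce the intended one-line justification faithfully, including the Ziezold provisos on invariant optimal positioning and (\ref{Ziez-inj:cond}).

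The sample assertion is where your write-up has a genuine gap, and it sits exactly at the step you yourself flag as delicate. You propose to fix $j$, condition on the remaining observations, and invoke absolute continuity of $X_j$ against the null set $C^{quot}([p])$. But the sample mean $[p]=[p]\big(X_1,\dots,X_n\big)$ depends on $X_j$ as well, so even after conditioning on $X_i$, $i\neq j$, the set $C^{quot}([p])$ is not a fixed null set that $X_j$ must avoid: the event in question has the form $\{X_j\in N(X_j,X_{-j})\}$ with a random set depending on the very variable you integrate, and absolute continuity of $X_j$ alone gives no control over it. So the Fubini-type argument as you state it does not close the gap; one would need additional input on how the mean (hence its quotient cut locus) varies with a single observation, e.g. a result that a Fr\'echet mean does not charge its own cut locus, or a transversality-type argument. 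To be fair, the paper itself offers no more for the sample case than the same null-set remark, and in its actual application (Theorem \ref{Kendall_mean_dim}) the difficulty disappears because Lemma \ref{cut_locus:lem} shows the quotient cut loci of Kendall's shape and reflection shape spaces are empty, so the support hypothesis of Corollary \ref{pop_mean_reg:cor} is vacuous there; for a general quotient $Q$, however, your sample step is not yet a proof.
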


\subsection[Non-Stability for Procrustean Means]{An Example for Non-Stability of Procrustean Means}\label{Procrustes-convex:scn}

	Consider a random configuration $Z\in F_3^4$ assuming the collinear quadrangle $q_1$ with probability $2/3$ and the planar quadrangle $q_2$ with probability $1/3$ where
		$$ q_1=\left(\begin{array}{cccc}1&-1&0&0\\0&0&0&0\\0&0&0&0\end{array}\right),~~ q_2= \left(\begin{array}{cccc}1&1&-2&0\\\frac{1}{\sqrt{2}}&\frac{1}{\sqrt{2}}&\frac{1}{\sqrt{2}}&-\,\frac{3}{\sqrt{2}}\\0&0&0&0\end{array}\right)\,.$$
	Corresponding pre-shapes in optimal position w.r.t. the action of $SO(3)$ and $O(3)$ are given by 
		$$ p_1=\left(\begin{array}{ccc}1&0&0\\0&0&0\\0&0&0\end{array}\right),~~ p_2=\frac{1}{\sqrt{2}}\, \left(\begin{array}{ccc}0&1&0\\0&0&1\\0&0&0\end{array}\right)\,.$$
	Note that $[p_2]$ has regular shape in $(\Sigma_3^4)^*$. The full Procrustes mean of $[Z]\in \Sigma_3^4$ is easily computed to have the singular shape $[p_1]\in \Sigma_3^4\setminus (\Sigma_3^4)^*$, see Figure \ref{different_means_fig} as well as Examples \ref{different_means_ex} and Section \ref{blindness:scn}. Cf. also Remark \ref{sharp:rm}. 



\section{Extrinsic Means for Kendall's (Reflection) Shape Spaces}\label{ext-means:scn}

	Let us recall the well known \emph{Veronese-Whitney} embedding for Kendall's planar shape spaces $\Sigma_2^k$. Identify $F_2^k$ with $\mathbb C^{k-1}\setminus \{0\}$ such that every landmark column corresponds to a complex number. This means in particular that $z\in \mathbb C^{k-1}$ is a complex row(!)-vector. With the Hermitian conjugate $a^* = (\overline{a_{kj}})$ of a complex matrix $a=(a_{jk})$ the pre-shape sphere $S_2^k$ is identified with $\{z\in \mathbb C^{k-1}: zz^*=1\}$ on which $SO(2)$ identified with $S^1=\{\lambda \in\mathbb C: |\lambda|=1\}$ acts by complex scalar multiplication. Then the well known Hopf-Fibration mapping to complex projective space gives $\Sigma_2^k=S_2^k/S^1=\mathbb CP^{k-2}$. Moreover, denoting with $M(k-1,k-1,\mathbb C)$ all complex $(k-1)\times  (k-1)$ matrices, the Veronese-Whitney embedding is given by
	\begin{eqnarray*}
	 S_2^k/S^1 &\to& \{a \in M(k-1,k-1,\mathbb C): a^*=a\}\,,~~
	~[z] ~\mapsto~ z^*z\,.
	\end{eqnarray*}

	\begin{Rm}\label{VW-emb-iso:rm}
	The Procrustean  metric of $\Sigma_2^k$ is isometric with the Euclidean metric of $M(k-1,k-1,\mathbb C)$ since
	 we have $\langle z,w \rangle = \re(zw^*)$  for $z,w\in S_2^k$ and hence, $d^{(p)}_{\Sigma_2^k}([z],[w]) = \sqrt{1-wz^*zw^*} = \|w^*w-z^*z\|/\sqrt{2}$. 
	\end{Rm}

	The idea of the Veronese-Whitney embedding can be carried to the general case of shapes of arbitrary dimension $m\geq 2$. Even though the embedding given below is apt only for reflection shape space it can be applied to practical situations in similarity shape analysis whenever the geometrical objects considered have a common orientation. 
	As above, the number $k$ of landmarks is essential and will be considered fixed throughout this section; the dimension $1\leq m < k$, however, is lost in the embedding and needs to be retrieved by projection. To this end recall the embedding of $S_j^k$ in $S_m^k$ $(1\leq j \leq m)$ in (\ref{pre-shape-emb:eq}) which gives rise to a canonical embedding of $R\Sigma_j^m$ in $R\Sigma_m^k$.
	Moreover, consider the strata 
	$$(R\Sigma_m^k)^j := \{[x]\in R\Sigma_m^k: \rank(x)=j\},~~(\Sigma_m^k)^{j}:=\{[x] \in \Sigma_m^k: \rank(x)=j\}$$ 
	for $j=1,\ldots,m$, each of which carries a canonical manifold structure;  due to the above embedding, $(R\Sigma_m^k)^j$ will be identified with $(R\Sigma_j^k)^j$ such that
	$$ R\Sigma_m^k ~=~ \bigcup_{j=1}^m (R\Sigma_j^k)^j\,,$$
	and  $(R\Sigma_m^k)^j$ with $(\Sigma_m^k)^j$ in case of $j<m$.
	At this point we note that $SO(m)$ is connected, while $O(m)$ is not; and the consequences for the respective manifold parts, i.e. points of maximal orbit type:
	\begin{eqnarray}\label{mf-part:eq}
	 (\Sigma_m^k)^* &=& (\Sigma_m^k)^{m-1}\cup (\Sigma_m^k)^m\,,\quad
	 (R\Sigma_m^k)^* ~=~ (R\Sigma_m^k)^{m}\,.	
	\end{eqnarray}
	Similarly, we have a stratifiction 
	$${\cal P} := \left\{a\in M(k-1,k-1): a=a^T\geq 0, \tr(a)=1\right\} ~=~ \bigcup_{j=1}^{k-1}{\cal P}^j$$
	of a compact flat convex space 
	${\cal P}$ with non-flat manifolds 
	${\cal P}^j :=\{a\in {\cal P}: \rank(a) = j\}~ (j=1,\ldots,k-1)\,,$ 
	all embedded in $M(k-1,k-1)$. The \emph{Schoenberg map} $\mathfrak{s} : R\Sigma_m^k \to {\cal P}$ is then defined on each stratum by
	$$ \begin{array}{rcl}\mathfrak{s}|_{(R\Sigma_m^k)^j}=:\mathfrak{s}^j: (R\Sigma_m^k)^j &\to& {\cal P}^j \,,~~[x]~\mapsto~x^Tx\end{array}\,.$$
	For $x\in S_j^k$ recall the tangent space decomposition
	$T_xS_j^k = T_{x}[x]\oplus H_xS_j^k$ into the \emph{vertical tangent space} along the orbit $[x]$ and its orthogonal complement the \emph{horizontal tangent space}. For $x\in (S_j^k)^j$ identify canonically (cf. \citep[p. 109]{KBCL99}):
	$$ T_{[x]} (R\Sigma_j^k)^j\cong H_xS_j^k = \{w \in M(j,k-1): \tr(wx^T) = 0, wx^T=xw^T\}\,.$$
	Then the assertion of the following Theorem condenses results of \cite{BandPat05}, cf. also \cite{DKLW08}.
	
	\begin{Th} Each $\mathfrak{s}^j$ is a diffeomorphism with inverse $(\mathfrak{s}^j)^{-1}(a) = [(\sqrt{\lambda} u^T)_1^j]$ where $a = u\lambda u^T$ with $u \in O(k-1)$, $\lambda=\diag(\lambda_1,\ldots,\lambda_m)$, and  $0=\lambda_{j+1} =\ldots=\lambda_{k-1}$ in case of $j<k-1$. Here, $(a)_1^j$ denotes the matrix obtained from taking only the first $j$ rows from $a$. For $x\in S_j^k$ and $w\in H_xS_j^k\cong T_{[x]}(R\Sigma_j^k)^j$ the derivative is given by
	$$d(\mathfrak{s}^j)_{[x]}w = x^Tw+w^Tx\,.$$
	\end{Th}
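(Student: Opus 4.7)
The plan is to show each $\mathfrak{s}^j$ is a smooth bijection and then upgrade this to a diffeomorphism via the inverse function theorem applied to the stated derivative. Well-definedness and the inclusion $\mathfrak{s}^j\big((R\Sigma_m^k)^j\big)\subset \mathcal{P}^j$ are routine: $x\mapsto x^Tx$ is polynomial, is invariant under the left action $x\mapsto gx$ of $O(j)$, and produces a symmetric positive semi-definite matrix of rank $\rank(x)=j$ with trace $\|x\|^2=1$. Smoothness of $\mathfrak{s}^j$ therefore follows at once.

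For bijectivity I verify the explicit inverse. Given $a=u\lambda u^T\in\mathcal{P}^j$ with $\lambda=\diag(\lambda_1,\ldots,\lambda_{k-1})$ and $\lambda_{j+1}=\cdots=\lambda_{k-1}=0$, setting $x:=(\sqrt{\lambda}u^T)_1^j$ makes the vanishing last $k-1-j$ rows of $\sqrt{\lambda}u^T$ drop out of the product, so a direct calculation yields $x^Tx=u\lambda u^T=a$ and $\|x\|^2=\tr(a)=1$, while the retained $j$ rows are orthogonal with positive norms $\sqrt{\lambda_1},\ldots,\sqrt{\lambda_j}$, giving $\rank(x)=j$. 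Independence of $[x]$ from the spectral decomposition is the only subtlety: if $a=u\lambda u^T=u'\lambda u'^T$, then $u'=uh$ for some $h\in O(k-1)$ commuting with $\lambda$, hence block-diagonal along the eigenspace decomposition, and a short computation yields $(\sqrt{\lambda}u'^T)_1^j=h_1^T(\sqrt{\lambda}u^T)_1^j$ for some $h_1\in O(j)$. Injectivity of $\mathfrak{s}^j$ is similarly immediate because $x^Tx=y^Ty$ with both $x,y$ of full row rank $j$ forces $y=hx$ for some $h\in O(j)$ by the uniqueness of the polar decomposition.

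The derivative formula follows by differentiating $x\mapsto x^Tx$ and observing that the bilinear expression $w\mapsto x^Tw+w^Tx$ annihilates vertical vectors $w=Ax$ with skew-symmetric $A$, so it descends to $H_xS_j^k$. The hard part is showing this derivative is injective on $H_xS_j^k$. Using the $O(j)$-action, I bring $x$ into the canonical form $x=(\sigma,0)$ with $\sigma=\diag(\sigma_1,\ldots,\sigma_j)$, $\sigma_i>0$, and write $w=(a,b)$ accordingly. The equation $x^Tw+w^Tx=0$ then splits into $\sigma b=0$ (forcing $b=0$) together with $\sigma a+a^T\sigma=0$, while the horizontal condition $wx^T=xw^T$ becomes $a\sigma=\sigma a^T$; combining these constraints entrywise forces $a_{ij}(\sigma_i^2+\sigma_j^2)=0$, hence $a=0$ and $w=0$. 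Since $\dim H_xS_j^k=j(k-1)-1-\binom{j}{2}=\dim\mathcal{P}^j$, injectivity promotes to bijectivity of the derivative, and the inverse function theorem combined with the already established global bijection delivers the diffeomorphism claim.
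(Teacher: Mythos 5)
Your argument is correct, but there is no internal proof in the paper to compare it with: the theorem is stated as a condensation of results from \cite{BandPat05} and \cite{DKLW08}, and the paper offers no derivation. Your self-contained verification covers all the needed points: $O(j)$-invariance and smoothness of $x\mapsto x^Tx$ on the rank-$j$ stratum; the explicit check that $[(\sqrt{\lambda}u^T)_1^j]$ is a two-sided inverse, including the only delicate issue, independence of the chosen spectral decomposition, handled correctly via an orthogonal $h$ commuting with $\lambda$, hence block-diagonal, so that $(\sqrt{\lambda}u'^T)_1^j=h_1^T(\sqrt{\lambda}u^T)_1^j$ with $h_1\in O(j)$; injectivity from $x^Tx=y^Ty\Rightarrow y=hx$, $h\in O(j)$; the derivative formula with the observation that $w\mapsto x^Tw+w^Tx$ kills vertical vectors $Ax$; and the kernel computation on $H_xS_j^k$ together with the dimension count $j(k-1)-1-\binom{j}{2}=\dim{\cal P}^j$, which via the inverse function theorem upgrades the global bijection to a diffeomorphism. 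The IFT route is also the right one, since the explicit inverse involves spectral decompositions and is not manifestly smooth where eigenvalues coalesce.

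One phrase needs repair: the left $O(j)$-action alone cannot bring $x$ into the canonical form $(\sigma,0)$ with $\sigma=\diag(\sigma_1,\ldots,\sigma_j)$; that normal form is the singular value decomposition and also requires a right multiplication by some $R\in O(k-1)$. This is harmless, but you should say why: right multiplication by $R$ preserves $S_j^k$ and the horizontality conditions, maps $H_xS_j^k$ onto $H_{xR}S_j^k$, and conjugates the candidate differential, $(xR)^T(wR)+(wR)^T(xR)=R^T(x^Tw+w^Tx)R$, so injectivity of the differential at $x$ is equivalent to injectivity at $xR$ (and likewise for the left action, which leaves $x^Tw+w^Tx$ unchanged). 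With that one line added, your entrywise conclusion $a_{il}(\sigma_i^2+\sigma_l^2)=0$, hence $a=0$, $b=0$, completes the proof as written.
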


	\begin{Rm}\label{Schoenberg-emb-non-iso:rm} In contrast to the Veronese-Whitney embedding, the Schoenberg embedding is not isometric as the example of
	$$ x = \left(\begin{array}{cc}\cos \phi&0\\0&\sin\phi\end{array}\right)\,,~~ w_1 = \left(\begin{array}{cc}\sin \phi&0\\0&-\cos\phi\end{array}\right),~~ w_2 = \left(\begin{array}{cc}0&\cos \phi\\\sin\phi&0\end{array}\right),~~$$
	 teaches:
	$\|x^Tw_1+w_1^Tx\| = \sqrt{2}\, 2|\cos\phi\sin\phi|,~~ \|x^Tw_2+w_2^Tx\| =\sqrt{2}\,.$ 
	\end{Rm}

	Since ${\cal P}$ is bounded, convex and Euclidean, the classical expectation $\mathbb E(X^TX)\in {\cal P}^j$ for some $1\leq j\leq k-1$ 
	of the Schoenberg image $X^TX$ of an arbitrary random reflection shape $[X]\in R\Sigma_m^k$ is well defined. Then we have at once the following relation between the rank of the Euclidean mean and increasing sample size. 

	\begin{Th}\label{dim-Schoenberg-mean:thm}
	Suppose that a random reflection shape $[X]\in R\Sigma_m^k$ is distributed absolutely continuous w.r.t. the projection of the spherical volume on $S_m^k$. Then 
	$$\mathbb E(X^TX)\in {\cal P}^{k-1}\mbox{ and~~~} \frac{1}{n}\sum_{i=1}^n X_i^TX_i\in {\cal P}^{\min( nm , k-1)}~ a.s.$$
	for every i.i.d. sample $X_1,\ldots,X_n\sim X$. 
	\end{Th}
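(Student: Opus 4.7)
My approach is to reduce both statements to the observation that an absolutely continuous distribution on $R\Sigma_m^k$ assigns probability zero to any $O(m)$-invariant proper real-algebraic subvariety of the pre-shape sphere $S_m^k$. Indeed, such a subvariety has vanishing spherical volume (being of lower dimension), and since it is $O(m)$-invariant its image in $R\Sigma_m^k$ has measure zero under the projected spherical volume; absolute continuity then gives probability zero.

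For the first assertion, $\mathbb E(X^TX)\in\mathcal{P}$ is positive semidefinite, so it suffices to show positive definiteness. For any nonzero $v\in\mathbb R^{k-1}$,
$$v^T\mathbb E(X^TX)v=\mathbb E\big(\|Xv\|^2\big),$$
and the event $\{Xv=0\}$ corresponds to the $O(m)$-invariant set $\{x\in S_m^k:xv=0\}$, i.e.\ the intersection of $S_m^k$ with a linear subspace of $M(m,k-1)$ of codimension~$m\geq 1$. This is a proper submanifold of $S_m^k$, hence of spherical measure zero, so $\mathbb P(Xv=0)=0$ and $\mathbb E(\|Xv\|^2)>0$. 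Since $v\neq 0$ was arbitrary, $\mathbb E(X^TX)$ has rank $k-1$ and lies in $\mathcal{P}^{k-1}$.

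For the second assertion, use the elementary fact that for PSD matrices $A_1,\ldots,A_n$ the image of $\sum_i A_i$ equals $\sum_i\operatorname{image}(A_i)$, so that
$$\rank\Big(\tfrac{1}{n}\sum_{i=1}^n X_i^TX_i\Big)=\dim\Big(\sum_{i=1}^n\text{row}(X_i)\Big),$$
where $\text{row}(X_i)$ denotes the row space of $X_i$. I will show inductively that this dimension equals $\min(nm,k-1)$ almost surely. For $n=1$, the singular stratum $\{[x]:\rank(x)<m\}$ pulls back to a proper subvariety of $S_m^k$, so $\rank(X_1)=m$ a.s. For the inductive step, condition on $X_1,\ldots,X_{n-1}$ and let $V:=\sum_{i<n}\text{row}(X_i)$, which a.s.\ has dimension $d=\min((n-1)m,k-1)$. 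If $d=k-1$ there is nothing to prove. Otherwise $d=(n-1)m$, and writing $P$ for orthogonal projection onto $V^\perp$, the row space of $X_n$ adjoins the desired $\min(m,k-1-d)$ new dimensions to $V$ iff $\rank(X_nP)=\min(m,k-1-d)$. The failure set is cut out in $S_m^k$ by the simultaneous vanishing of the corresponding minors of $xP$; this is an $O(m)$-invariant real-algebraic subset of $S_m^k$, and it is \emph{proper} because one can explicitly exhibit $x\in S_m^k$ attaining the full generic rank (for instance, an orthonormal system of $m$ rows chosen suitably relative to $V^\perp$, using $\dim V^\perp=k-1-d\geq 1$). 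Hence the failure set has spherical measure zero, absolute continuity together with independence of $X_n$ from $(X_1,\ldots,X_{n-1})$ gives probability zero via Fubini, and the induction closes.

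The main technical point to handle carefully is the translation between the hypothesis on $R\Sigma_m^k$ and measure-zero statements on $S_m^k$: this is why each bad set above is formulated as an $O(m)$-invariant condition on the pre-shape, so that ``proper subvariety of $S_m^k$'' automatically yields probability zero for $[X]$. Everything else reduces to the linear-algebra identity on images of sums of PSD matrices and to the generic-rank principle for independent absolutely continuous samples.
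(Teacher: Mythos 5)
Your proposal is correct. Note that the paper itself offers no argument for this theorem beyond asserting that it holds ``at once,'' so there is no written proof to compare against; your write-up supplies exactly the details that assertion suppresses. The three ingredients you use are the natural ones: (1) the transfer principle that an $O(m)$-invariant set of spherical measure zero in $S_m^k$ projects to a null set for the pushed-forward volume on $R\Sigma_m^k$, so absolute continuity kills it (the invariance is what makes $\pi^{-1}(\pi(A))=A$, and you correctly formulate every bad event invariantly); (2) positive definiteness of $\mathbb E(X^TX)$ via $v^T\mathbb E(X^TX)v=\mathbb E\|Xv\|^2$ and the codimension-$m$ subsphere $\{xv=0\}$; (3) the identity $\operatorname{image}(\sum_iA_i)=\sum_i\operatorname{image}(A_i)$ for PSD summands, reducing the sample statement to $\dim\sum_i\mathrm{row}(X_i)=\min(nm,k-1)$ a.s., which you obtain by induction, conditioning on $X_1,\dots,X_{n-1}$ and using independence plus the fact that the rank-deficiency locus $\{\rank(xP)<\min(m,k-1-d)\}$ is a proper $O(m)$-invariant algebraic subset of the (connected, irreducible) pre-shape sphere, hence null. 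The only places requiring a touch more care are cosmetic: when $k-1-d<m$ your witness configuration should have only $\min(m,k-1-d)$ nonzero orthonormal rows spanning a subspace of $V^\perp$ (suitably renormalized to unit Frobenius norm), and the ``proper algebraic subset implies measure zero'' step deserves a one-line justification (e.g.\ via the real-analytic zero-set argument on the connected sphere, excluding the degenerate case $m=1,k=2$). Neither affects the validity of the argument.
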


	Hence, in  stastical settings involving a higher number of landmarks, a sufficiently well behaved projection of a high rank Euclidean mean onto lower rank ${\cal P}^r\cong (\Sigma_r^k)^r$, usually $m=r$, is to be employed, giving at once a mean shape satisfying strong consistency and a CLT. Here, unlike to intrinsic or Procrustes analysis, the dimension $r$ chosen is crucial for the dimensionality of the mean obtained.

	The orthogonal projection 
	$$\begin{array}{rcl}\phi^r : \bigcup_{i=r}^{k-1} {\cal P}^i &\to&  {\cal P}^r 
	\,,~~ a~\mapsto~\argmin_{b \in {\cal P}^r} \tr\big((a - b)^2\big)\end{array}$$
	giving the set of \emph{extrinsic Schoenberg means} has been computed by \cite{B08}:

	\begin{Th}\label{og-proj-sb:thm} For $1\leq r\leq k-1$, $a = u\lambda u^T \in {\cal P}$ with $u \in O(k-1)$, $\lambda=\diag(\lambda_1,\ldots,\lambda_m)$, $\lambda_{1}\geq \ldots\geq \lambda_{k-1}$ and $\lambda_r>0$ the orthogonal projection onto ${\cal P}^r$ is given by
	$ \phi^r(a) = u \mu u^T$
	with $\mu = \diag(\mu_1,\ldots,\mu_r,0,\ldots,0)$, 
		$$\mu_i = \lambda_i +\frac{1}{r} - \overline{\lambda}_r\,(i=1,\ldots,r)$$
	and $\overline{\lambda}_r = \frac{1}{r}\sum_{i=1}^r\lambda_i \leq \frac{1}{r}$ which is uniquely determined if and only if $\lambda_{r}>\lambda_{r+1}$. 
	\end{Th}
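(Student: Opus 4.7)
The plan is to exploit the Euclidean structure on $M(k-1,k-1)$. Expanding $\tr((a-b)^2) = \tr(a^2) + \tr(b^2) - 2\tr(ab)$ and noting that $\tr(b^2) = \sum_i \mu_i^2$ depends only on the spectrum of $b$, the projection problem reduces, at fixed spectrum of $b$, to maximizing $\tr(ab)$. The von Neumann (equivalently, Ky Fan) trace inequality gives $\tr(ab) \leq \sum_{i=1}^{k-1}\lambda_i \mu_{(i)}$ for decreasingly sorted spectra, with equality if and only if $a$ and $b$ are simultaneously diagonalizable with matching eigenvalue order. Consequently any minimizer over $\mathcal{P}^r$ admits the form $b = u\,\diag(\mu_1,\ldots,\mu_r,0,\ldots,0)\,u^T$ with the same $u$ that diagonalizes $a$ and $\mu_1\geq\ldots\geq\mu_r\geq 0$. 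The non-closedness of $\mathcal{P}^r$ I would handle by first minimizing on the compact set $\bigcup_{i=0}^r \mathcal{P}^i$ and checking a posteriori that the optimum has rank exactly $r$.

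Given this form, the problem collapses to the constrained scalar optimization
\begin{equation*}
\min_{\mu_i\geq 0,\,\sum_{i=1}^r \mu_i = 1}\sum_{i=1}^r(\lambda_i-\mu_i)^2,
\end{equation*}
since $\sum_{i>r}\lambda_i^2$ is constant. Lagrange multipliers against $\sum_{i=1}^r \mu_i = 1$ give $\mu_i - \lambda_i = c$ for all $i\leq r$, and resolving the normalization yields $c = \tfrac{1}{r} - \overline{\lambda}_r$, which is precisely the asserted formula. From $\tr(a)=1$ and $\lambda_i\geq 0$ we have $\overline{\lambda}_r\leq 1/r$, so $c\geq 0$; since $\rank(a)\geq r$ forces $\lambda_r>0$, each $\mu_i$ is strictly positive and the minimizer genuinely lies in $\mathcal{P}^r$, with sort order $\mu_1\geq\ldots\geq\mu_r$ inherited from that of the $\lambda_i$'s.

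For uniqueness, the Lagrange equation pins down the spectrum of $b$, so only the orthogonal factor $u$ remains ambiguous. Any two diagonalizing $u$'s differ by rotations inside eigenspaces of $a$; such a rotation leaves $b = u\mu u^T$ unchanged precisely when it does not mix the first $r$ eigenvectors (those receiving $\mu_i = \lambda_i + c > 0$) with the remaining ones (those receiving $\mu_i = 0$). This is guaranteed exactly when no eigenvalue of $a$ straddles the cutoff, i.e.\ when $\lambda_r > \lambda_{r+1}$; conversely, if $\lambda_r=\lambda_{r+1}$, swapping an eigenvector of $\lambda_r$ with one of $\lambda_{r+1}$ produces a distinct optimum.

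The main obstacle is the spectral alignment step, namely justifying via the von Neumann inequality that an optimum must simultaneously diagonalize $a$ in matching sorted order, and handling the non-closedness of $\mathcal{P}^r$. Everything downstream is routine Lagrange-multiplier computation together with case-checking against $\overline{\lambda}_r\leq 1/r$ and $\lambda_r>0$.
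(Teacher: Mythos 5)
Your argument is correct, and there is nothing in the paper to compare it against: the paper states this theorem without proof, attributing the computation to \cite{B08}. On its own merits the proposal is sound and essentially the standard route: reduce to maximizing $\tr(ab)$ at fixed spectrum via the von Neumann/Ky Fan inequality, so any optimal $b$ must be simultaneously diagonalized with $a$ in matching decreasing order (note any optimum must saturate the inequality for its own spectrum, else realigning strictly improves); then the problem collapses to projecting $(\lambda_1,\ldots,\lambda_r)$ onto the simplex $\{\mu_i\geq 0,\ \sum_{i\le r}\mu_i=1\}$, whose solution $\mu_i=\lambda_i+\tfrac1r-\overline{\lambda}_r$ is feasible because $\overline{\lambda}_r\le \tfrac1r$ and $\lambda_r>0$, hence of rank exactly $r$, which also legitimizes your compactification over $\bigcup_{i\le r}{\cal P}^i$. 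Two points deserve tightening in a final write-up: (a) the equality case of von Neumann yields \emph{some} common ordered eigenbasis, not necessarily the given $u$; this is harmless because within any eigenspace of $a$ the corresponding $\mu_i=\lambda_i+c$ are equal, so $u\mu u^T$ depends only on the spectral projector onto the top-$r$ eigenspace, and that projector is unique precisely when $\lambda_r>\lambda_{r+1}$ --- which is exactly your uniqueness dichotomy, including the explicit swap construction when $\lambda_r=\lambda_{r+1}$; (b) for $r=k-1$ the condition is vacuous and uniqueness always holds. With these remarks your proof is complete.
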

 
	With the notation of Theorem \ref{og-proj-sb:thm}, a non-orthogonal \emph{central projection} $\psi^r(a) =  u \nu u^T$ equally well and uniquely determined has been proposed by \cite{DKLW08} with $$\nu = \diag(\nu_1,\ldots,\nu_r,0,\ldots,0),~~ 
	\nu_i = \frac{\lambda_i}{r\overline{\lambda}_r}~~(i=1,\ldots,r)\,.$$
	Orthogonal and central projection are depicted in Figure \ref{Schoenberg_proj_fig}.

	\begin{figure}
	\begin{minipage}{0.51\textwidth}
	\includegraphics[angle=0,width=1\textwidth]{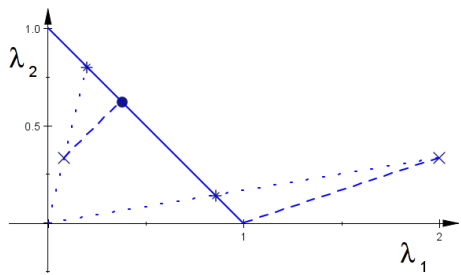}
	\end{minipage}
	\begin{minipage}{0.48\textwidth}
	\caption{\it Projections (if existent) of two points (crosses) in the $\lambda$-plane to the open line segment
	$\Lambda=\{(\lambda_1,\lambda_2): \lambda_1+\lambda_2=1, \lambda_1,\lambda_2>0\}$. The dotted line gives the central projections (denoted by stars) 
	which is well defined for all symmetric, positive 
	  definite matrices (corresponding to the first open quadrant), the dashed line gives the orthogonal projection (circle) which is well defined in the triangle below $\Lambda$ (corresponding to ${\cal P}$) and above $\Lambda$ in an open strip. In particular it exists not for the right point.\label{Schoenberg_proj_fig}}
	\end{minipage}
	\end{figure}

\section{Local Effects of Curvature}\label{local:scn}

	In this section we assume that a manifold stratum $M$ supporting a random element $X$ is isometrically embedded in a Euclidean space $\mathbb R^D$ of dimension $D>0$. With the orthogonal projection $\Phi :\mathbb R^D \to M$ from Section \ref{Frechet-rho-means:scn} and the Riemannian exponential $\exp_p$ of $M$ at $p$ we have the
	$$\begin{array}{lcl}
	 \mbox{\emph{intrinsic tangent space coordinate }}&& \exp_p^{-1}X\mbox{ and the}\\
	 \mbox{\emph{residual tangent space coordinate }}&& d\Phi_p(X-p)\,,
	\end{array}$$
	respectively, of $X$ at $p$, if existent.

\subsection{Finite Power of Tests 
	and Tangent Space Coordinates}\label{tangent:scn} 

	With the above setup, assume that $\mu\in M$ is a unique mean of $X$. Moreover, we assume that $M$ is curved near $\mu$, i.e. that there is $c\in \mathbb R^D$, the center of the osculatory circle touching the geodesic segment in $M$ from $X$ to $\mu$ at $\mu$ with radius $r$. If $X_r$ is the orthogonal projection of $X$ to that circle, then $X=X_r + O(\|X-\mu\|^3)$. Moreover, with
	\begin{eqnarray*} \cos \alpha &=& \left\langle \frac{X-c}{\|X-c\|},  \frac{\mu-c}{r}\right\rangle
	 	~=~ \frac{1}{r^2}\,\langle X_r-c,\mu-c\rangle + O(\|X-\mu\|^3)
	\end{eqnarray*}
	we have the residual tangent space coordinate 
	$$ v = X-c- \frac{\mu-c}{r}\,\|X-c\|\,\cos \alpha =  X_r-c -(\mu-c)\cos \alpha+ O(\|X-\mu\|^3)$$ 
	having squared length $\|v\|^2 =r^2\sin\alpha^2 +   O(\|X-\mu\|^3)$. By isometry of the embedding, the intrinsic tangent space coordinate is given by
	$$\exp^{-1}_{\mu} X = \frac{r\alpha}{\|v\|}\, v + O(\|X-\mu\|^3)\,.$$
	With the component
	$$ \nu = \mu-c - \|X-c\| \frac{\mu-c}{r}\,\cos \alpha = (\mu -c) (1-\cos \alpha)  + O(\|X-\mu\|^3) $$ 
	of $X$ normal to the above mentioned geodesic segment of squared length $\|\nu\|^2 = r^2(1-\cos\alpha)^2+O(\|X-\mu\|^3)$, we obtain $
	 \|\exp^{-1}_{\mu} X\|^2 ~=~ \|v\|^2 + \|\nu\|^2+ O(\|X-\mu\|^3)\,,
	$ since
	\begin{eqnarray*}
	 (1-\cos\alpha)^2 + \sin^2\alpha &=& 2(1-\cos\alpha) ~=~\alpha^2 + 2\frac{\alpha^4}{4!} + \cdots
	\end{eqnarray*}
	and $\alpha =  O(\|X-\mu\|)$.
	In consequence we have

	\begin{Rm}\label{use_is_ext_not_intr_coord_rm}
	In approximation, the variation of intrinsic tangent space coordinates is the sum of the variation $\|v\|^2$ of residual tangent space coordinates and the variation normal to it. In particular, for spheres 
	$$  \|\exp^{-1}_{\mu} X\|^2 ~\geq~ \|v\|^2 + \|\nu\|^2\,.$$ 
	Since the variation in normal space is irrelevant for a two-sample test for equality of means, say, a higher power for tests based on intrinsic means can be expected when solely residual tangent space coordinates obtained from an isometric embedding are used rather than intrinsic tangent space coordinates.

	Note that the natural tangent space coordinates for Ziezold means are residual.
	\end{Rm}

	A simulated classification example in Section \ref{class_data_simulations_scn} illustrates this effect. 

\subsection{The $1:3$ - Property for Spherical and Kendall Shape Means}\label{1:3:scn}

	In this section $M=S^{D-1}\subset \mathbb R^D$ is the $(D-1)$-dimensional unit-hypersphere embedded isometrically in Euclidean $D$-dimensional space. The orthogonal projection $\Phi: \mathbb R^D \to S^{D-1}: p \to \frac{p}{\|p\|}$ is well defined except for the origin $p=0$, and the normal space at $p\in S^{D-1}$ is spanned by $p$ itself. In consequence, a random point $X$ on $S^{D-1}$ has 
	$$d\Phi_p(X-p) = X -p \cos \alpha,~~\exp^{-1}_p(X) = \left\{\begin{array}{ll}\frac{\alpha}{\sin \alpha}~ d\Phi_p(X-p)&\mbox{ for }X\neq p\\0&\mbox{ for }X=p\end{array}\right.$$
	as residual and intrinsic, resp., tangent space coordinate at $-X\neq p\in S^{D-1}$ where $\cos\alpha = \langle X,p\rangle$, $\alpha \in [0,\pi)$. 



	\begin{Th}\label{unique-intr-mean:th} If $X$ a.s. is contained in an open half sphere, 
	it has a unique intrinsic mean which is assumed in the interior of that half sphere.
	\end{Th}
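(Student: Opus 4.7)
The plan is to work with the Fréchet function $F(p):=\mathbb E(d(p,X)^2)$ on $S^{D-1}$ and to rule out, in turn, minimizers in the opposite open hemisphere and on the equator, and then to establish uniqueness in the interior. Write $H=\{p\in S^{D-1}:\langle p,n\rangle>0\}$ for the given open half sphere with pole $n$. Continuity of $F$ on the compact $S^{D-1}$ gives existence of minimizers for free.

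\emph{Step 1: no minimizer has $\langle p,n\rangle<0$.} I would reflect such a $p$ across the equator: $\sigma(p):=p-2\langle p,n\rangle n$. For every $x\in H$ one has $\langle\sigma(p),x\rangle-\langle p,x\rangle=-2\langle p,n\rangle\langle n,x\rangle>0$, so $d(\sigma(p),x)<d(p,x)$, and integrating against $X$ yields $F(\sigma(p))<F(p)$.

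\emph{Step 2: no minimizer lies on the equator.} For $\mu$ with $\langle\mu,n\rangle=0$ I would tilt into $H$ along $\mu_t:=(\mu+tn)/\sqrt{1+t^2}$, $t>0$, and show
\[
\left.\tfrac{d}{dt}\right|_{t=0^+}\!F(\mu_t)\;=\;-2\,\mathbb E\!\left[\langle n,X\rangle\,\frac{d(\mu,X)}{\sin d(\mu,X)}\right]\;<\;0.
\]
The exchange of derivative and expectation is justified by dominated convergence once one checks that $\langle n,X\rangle/\sin d(\mu,X)$ stays bounded on $H$: the apparent singularity at $X\to-\mu$ cancels because, writing $X=-\mu+v\in S^{D-1}$ with $|v|$ small, the sphere constraint $|v|^2=2\langle\mu,v\rangle$ forces $\sin d(\mu,X)\sim|v|$ while $\langle n,X\rangle=\langle n,v\rangle$ is of the same order. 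Since $\langle n,X\rangle>0$ a.s., the derivative is strictly negative and $\mu$ cannot minimize.

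\emph{Step 3: uniqueness.} This is the main obstacle. First I would observe that $H$ is geodesically convex, because the spherical arc $\gamma(t)=(\sin((1-t)\theta)\,p+\sin(t\theta)\,q)/\sin\theta$ joining $p,q\in H$ has $\langle\gamma(t),n\rangle>0$. To conclude, I would invoke the classical uniqueness theorem for Riemannian Fréchet means of distributions supported in a closed geodesic ball of radius $\pi/2$ on the unit sphere (Buss--Fillmore, Kendall, Afsari), applied to the ball $\bar H\supset\mathrm{supp}\,X$. The delicate point is that $F$ need not be convex along geodesics in $H$: the Hessian of $d(\cdot,x)^2$ carries the factor $d\cot d$, which is negative when $d>\pi/2$, and two points of $H$ can be at intrinsic distance close to $\pi$. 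Hence uniqueness cannot be read off from pointwise convexity but must be extracted from the support condition via the cited ball-radius theorem, or, in a self-contained treatment, by a Jacobi-field comparison along the geodesic joining two hypothetical minimizers combined with the boundary computation of Step~2 to rule out critical points accumulating at $\partial H$.
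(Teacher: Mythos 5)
Your proposal takes essentially the same route as the paper's proof: your Step 1 reflection and Step 2 boundary derivative are precisely the paper's comparison of $p=(\sin\psi,p_2,\ldots,p_n)$ with $p'=(\sin|\psi|,p_2,\ldots,p_n)$ and its tilt $p(\psi)$ with strictly negative first-order term, and, like the paper, you then delegate uniqueness to the classical Karcher--Kendall theory (the paper invokes \citep[Theorem~7.3]{KWS90} once all minimizers are known to be interior). The only adjustment needed is in how you phrase that citation: uniqueness for support in a \emph{closed} ball of radius $\pi/2$ is false in general (uniform distribution on the equator, cf.\ Proposition~\ref{spherical_means:thm}), and invoking Afsari's open half-ball theorem would be circular since the statement being proved is exactly its sphere case (cf.\ Remark~\ref{Karcher-means:rm}), so the uniqueness step should be quoted as in the paper, with the interiority established in your Steps 1--2 doing the localizing work.
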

 
	\begin{proof} Below, we show that every intrinsic mean necessarily lies within the interior of the half sphere. Then, \citep[Theorem 7.3]{KWS90}  yields uniqueness. W.l.o.g. let $X = (\sin\phi, x_2,\ldots, x_n)$ such that ${\Prob}\{\sin\phi \leq 0\} = 0 =1- {\Prob}\{\sin\phi >0\}$ and assume  that $p = (\sin \psi, p_2,\ldots, p_n)\in S^{D-1}$ is an intrinsic mean, $-\pi/2\leq \phi,\psi \leq \pi/2$. Moreover let $p' = (\sin (|\psi|), p_2,\ldots, p_n)$. Since
	\begin{eqnarray*}
	\E\left(\|\exp^{-1}_p(X)\|^2\right) &=& \E\left(\arccos^2 \langle p,X\rangle\right)\\
	&=&\E\left(\arccos^2 \left(\sin\psi\sin\phi + \sum_{j=2}^np_jx_j\right)\right)\\
	&\geq& \E\left(\|\exp^{-1}_{p'}(X)\|^2\right)
	\end{eqnarray*}
	with equality if and only if $\sin |\psi| =\sin \psi$, this can only happen for $\sin\psi\geq 0$. Now, suppose that $p=(0,p_2,\ldots,p_n)$ is an intrinsic mean. For small deterministic $\psi \geq 0 $ consider $p(\psi) = (\sin \psi,p_1\cos\psi,\ldots,p_n\cos\psi)$. Then
	\begin{eqnarray*}
	\E\left(\|\exp^{-1}_{p(\psi)}(X)\|^2\right) 
		&=&\E\left(\arccos^2 \left(\sin\psi\sin\phi + \cos \psi \sum_{j=2}^np_jx_j\right)\right)\\
		&=& \E\left(\|\exp^{-1}_{p}(X)\|^2\right) - C_1\psi + O(\psi^2)
	\end{eqnarray*}
	with $C_1 >0$ since ${\Prob}\{\sin\phi >0\}>0$. In consequence, $p$ cannot be an intrinsic mean. Hence, we have shown that every intrinsic mean is contained in the interior of the half sphere.
	\end{proof}

	\begin{Rm}\label{Karcher-means:rm}
	For the special case of spheres, this is a simple proof for the general theorem recently established by \cite{Afsari10} which extends results of  \cite{Ka77,KWS90} and \cite{L01,L04}, stating that the intrinsic mean on a general manifold is unique if among others the support of the distribution is contained in a geodesic half ball.
	\end{Rm}

	The following theorem characterizes the three spherical means.

	\begin{Th}\label{conditions_for_spherical_means_th} Let $X$ be a random point on $S^{D-1}$. Then
	 $x^{(e)} \in S^{D-1}$ is the unique extrinsic mean if and only if the Euclidean mean $\mathbb E(X) =  \int_{S^{D-1}} X\,d\Prob_X$ is non-zero. 
	In that case 
	 $$\lambda^{(e)} x^{(e)} = \mathbb E(X)\,$$
	with $\lambda^{(e)} = \|\mathbb E(X)\|>0$. Moreover, there are suitable $\lambda^{(r)}>0$ and $\lambda^{(i)}>0$ such that every residual mean $x^{(r)} \in S^{D-1}$ satisfies 
	 $$\lambda^{(r)} x^{(r)} = \mathbb E\big(\langle X,x^{(r)}\rangle\, X\big)\,,$$
	 and every intrinsic mean $x^{(i)} \in S^{D-1}$ satisfies 
	 $$\lambda^{(i)} x^{(i)} = \mathbb E\left(\frac{\arccos\langle X,x^{(i)}\rangle}{\sqrt{1-\langle X,x^{(i)}\rangle^2 }}\, X\right)\,.$$
	In the last case we additionally require that $\E\left(\frac{\arccos\langle X,x^{(i)}\rangle}{\sqrt{1-\langle X,x^{(i)}\rangle^2 }}\, \langle X,x^{(i)}\rangle\right)> 0$ which is in particular the case if $X$ is a.s. contained in an open half sphere. 
	\end{Th}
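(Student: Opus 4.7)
The plan is to treat the three means separately, combining a direct expansion of the squared distance with a Lagrange-multiplier argument on $S^{D-1}$.

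For the extrinsic case, I would use the spherical identity $\|X-p\|^2 = 2 - 2\langle X, p\rangle$ to rewrite $\E(\rho^{(e)}(X,p)^2) = 2 - 2\langle \E(X), p\rangle$, so minimizing over $p\in S^{D-1}$ reduces to maximizing the linear functional $p\mapsto \langle \E(X), p\rangle$. By Cauchy--Schwarz this has a unique maximizer precisely when $\E(X)\neq 0$, namely $x^{(e)} = \E(X)/\|\E(X)\|$, which rearranges to the claimed identity; when $\E(X)=0$ the functional is constant and every sphere point is an extrinsic mean, accounting for the ``if and only if''.

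For the residual case, minimizing $\E(1-\langle X,p\rangle^2)=1-p^{\top}\E(XX^{\top})p$ on $S^{D-1}$ is a Rayleigh-quotient problem. Writing the first-order condition in ambient coordinates, the gradient must be parallel to the outward normal $p$, yielding the equation $\E(\langle X, x^{(r)}\rangle X)=\lambda^{(r)} x^{(r)}$ with $\lambda^{(r)}=\E(\langle X, x^{(r)}\rangle^2)\geq 0$. To rule out $\lambda^{(r)}=0$, I would argue that $\E(XX^{\top})$ is positive semi-definite with trace $\E\|X\|^2=1$, hence admits an eigenvector $q\in S^{D-1}$ with strictly positive eigenvalue; evaluating the functional at $q$ then gives a value strictly smaller than $1=\E(\rho^{(r)}(X,x^{(r)})^2)|_{\lambda^{(r)}=0}$, contradicting the minimizing property of $x^{(r)}$.

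For the intrinsic case, I would perform great-circle variations $p(t)=\cos(t)x^{(i)}+\sin(t)u$ with $u\in S^{D-1}$ orthogonal to $x^{(i)}$, and differentiate $t\mapsto \E(\arccos^2\langle X, p(t)\rangle)$ at $t=0$ via the chain rule. Writing $\alpha=\arccos\langle X, x^{(i)}\rangle$, the tangential first-order condition $\E(\tfrac{\alpha}{\sin\alpha}\langle X, u\rangle)=0$ for every tangent $u$ is equivalent to $\E(\tfrac{\alpha}{\sin\alpha}X)=\lambda^{(i)}x^{(i)}$ with $\lambda^{(i)}:=\E(\alpha\cot\alpha)$, which is exactly the stated identity since $\sqrt{1-\langle X,x^{(i)}\rangle^2}=\sin\alpha$ for $\alpha\in[0,\pi]$. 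The positivity of $\lambda^{(i)}$ under the half-sphere hypothesis is the main obstacle, since $\langle X, x^{(i)}\rangle$ itself need not be almost surely positive. The trick I would use is to test the critical-point equation against the pole $v$ of the open half sphere $H_v$ containing $X$ almost surely, rather than against $x^{(i)}$: Theorem~\ref{unique-intr-mean:th} guarantees $x^{(i)}\in H_v$, hence $\langle v, x^{(i)}\rangle>0$, and since $-x^{(i)}\notin H_v$ the variable $\alpha$ stays almost surely bounded away from $\pi$, keeping $\alpha/\sin\alpha$ strictly positive and integrable (with limit $1$ at $\alpha=0$). Taking the inner product of $\E(\tfrac{\alpha}{\sin\alpha}X)=\lambda^{(i)}x^{(i)}$ with $v$ yields $\E(\tfrac{\alpha}{\sin\alpha}\langle X, v\rangle)=\lambda^{(i)}\langle v, x^{(i)}\rangle$; the left-hand side is the expectation of an almost surely strictly positive quantity because $\langle X, v\rangle>0$ a.s., so $\lambda^{(i)}>0$.
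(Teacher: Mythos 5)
Your proposal is correct and, in outline, follows the same first-order/Lagrange strategy as the paper, but several of your details genuinely diverge and in fact fill in points the paper handles by citation or leaves implicit. For the extrinsic mean the paper simply refers to \cite{HLR96}, whereas you give the short direct argument ($\E\|X-p\|^2=2-2\langle\E(X),p\rangle$ plus Cauchy--Schwarz), which is self-contained and also yields the ``if and only if'' cleanly. For the residual mean both you and the paper arrive at the Lagrange condition with $\lambda^{(r)}=\E\big(\langle X,x^{(r)}\rangle^2\big)$, but the positivity is settled differently: the paper argues that $\lambda^{(r)}=0$ would force $X$ to be supported on the hypersphere orthogonal to $x^{(r)}$ and then invokes Proposition \ref{res_means_sphere:prop} to conclude $x^{(r)}$ could not be a residual mean, while you observe that a residual mean is a top eigenvector of $\E(XX^T)$, whose largest eigenvalue is at least $\tr\E(XX^T)/D=1/D>0$; your route is more elementary and avoids the earlier proposition. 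For the intrinsic mean the paper just says the Lagrange method, ``taking into account Theorem \ref{unique-intr-mean:th}'', yields the assertion, without spelling out why $\lambda^{(i)}=\E\big(\alpha\cot\alpha\big)>0$ under the half-sphere hypothesis; your pole-testing trick --- pairing the critical-point identity with the pole $v$ of the half sphere, using Theorem \ref{unique-intr-mean:th} to get $\langle v,x^{(i)}\rangle>0$ and the a.s. bound $\alpha<\pi$ to keep $\alpha/\sin\alpha$ positive and bounded --- is a correct and welcome completion of that step (note that testing with $x^{(i)}$ itself would not do, since $\langle X,x^{(i)}\rangle$ need not be a.s. positive). The only caveat, shared with the paper's own sketch, is that interchanging differentiation and expectation in the great-circle variation needs the integrand's derivative to be controlled near the cut point $-x^{(i)}$; under the half-sphere hypothesis this is automatic, and in general it is exactly the kind of integrability the theorem's ``additionally require'' clause quietly presupposes.
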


	\begin{proof} The assertions for the extrinsic mean are well known from \cite{HLR96}. The second assertion for residual means follows from minimization of 
	$$ \int_{S^{D-1}} \|p-\langle p,x\rangle x\|^2\,d\Prob_X(p) = 1 - \int_{S^{D-1}} \langle p,x\rangle^2\,d\Prob_X(p)$$
	with respect to $x\in \mathbb R^D$ under the constraining condition $\|x\|=1$. Using a Lagrange ansatz this leads to the necessary condition
	$$ \int_{S^{D-1}} \langle p,x\rangle\, p\,d\Prob_X(p) = \lambda x$$
	with a Lagrange multiplier $\lambda$ of value $\E(\langle X,x\rangle^2)$ which is positive unless $X$ is supported by the hypersphere orthogonal to $x$. In that case, by Proposition \ref{res_means_sphere:prop}, $x$ cannot be a residual mean of $X$, as every residual mean is as well contained in that hypersphere. Hence, we have $\lambda^{(r)}:=\lambda >0$. 

	The Lagrange method applied to
	$$ \int_{S^{D-1}} \|\exp^{-1}_x(p)\|^2 \,d\Prob_X(p) = \int_{S^{D-1}} \arccos^2(\langle p,x\rangle)\,d\Prob_X(p) $$
	taking into account Theorem \ref{unique-intr-mean:th}, insuring that $x^{(i)}$  is in the open half sphere that contatins $X$ a.s., yields the third assertion on the intrinsic mean.
	\end{proof}

	Recall that residual means are eigenvectors to the largest eigenvalue of the matrix $\mathbb E(XX^T)$. As such, they rather reflect the mode than the classical mean of a distribution:

	\begin{Ex}\label{different_means_ex}
	 Consider $\gamma \in (0,\pi)$ and a random variable $X$ on the unit circle $\{e^{i\theta}: \theta \in [0,2\pi)\}$ which takes the value $1$ with probability $2/3$ and $e^{i\gamma}$ with probability $1/3$. Then, explicit computation gives the unique intrinsic and extrinsic mean as well as the two residual means
	$$ x^{(i)} = e^{i\frac{\gamma}{3}},\quad x^{(e)} = e^{i\arctan\frac{\sin \gamma}{2+\cos\gamma}},\quad x^{(r)} = \pm\, e^{i\frac{1}{2}\,\arctan\frac{\sin (2\gamma)}{2+\cos(2\gamma)}}\,.$$
	Figure \ref{different_means_fig} shows the case $\gamma = \frac{\pi}{2}$.
	\end{Ex}
 
	\begin{figure}
	\begin{minipage}{0.5\textwidth}
	\includegraphics[width=0.95\textwidth]{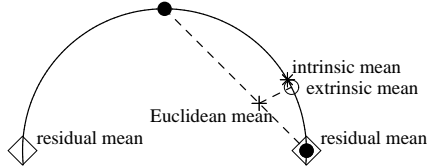}
	\end{minipage}
	\begin{minipage}{0.49\textwidth}
 	\caption{\it Means on a circle of a distribution taking the upper dotted value with probability $1/3$ and the lower right dotted value with probability $2/3$. The latter happens to be one of the two residual means.\label{different_means_fig}}
	\end{minipage}
	\end{figure}

	In contrast to Figure \ref{different_means_fig}, one may assume in many practical applications that the mutual distances of the unique intrinsic mean $x^{(i)}$, the unique extrinsic mean $x^{(e)}$ and the unique residual mean $x^{(r_0)}$ closer to $x^{(e)}$ are rather small, namely of the same order as the squared proximity of the modulus  
 	$\|\mathbb E(X)\|$  of the Euclidean mean to $1$, cf. Table \ref{1:3:tab}. We will use the following condition
	\begin{eqnarray}\label{reg_extr_mean_cond}\left.\begin{array}{rcl}
	 \|x^{(e)} -x^{(r_0)}\|\,,~~  \|x^{(e)} -x^{(i)}\|&=& 
	O\big((1-\|\E(X)\|)^2\big) 
	\end{array}\right. 
	\end{eqnarray}
	with the \emph{concentration parameter} $1-\|\E(X)\|$.

	\begin{Cor}\label{3_means_on_circle_cor}
	 	Under condition (\ref{reg_extr_mean_cond}), if all three means are unique, then the great circular segment between the residual  mean $x^{(r_0)}$ closer to the extrinsic mean $x^{(e)}$ and the intrinsic mean $x^{(i)}$ is divided by the extrinsic mean in approximation by the ratio $1:3$:
		\begin{eqnarray*}
	x^{(r_0)} &=& \frac{\|\E(X)\|}{\lambda^{(r)}}\left(x^{(e)} - \frac{\mathbb E\big( \langle X-x^{(e)} ,X\rangle\,X\big)}{\|\E(X)\|} + O\big((1-\|\E(X)\|)^2\big)\right)\\
	x^{(i)} &=& \frac{\|\E(X)\|}{\lambda^{(i)}}\left(x^{(e)} + \frac{1}{3}\, \frac{\mathbb E\big( \langle X-x^{(e)} ,X\rangle\,X\big)}{\|\E(X)\|} + O\big((1-\|\E(X)\|)^2\big)\right)\,
		\end{eqnarray*}
	with $\lambda^{(i)}$ and $\lambda^{(r)}$ from Theorem \ref{conditions_for_spherical_means_th}.
	\end{Cor}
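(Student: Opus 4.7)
The strategy is to start from the first-order characterizations of each mean provided by Theorem \ref{conditions_for_spherical_means_th}, Taylor expand the integrands about $t = \langle X, x^{(e)}\rangle \approx 1$, and then invoke the concentration hypothesis (\ref{reg_extr_mean_cond}) to interchange $x^{(r_0)}$ or $x^{(i)}$ with $x^{(e)}$ wherever this only affects the $O((1-\|\E(X)\|)^2)$ error. The single algebraic identity used throughout is $\|X\|^2 = 1$, which lets us rewrite
\begin{equation*}
\E\big(\langle X, x^{(e)}\rangle X\big) \;=\; \E(X) - \E\big(\langle X - x^{(e)}, X\rangle X\big) \;=\; \|\E(X)\|\,x^{(e)} - \E\big(\langle X - x^{(e)}, X\rangle X\big).
\end{equation*}

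For the residual mean the argument is almost algebraic. Theorem \ref{conditions_for_spherical_means_th} yields $\lambda^{(r)} x^{(r_0)} = \E(\langle X, x^{(r_0)}\rangle X)$; by Cauchy--Schwarz the discrepancy incurred by substituting $x^{(e)}$ for $x^{(r_0)}$ in the integrand is bounded in norm by $\|x^{(r_0)} - x^{(e)}\|\,\E(\|X\|^2)$, which is $O((1-\|\E(X)\|)^2)$ under (\ref{reg_extr_mean_cond}). Applying the identity above gives the first asserted formula after dividing through by $\lambda^{(r)}$.

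For the intrinsic mean the analytic ingredient is the expansion
\begin{equation*}
g(t) := \frac{\arccos t}{\sqrt{1-t^2}} \;=\; 1 + \frac{1-t}{3} + O\big((1-t)^2\big), \qquad t \to 1,
\end{equation*}
obtained from $\alpha/\sin\alpha = 1 + \alpha^2/6 + O(\alpha^4)$ combined with $\alpha^2 = 2(1-\cos\alpha) + O((1-\cos\alpha)^2)$ under $\alpha = \arccos t$. Substituting $t = \langle X, x^{(i)}\rangle$ into the intrinsic characterization and integrating term-by-term yields
\begin{equation*}
\lambda^{(i)} x^{(i)} \;=\; \E(X) + \tfrac{1}{3}\,\E\big((1 - \langle X, x^{(i)}\rangle)\, X\big) + (\text{remainder}).
\end{equation*}
Using (\ref{reg_extr_mean_cond}) to exchange $x^{(i)}$ for $x^{(e)}$ in the correction term (at cost $O((1-\|\E(X)\|)^2)$) and rewriting $1-\langle X, x^{(e)}\rangle = \langle X - x^{(e)}, X\rangle$ via $\|X\|^2 = 1$ gives the second asserted formula.

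The main obstacle is showing that the Taylor remainder in $g$ really contributes at order $O((1-\|\E(X)\|)^2)$: this amounts to controlling $\E((1-\langle X, x^{(i)}\rangle)^2\,\|X\|)$, i.e.\ a fourth-moment-type quantity in $\|X-x^{(e)}\|$, by the square of the second moment $\E(\|X-x^{(e)}\|^2) = 2(1-\|\E(X)\|)$. This is exactly the high-concentration asymptotic regime encoded by (\ref{reg_extr_mean_cond}) — implicitly one is taking a scaling limit of distributions tightly clustered about $x^{(e)}$, so that the support shrinks at the same rate as the mean-deviation bound. The ``$1{:}3$'' interpretation now reads off the two formulas directly: the leading deviations $x^{(r_0)} - x^{(e)}$ and $x^{(i)} - x^{(e)}$ are antiparallel along the vector $\E(\langle X - x^{(e)}, X\rangle X)$ with magnitudes in proportion $1:\tfrac{1}{3}$, so $x^{(e)}$ divides the great-circular segment from $x^{(r_0)}$ to $x^{(i)}$ in the claimed ratio.
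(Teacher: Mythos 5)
Your argument is correct and is essentially the paper's own proof: both start from the Lagrange-type characterizations of Theorem \ref{conditions_for_spherical_means_th}, rewrite $\E(\langle X,x\rangle X)=\E(X)-\E\big((1-\langle X,x\rangle)X\big)$ using $\|X\|=1$, expand $\arccos t/\sqrt{1-t^2}=1+\tfrac{1}{3}(1-t)+O\big((1-t)^2\big)$, and swap the base point $x^{(r_0)}$ resp.\ $x^{(i)}$ for $x^{(e)}$ in the correction term at cost $O\big((1-\|\E(X)\|)^2\big)$ via (\ref{reg_extr_mean_cond}). The only difference is presentational: you explicitly flag the control of the quadratic remainder $\E\big((1-\langle X,x^{(i)}\rangle)^2\big)$ by $(1-\|\E(X)\|)^2$ as an implicit high-concentration assumption, a point the paper's proof absorbs silently into its ``$+\ldots$'' term.
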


	\begin{proof}	
	 For any $x,p \in S^{D-1}$ decompose
	$ p - x = p - \langle x,p\rangle\, x - z(x,p) x $ with   $z(x,p) = 1 - \langle x,p\rangle$, the length of the part of $p$ normal to the tangent space at $x$. Note that $\E\big(z(x^{(e)},X\big) = 1-\|\E(X)\|$.
	Now, under condition (\ref{reg_extr_mean_cond}), verify the first assertion using Theorem \ref{conditions_for_spherical_means_th}:
	\begin{eqnarray*}\label{approx_extr_res_mean}\nonumber
	 x^{(r0)} &=&\frac{1}{\lambda^{(r)}}\left(\E(X) - \E\big(z(x^{(r_0)},X)X\big)\right)\,.
	\end{eqnarray*}
	On the other hand since
	\begin{eqnarray*}
	 \frac{\arccos (1-z)}{\sqrt{1-(1-z)^2 }} &=& 1 + \frac{1}{3}\,z +\frac{2}{15}\,z^2 +\ldots
	\end{eqnarray*}
	we obtain with the same argument the second assertion
	\begin{eqnarray*} 
	 x^{(i)} &=& \frac{1}{\lambda^{(i)}} \E \left(\frac{\arccos\langle X,x^{(i)}\rangle}{\sqrt{1-\langle X,x^{(i)}\rangle^2 }}\, X\right)\\ 
	&=& \frac{1}{\lambda^{(i)}}\left(\E(X)+ \frac{1}{3}\, \E\left( z(x^{(i)},X)\,X\right) + \frac{2}{15}\, \E\left( z(x^{(i)},X)^2\,X\right) +\ldots\right)\\ 
	&=& \frac{\|\E(X)\|}{\lambda^{(i)}}\left(x^{(e)} +\frac{1}{3\|\E(X)\|}\, \E \left(z(x^{(e)},X)\,X\right) + O\big((1-\|\E(X)\|)^2\big)\right)\,. 
	\end{eqnarray*}
	\end{proof}

	\begin{Rm} The tangent vector defining the great circle approximately connecting the three means is obtained from correcting with the expected normal component of any of the means. As numerical experiments show, this great circle is different from the first \emph{principal component geodesic} as defined in \cite{HZ06}. 
	\end{Rm}

	Recall the following connection between top and quotient space means, 
	cf. Theorem 
	\ref{pop_int_mean_bottom_int_mean_top:Th}. 

	\begin{Rm}\label{kend-mf-quot-mean:rm} Let $p \in S_m^k$ such that a random shape $X$ on $\Sigma_m^k$ is supported by $(\Sigma_m^k)^{([p])}\cup A$ with $A\subset \Sigma_m^k$ at most countable. By Lemma \ref{cut_locus:lem} and Theorem \ref{pop_int_mean_bottom_int_mean_top:Th}, $(\Sigma_m^k)^{([p])}\cup A$ admits a horizontal measurable lift $L\subset S_m^k$ in optimal position to $p\in S_m^k$. Define the random variable $Y$ on $L\subset S_m^k$ by $\pi \circ Y = X$. Then we have that 
	\begin{enumerate}
	 \item[] if   $[p]$ is an intrinsic mean of $X$ then $p$ is an intrinsic mean of $Y$,
	 \item[] if   $[p]$ is a full Procrustean mean of $X$ then $p$ is a residual mean of $Y$,
	 \item[] if   $[p]$ is a Ziezold mean of $X$ then $p$ is an extrinsic mean of $Y$.
	\end{enumerate}
	\end{Rm}

	In consequence, Corollary \ref{3_means_on_circle_cor} extends at once to Kendall's shape spaces. \emph{Generalized geodesics} referred to below are an extension of the concept of geodesics to non-manifold shape spaces, cf. \cite{HHM07}. 

	\begin{Cor}\label{3_means_on_kendall_cor}
	 Suppose that a random shape $X$ on $\Sigma_m^k$ with unique intrinsic mean $\mu^{(i)}$, unique Ziezold mean $\mu^{(z)}$ and unique Procrustean mean $\mu^{(p)}$
	is supported by $R=(\Sigma_m^k)^{(\mu^{(i)})}\cap(\Sigma_m^k)^{(\mu^{(z)})}\cap (\Sigma_m^k)^{(\mu^{(p)})}$. 
	If the 	means are sufficiently close to each other in the sense of
	 $$d_{\Sigma_m^k}(\mu^{(z)} -\mu^{(p)})\,,~~  d_{\Sigma_m^k}(\mu^{(z)} -\mu^{(i)})~=~ 
	O\big((1-\|\E(Y)\|)^2\big) $$ 
	with the random pre-shape $Y$ on a horizontal lift $L$ of $R$ defined by $X = \pi \circ Y$,
	then the generalized geodesic segment between $\mu^{(i)}$ and $\mu^{(p)}$ is approximately divided by $\mu^{(z)}$ by the ratio $1:3$ with an error of order  $O\big((1-\|\E(Y)\|)^2\big)$.
	\end{Cor}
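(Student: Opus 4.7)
The plan is to lift the three shape means to pre-shapes on the sphere $S_m^k$ that are in mutual optimal position, apply Corollary \ref{3_means_on_circle_cor} there, and then project the resulting great-circle arc back down to $\Sigma_m^k$ using the fact that horizontal great circles on $S_m^k$ project to generalized geodesics in Kendall's shape space.

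First, choose a pre-shape $p^{(z)}\in S_m^k$ of $\mu^{(z)}$. By Lemma \ref{cut_locus:lem} the quotient cut locus is empty, and by the support assumption on $X$ together with Theorem \ref{glob_hor_measurable_lift:thm}, there is a measurable horizontal lift $L \subset S_m^k$ of $R$ (enlarged by the countable set consisting of the shapes $\mu^{(i)},\mu^{(z)},\mu^{(p)}$) in optimal position to $p^{(z)}$. Define $Y$ on $L$ by $\pi \circ Y = X$. Let $p^{(i)}, p^{(p)}\in L$ be the lifts of $\mu^{(i)},\mu^{(p)}$ in $L$; by Remark \ref{kend-mf-quot-mean:rm} the points $p^{(i)}$, $p^{(z)}$, $p^{(p)}$ are, respectively, an intrinsic, extrinsic and residual mean of the pre-shape-valued random variable $Y$ on $S_m^k$. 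Because all lifts lie in $L$ and are therefore in optimal position to $p^{(z)}$, one has $d_{S_m^k}(p^{(z)},p^{(i)})=d_{\Sigma_m^k}(\mu^{(z)},\mu^{(i)})$ and similarly for the $\mu^{(z)}$--$\mu^{(p)}$ pair, so the closeness hypothesis on the three shape means translates verbatim into the closeness hypothesis (\ref{reg_extr_mean_cond}) on $S_m^k$ with the same concentration parameter $1-\|\E(Y)\|$.

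Second, apply Corollary \ref{3_means_on_circle_cor} to $Y$ on $S_m^k\subset\mathbb R^{m(k-1)}$. This yields that $p^{(z)}$ divides the great-circle segment from $p^{(i)}$ to $p^{(p)}$ approximately in the ratio $1:3$ with an error of order $O\big((1-\|\E(Y)\|)^2\big)$. Let $\gamma$ denote the great circle on $S_m^k$ through $p^{(i)}$, $p^{(z)}$, $p^{(p)}$ (parametrized by arclength); since each pair of its three marked points is in optimal position, a standard differentiation of the squared distance shows that the initial velocity of $\gamma$ at $p^{(z)}$ is horizontal, i.e. $\gamma$ is a horizontal geodesic of the Riemannian submersion $\pi: (S_m^k)^*\to (\Sigma_m^k)^*$.

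Third, project $\gamma$ to $\Sigma_m^k$ via $\pi$. Horizontal great circles project to geodesics on the manifold part, and their extensions through singular strata are precisely the generalized geodesics of \cite{HHM07}; hence $\pi\circ\gamma$ is a generalized geodesic joining $\mu^{(i)}$ and $\mu^{(p)}$ through $\mu^{(z)}$. Along horizontal curves of a Riemannian submersion, arclength is preserved, so the $1:3$ ratio and the $O\big((1-\|\E(Y)\|)^2\big)$ error order pass verbatim from $\gamma$ to $\pi\circ\gamma$, which is the claimed statement.

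The main obstacle I anticipate is the consistent choice of pre-shape representatives: all three lifted means must lie in a common horizontal lift so that spherical and shape-space distances between them coincide and so that the great circle through them is horizontal. This is exactly what Theorem \ref{glob_hor_measurable_lift:thm} combined with the emptiness of the quotient cut locus (Lemma \ref{cut_locus:lem}) is designed to deliver, with the additional countable set $A$ absorbing the three mean shapes. Everything else is a direct transfer of Corollary \ref{3_means_on_circle_cor} along the Riemannian submersion.
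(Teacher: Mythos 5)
Your argument is essentially the paper's own: the paper establishes this corollary by exactly the same lift-and-transfer route, namely Remark \ref{kend-mf-quot-mean:rm} (horizontally lifting the intrinsic, Ziezold and Procrustean shape means to intrinsic, extrinsic and residual means of the pre-shape variable $Y$) followed by the observation that Corollary \ref{3_means_on_circle_cor} then ``extends at once'' to $\Sigma_m^k$. Your write-up simply makes explicit what the paper leaves implicit -- a common horizontal lift in optimal position, equality of spherical and shape distances for points of that lift, and projection of the (horizontal) great circle to a generalized geodesic -- so it is the same proof in more detail.
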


\section{Examples: Exemplary Datasets and Simulations}\label{class_data_simulations_scn}
	All of the results of this section are based on datasets and simulations, i.e., all means considered are sample means. An R-package for the computation of all means including the poplar leaves data can be found under \cite{Hshapes}.

	\subsection{The $1:3$ property}
	In the first example we illustrate Corollary \ref{3_means_on_kendall_cor} on the basis of four classical data sets:
	\begin{description}
	\item[poplar leaves:] contains 104 quadrangular planar shapes extracted from poplar leaves in a joint collaboration with Institute for Forest Biometry and Informatics at the University of G\"ottingen, cf. \cite{Hshapes, HHM09}.
	\item[digits '3':] contains 30 planar shapes with 13 landmarks each, extracted from handwritten digits '3', cf. \citep[p. 318]{DM98}.
	\item[macaque skulls:] contains three-dimensional shapes with 7 landmarks each, of 18 macaque skulls, cf. \citep[p. 16]{DM98}. 
	\item[iron age brooches:] contains 28 three-dimensional tetrahedral shapes of iron age brooches, cf. \citep[Section 3.5]{S96}. 
	\end{description}

	\begin{figure}[h!]\centering
	 \includegraphics[angle=-90,width=0.45\textwidth]{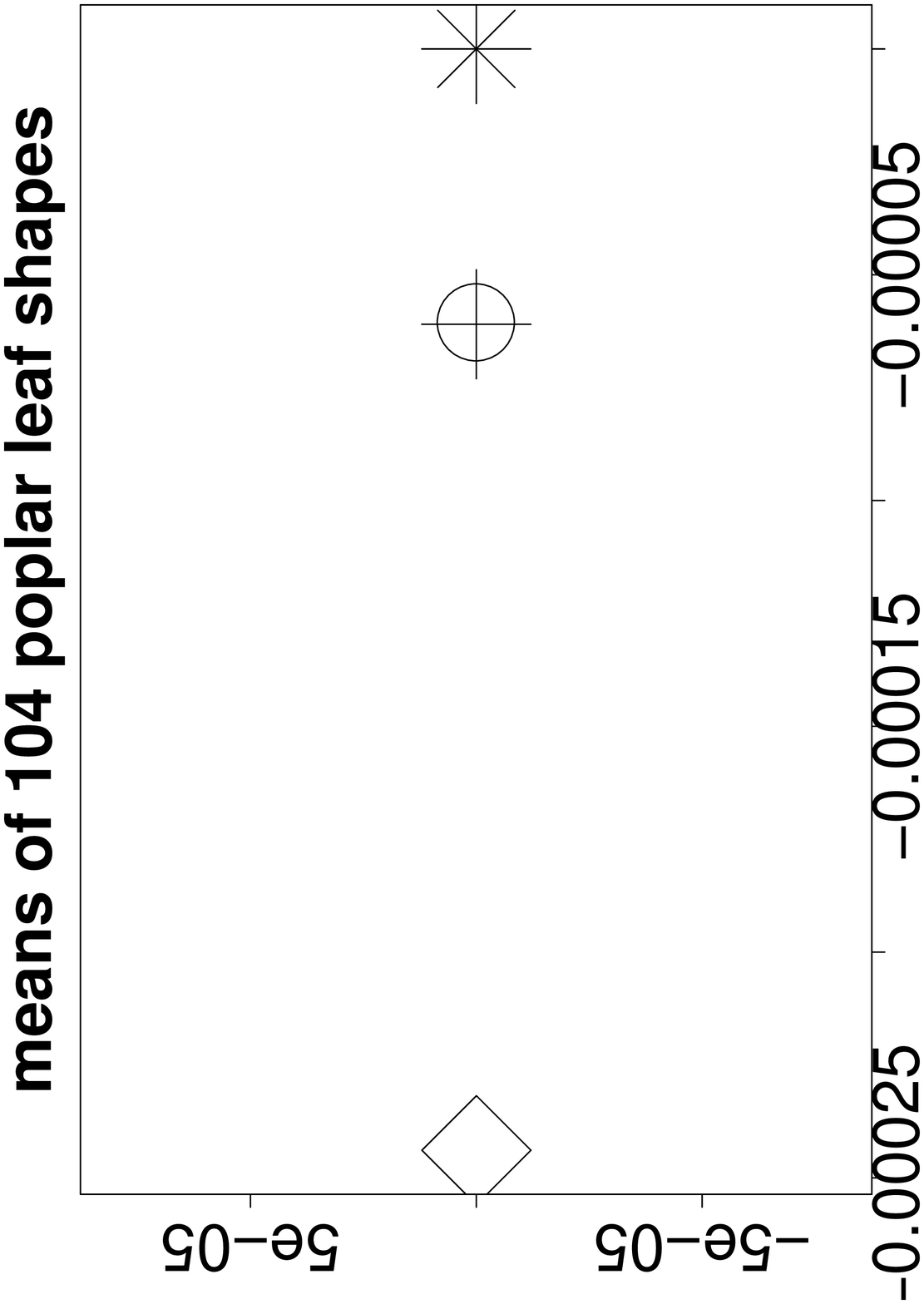}
	 \includegraphics[angle=-90,width=0.45\textwidth]{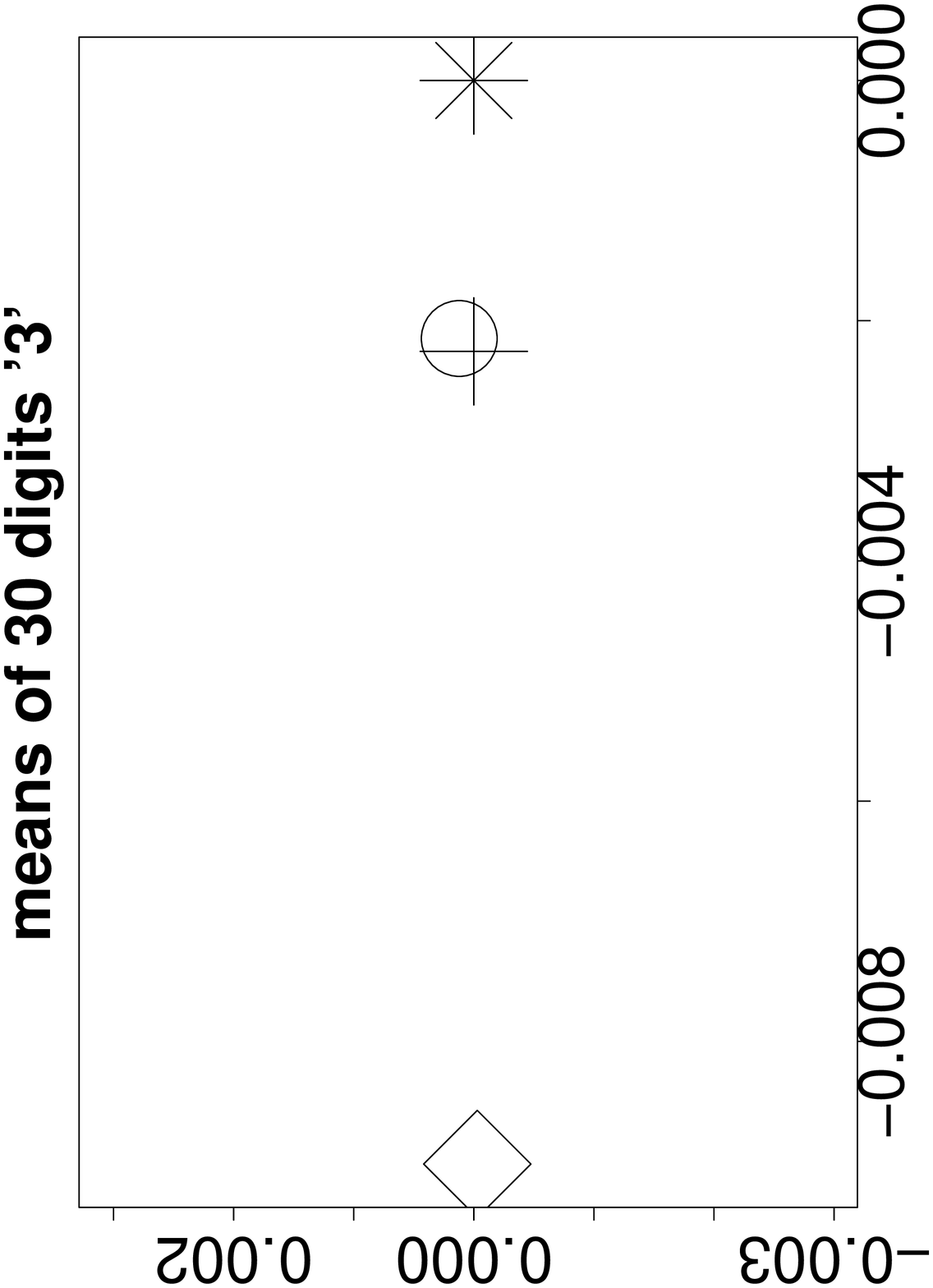}
	\\
	 \includegraphics[angle=-90,width=0.45\textwidth]{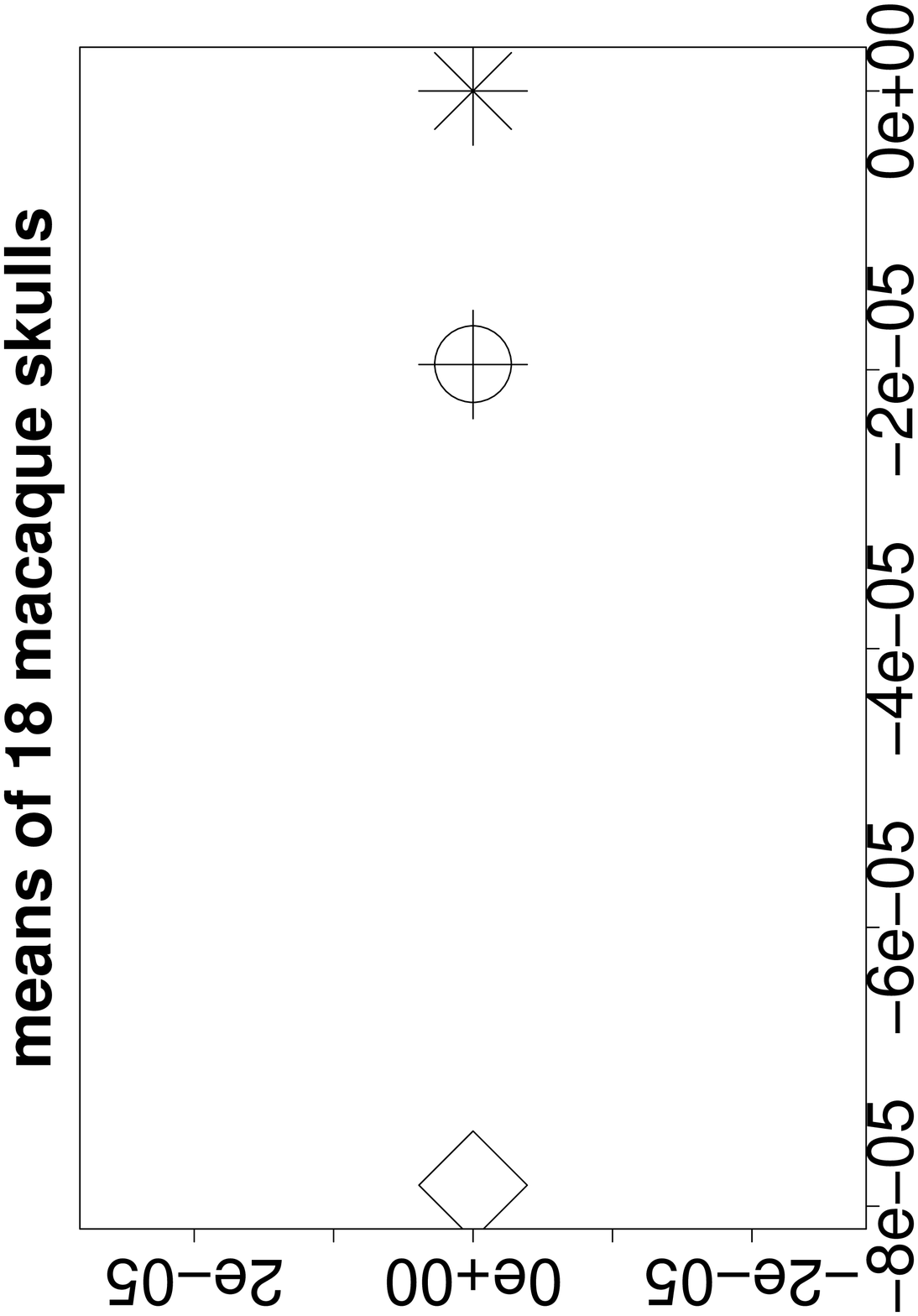}
	 \includegraphics[angle=-90,width=0.45\textwidth]{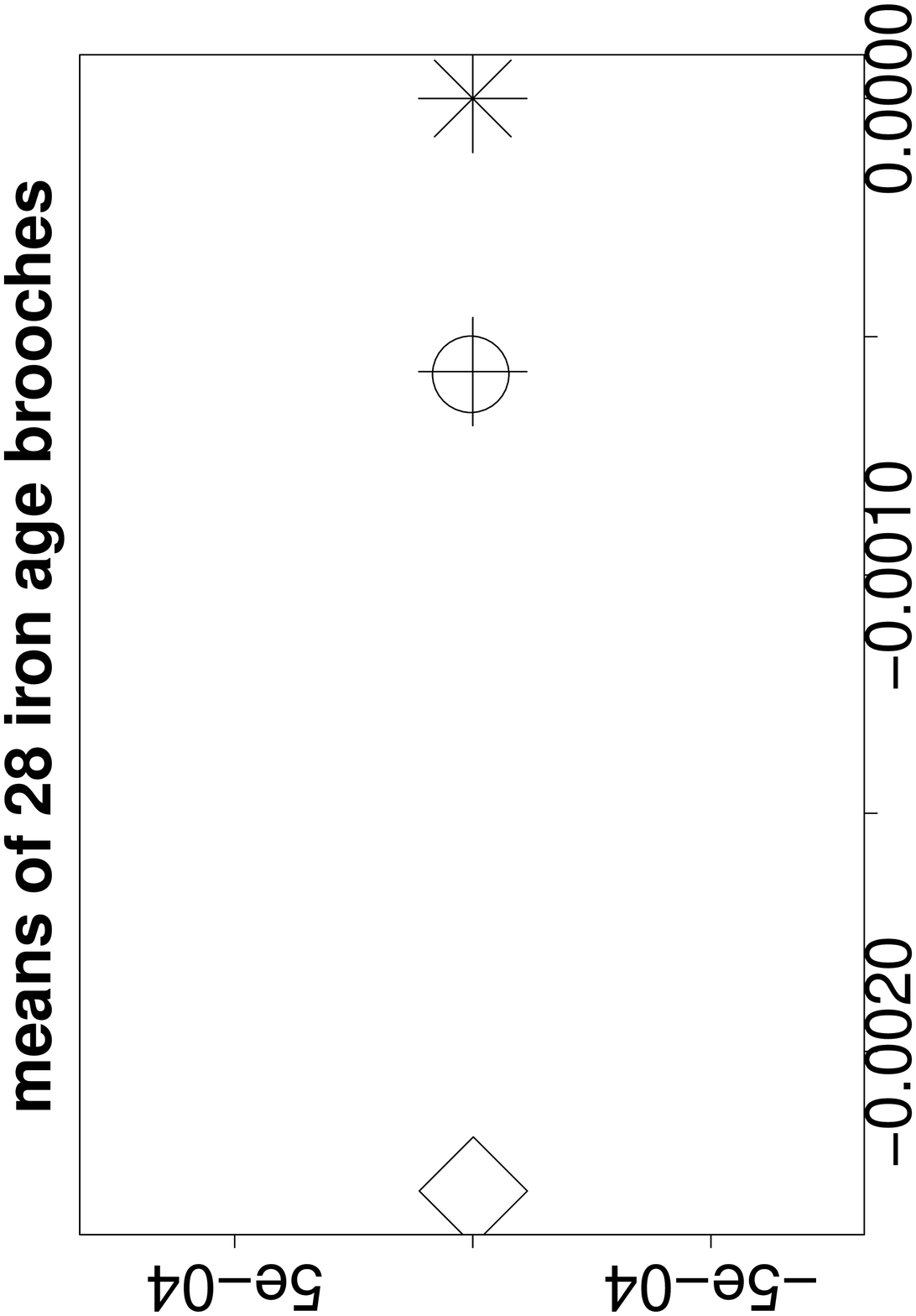}
	\caption{\it Depicting shape means for four typical data sets: intrinsic (star), Ziezold (circle) and full Procrustes (diamond) projected to the tangent space at the intrinsic mean. The cross divides the generalized geodesic segment joining the intrinsic with the full Procrustes mean by the ratio $1:3$.\label{three_means:fig}}
	\end{figure}
	
	 \begin{table}[h!]\centering\fbox
	{
	\footnotesize
	 $\begin{array}{r|ccc|l}
	  \mbox{data set}
	&d_{\Sigma_m^k}(\mu^{(i)},\mu^{(z)})&d_{\Sigma_m^k}(\mu^{(p)},\mu^{(z)})&d_{\Sigma_m^k}(\mu^{(p)},\mu^{(i)})&(1-\|\E(Y)\|)^2\\
	\hline
	\mbox{poplar leaves}&6.05e-05&1.83e-04&2.44e-04&5.24e-05\\
	\mbox{digits '3'}&0.00154&0.00452&0.00605& 0.00155\\
	\mbox{macaque skulls}&1.96e-05&5.89e-05&7.85e-05&7.59e-06\\
	\mbox{iron age brooches}&0.000578&0.001713&0.002291&0.000217
	 \end{array}$}
	\caption{\it Mutual shape distances between intrinsic mean $\mu^{(i)}$, Ziezold mean $\mu^{(z)}$ and full Procrustes mean $\mu^{(p)}$ for various data sets. Last column: the concentration parameter from (\ref{reg_extr_mean_cond}), cf. also Corollary \ref{3_means_on_kendall_cor}.\label{1:3:tab}}
	\end{table}
	
	As clearly visible from Figure \ref{three_means:fig} and Table \ref{1:3:tab}, the approximation of Corollary \ref{3_means_on_kendall_cor} for two- and three-dimensional shapes is highly accurate for data of little dispersion (the macaque skull data) and still fairly accurate for highly dispersed data (the digits '3' data).

	\subsection{``Partial Blindness'' of full Procrustes and Schoenberg Means}\label{blindness:scn}
	In the second example we illustrate an effect of ``blindness to data'' of full Procrustes means and Schoenberg means. The former blindness is due to the affinity of the Procrustes mean to the mode in conjunction with curvature, the latter is due to non-isometry of the Schoenberg embedding. While the former effect occurs only for some highly dispersed data when the analog of condition (\ref{reg_extr_mean_cond}) is violated, the latter effect is local in nature and may occur for concentrated data as well.

	\begin{figure}[h!]
	\begin{minipage}{0.6\textwidth}\centering
	 \includegraphics[angle=-90,width=1\textwidth]{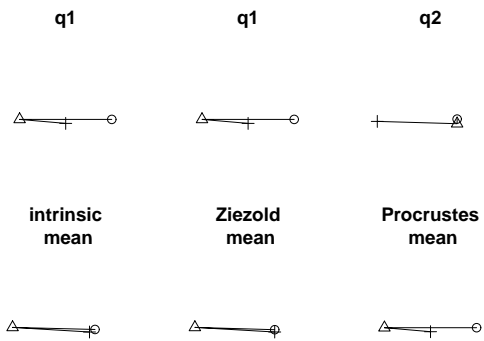}
	\end{minipage}
	\begin{minipage}{0.05\textwidth}\hfill \end{minipage}
	\begin{minipage}{0.33\textwidth}
	\caption{\it A data set of three planar triangles (top row) with its corresponding intrinsic mean (bottom left), Ziezold mean (bottom center) and full Procrustes mean (bottom right).\label{procrustes_blind:fig}}
	\end{minipage}
	\end{figure}

	Reenacting the situation of Section \ref{Procrustes-convex:scn}, cf. also Example \ref{different_means_ex} and Figure \ref{different_means_fig}, the shapes of the triangles $q_1$ and $q_2$ in Figure \ref{procrustes_blind:fig} are almost maximally remote. Since the mode $q_1$ is assumed twice and $q_2$ only once, the full Procrustes mean is nearly blind to $q_2$. 

	\begin{figure}[h!]
	\begin{minipage}{0.6\textwidth}
	\centering
	 \includegraphics[angle=-90,width=0.3\textwidth]{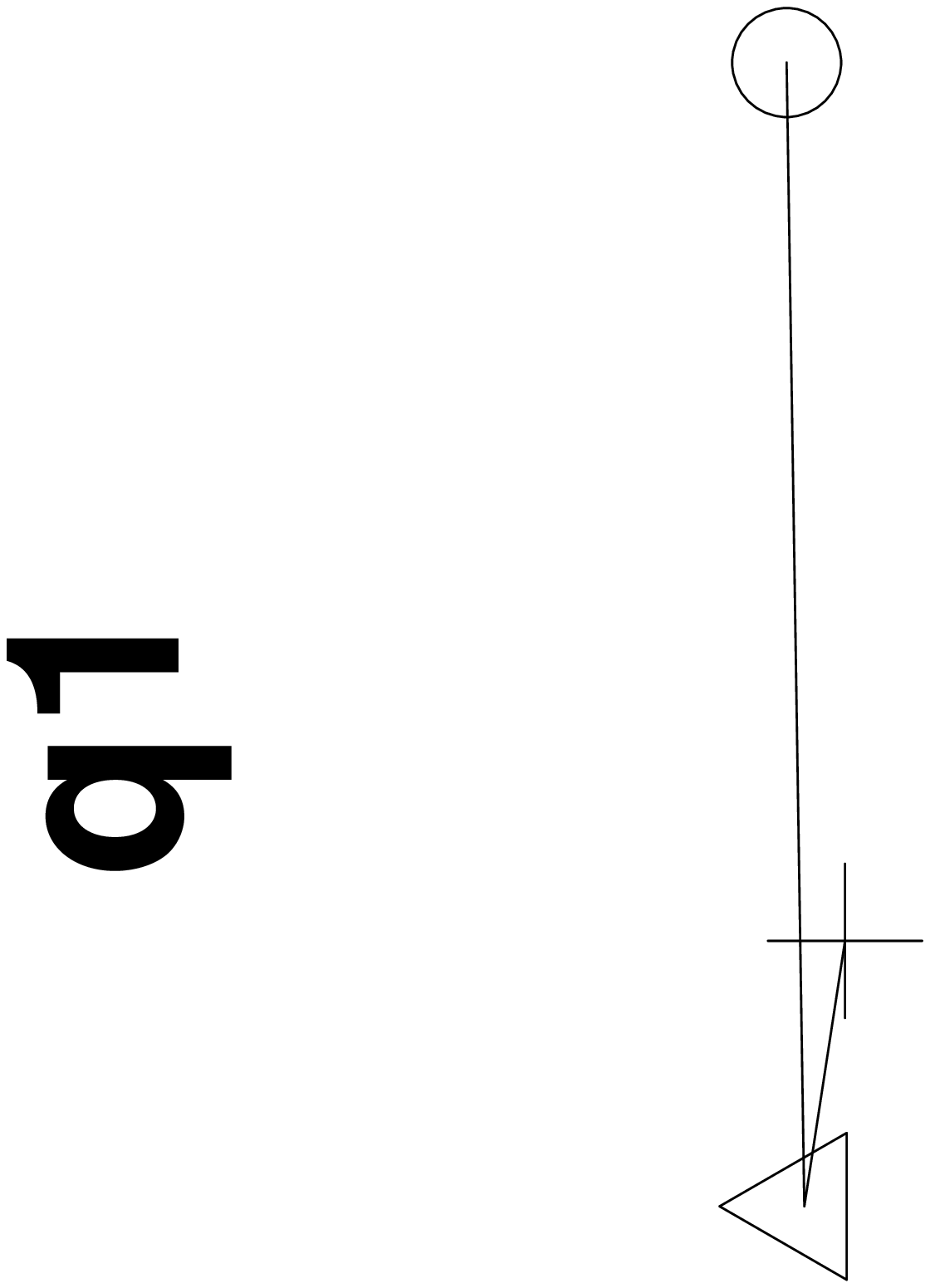}
	 \includegraphics[angle=-90,width=0.3\textwidth]{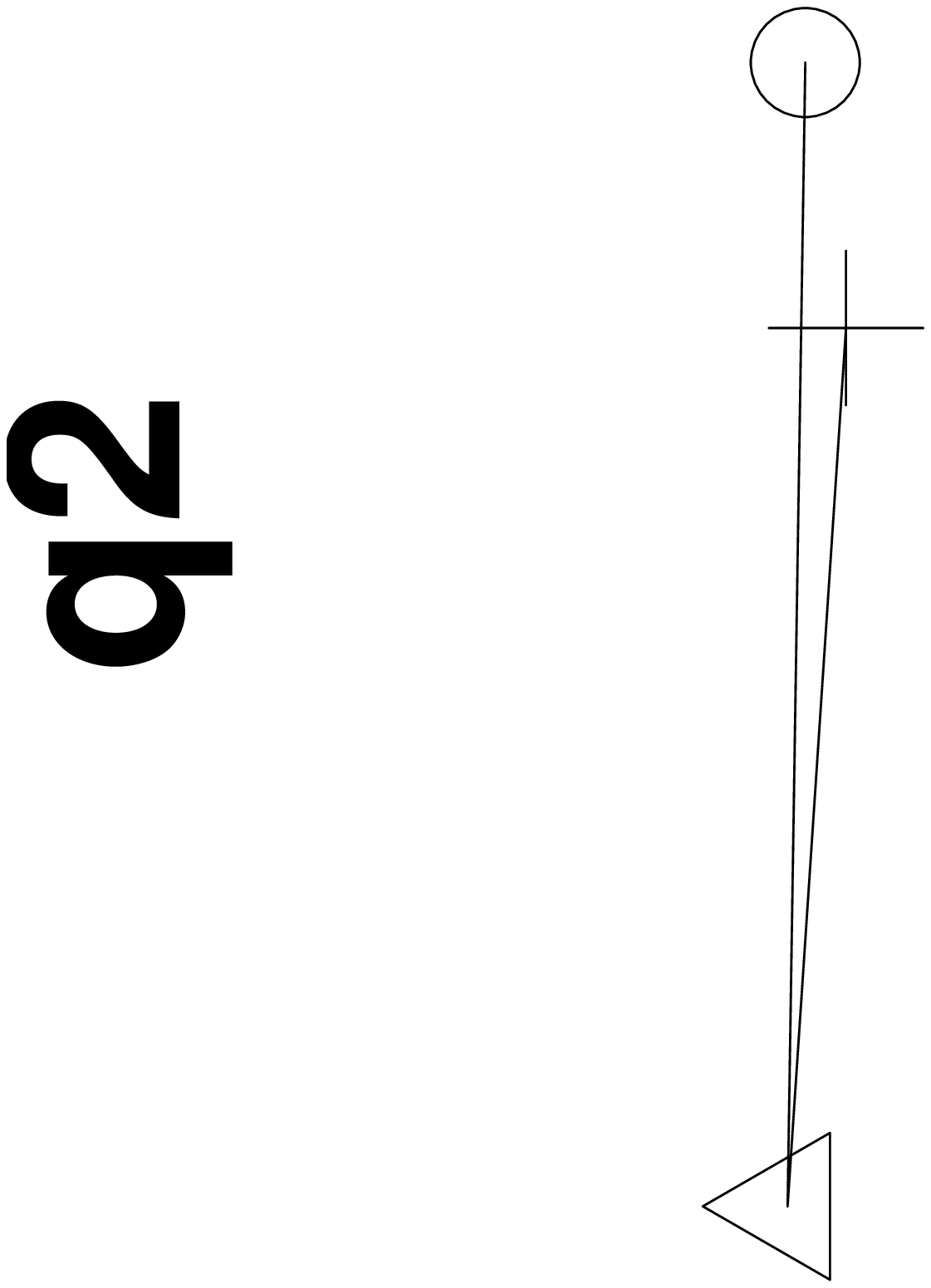}
	 \includegraphics[angle=-90,width=0.3\textwidth]{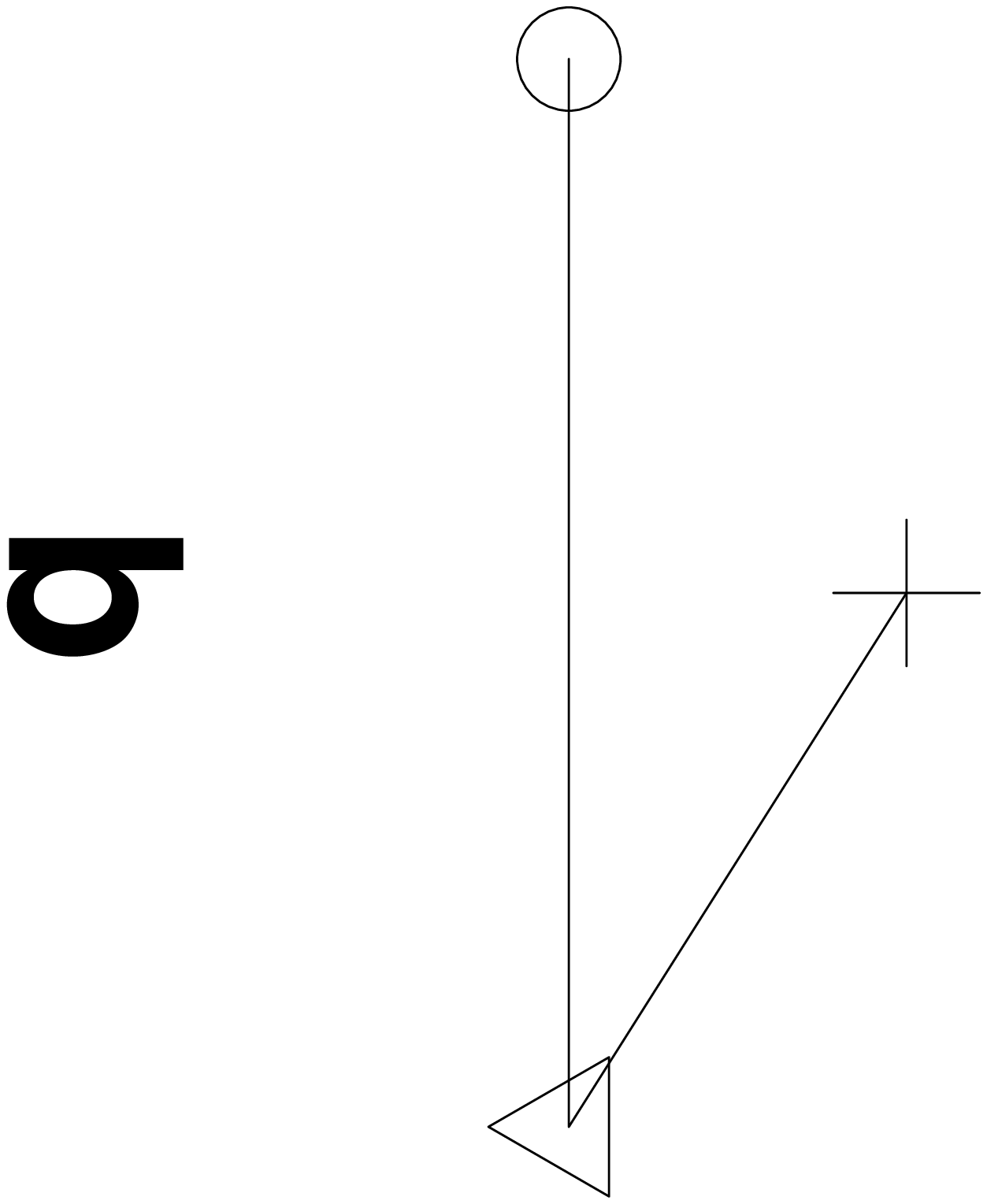}\\
	 \includegraphics[angle=-90,width=1\textwidth]{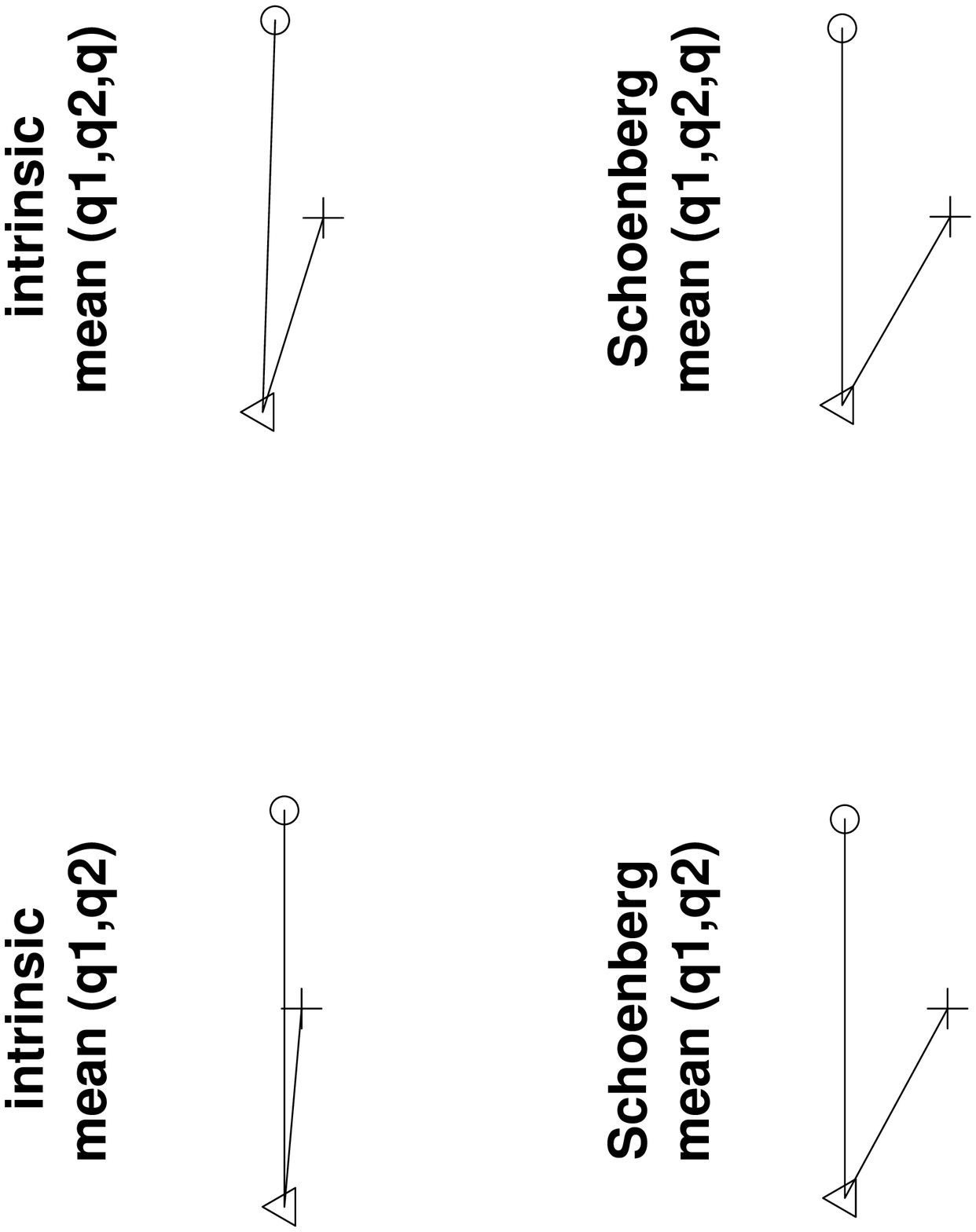}\end{minipage}
	\begin{minipage}{0.05\textwidth}\hfill \end{minipage}
	\begin{minipage}{0.33\textwidth}
	\caption{\it Planar triangles $q_1= x\cos\beta   - w_2\sin\beta$,  $q_2= x\cos\beta   + w_2\sin\beta$ and $q = x\cos\beta   + w_1\sin\beta$ with $x,w_1,w_2$ from Remark \ref{Schoenberg-emb-non-iso:rm}, $\phi =0.05$, $\beta=0.3$ (top row). Intrinsic means (middle row) of sample $(q_1,q_2)$ (left) and  $(q_1,q_2,q)$ (right). Schoenberg means (bottom row) of sample $(q_1,q_2)$ (left) and  $(q_1,q_2,q)$ (right).\label{Schoenberg_blind:fig} }
	\end{minipage}
	\end{figure}

	Even though Schoenberg means have been introduced to tackle 3D shapes, the effect of ``blindness'' can be well illustrated already for 2D. To this end consider $x=x(\phi)$, $w_1=w_1(\phi)$ and $w_2=w_2(\phi)$ as introduced in Remark \ref{Schoenberg-emb-non-iso:rm}. Along the horizontal geodesic through $x$ with initial velocity $w_2$ we pick two points $q_1= x\cos\beta  +w_2\sin\beta$ and $q_2= x\cos\beta   - w_2\sin\beta$. On the orthogonal horizontal geodesic through $x$ with initial velocity $w_1$ pick $q = x\cos\beta'   + w_1\sin\beta'$. Recall from Remark \ref{Schoenberg-emb-non-iso:rm}, that along that geodesic the derivative of the Schoenberg embedding can be made arbitrarily small for $\phi$ near $0$. Indeed, Figure \ref{Schoenberg_blind:fig} illustrates that 
	in contrast to the intrinsic mean, the Schoenberg mean is ``blind'' to the strong collinearity of $q_1$ and $q_2$. 

	\subsection{Discrimination Power}\label{discrimination:scn}
	In the ultimate example we illustrate the consequences of the choice of tangent space coordinates and the effect of the tendency of the Schoenberg mean to increase dimension  
	by a classification simulation. To this end we apply a Hotelling $T^2$-test to discriminate the shapes of 10 noisy samples of regular unit cubes from the shapes of 10 noisy samples of pyramids with top section chopped off, each with 8 landmarks, given by the following configuration matrix
	$$\left(\begin{array}{cccccccc} 0& 1 &\frac{1+\epsilon}{2}& \frac{1-\epsilon}{2}& 0& 1&\frac{1+\epsilon}{2}& \frac{1-\epsilon}{2}\\ 0& 0& \frac{1-\epsilon}{2}&\frac{1-\epsilon}{2} & 1& 1& \frac{1+\epsilon}{2}&\frac{1+\epsilon}{2}\\ 0& 0& \epsilon& \epsilon& 0& 0&\epsilon&\epsilon\end{array} \right)\,$$
	(cf. Figure \ref{cube_pyramid:fig}) determined by $\epsilon >0$. In the simulation, independent Gaussian noise 
	is added to each landmark measurement. Table \ref{sim-tab} gives the percentages of correct classifications. 

	\begin{figure}[h!]
	\centering
	 \includegraphics[width=0.45\textwidth]{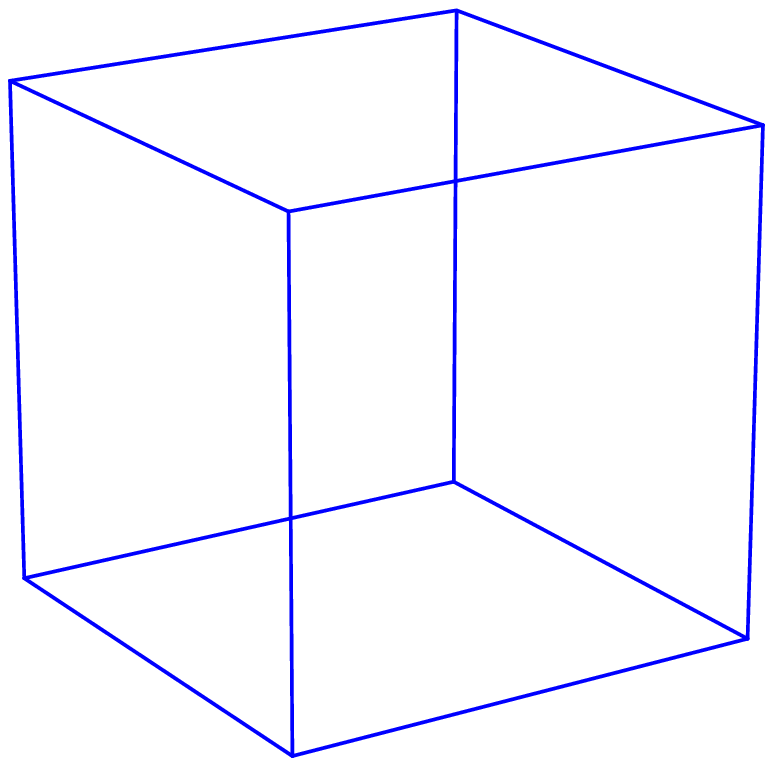}
	 \includegraphics[width=0.45\textwidth]{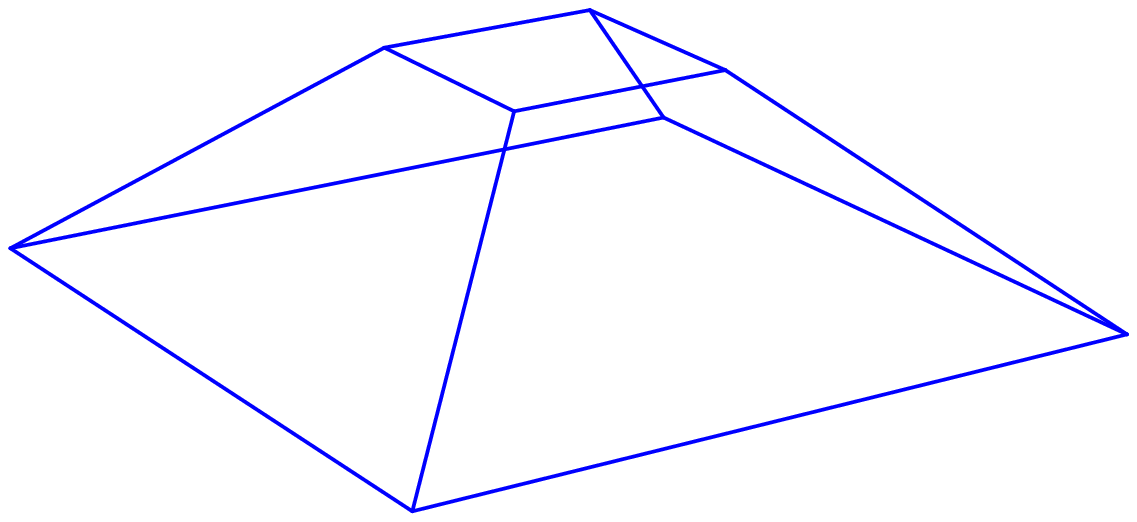}
	\caption{\it cube (left) and pyramid of varying height $\epsilon$ (right) for classification.\label{cube_pyramid:fig}}
	\end{figure}

	\begin{table}[h!]\centering\fbox{
	$\begin{array}{c|cccc}
         &\multicolumn{2}{c}{\mbox{intrinsic mean with}}&\mbox{Ziezold}&\mbox{Schoenberg}\\
         \epsilon&\mbox{intrinsic}&\mbox{residual}&\mbox{mean}&\mbox{mean}\\
	&\multicolumn{2}{c}{\mbox{tangent space coordinates}}\\
	\hline
        0.0 &70\,\% & 74\,\% & 74\,\% & 64\,\%\\	
	 0.2  &56\,\% & 58\,\% & 57\,\% & 51\,\%\\
	0.3  &41\,\% & 42\,\% & 42\,\% & 42\,\% 
	\end{array}$}
 	\caption{\it Percentage of correct classifications within 1,000 simulations each of $10$ unit-cubes and $10$ pyramids determined by $\epsilon$ (which gives the height), where each landmark is independently corrupted by Gaussian noise of variance $\sigma^2=0.2$ via a Hotelling T$^2$-test for equality of means to the significance level $0.05$.\label{sim-tab}}
	\end{table}

	As visible from Table \ref{sim-tab}, discriminating flattened pyramids ($\epsilon = 0$) from cubes ($\epsilon=1$) is achieved much better by employing intrinsic or Ziezold means rather than 
	Schoenberg means. This finding is in concord with Theorem \ref{dim-Schoenberg-mean:thm}: samples of size $10$ of two-dimensional configurations yield Euclidean means a.s. in ${\cal P}^{7}$ which are projected to ${\cal P}^{3}$ to obtain Schoenberg means in $\Sigma_3^8$. In consequence, Schoenberg means of noisy nearly two-dimensional pyramids are essentially three dimensional. With increased height of the pyramid, ($\epsilon>0$, 
	i.e. for more pronounced third dimension and increased proximity to the unit cube) this effect waynes and all means perform equally well (or bad). 
	Moreover in any case, intrinsic means with intrinsic tangent space coordinates qualify less for shape discrimination than intrinsic means with residual tangent space coordinates, cf. Remark \ref{use_is_ext_not_intr_coord_rm}. The latter (intrinsic means with residual tangent space coordinates) are better or equally well behaved as Ziezold means (which naturally use residual tangent space coordinates).

	\begin{table}[h!]\centering \fbox{
	 $\begin{array}{ccc}
	   \mbox{intrinsic mean}&\mbox{Ziezold mean} &\mbox{Schoenberg mean}\\\hline 
	0.24& 0.18& 0.04\end{array}$}
	\caption{\it Average time  in seconds for the computation of means in $\Sigma_3^8$ of sample size $20$ on a PC with a $800$ MHZ CPU based on $1,000$ repetitions.\label{comp_time:tab}}
	\end{table}
	In conclusion we record the time for the computations of means in Table \ref{comp_time:tab}. While Ziezold means compute in approximately $3/4$ of the computational time for intrinsic means, Schoenberg means are obtained approximately $6$ times faster.


\section{Discussion}\label{discussion-sec} 

	By establishing stability results for intrinsic and Ziezold means on the manifold part of a shape space, a gap in asymptotic theory for general non-manifold shape spaces could be closed, now allowing for multi-sample tests of equality of intrinsic means and Ziezold means. A similar stability assertion in general is false for Procrustean means for low concentration. There is reason to believe, however, that it would be true for higher concentration. 
	Note that the argument applied to intrinsic and Ziezold means fails for Procrustean means, since in contrast to the equations in Theorem \ref{pop_int_mean_bottom_int_mean_top:Th} 
	the sum of Procrustes residuals is in general non-zero. 
	Loosely speaking, the findings on dimensionality condense to
	{\it \begin{itemize}
	\item[--] Procrustean means may decrease dimension by 2 or more,
	\item[--] intrinsic and Ziezold means decrease dimension at most by 1, in particular, they preserve regularity,
	\item[--] Schoenberg means tend to increase up to the maximal dimension possible.
	\end{itemize}}

	Due to the proximity of Ziezold and intrinsic means on Kendall's shape spaces in most practical applications, taking into account that the former are computationally easier accessable (optimally positioning and Euclidean averaging in every iteration step) than intrinsic means (optimally positioning and weighted averaging in every iteration step), 
	Ziezold means can be preferred over intrinsic means. They may be even more preferred over intrinsic means, since  Ziezold means naturally come with residual tangent space coordinates which may allow in case of intrinsic means for a higher finite power of tests than intrinsic tangent space coordinates.

	Computationally much faster (not relying on iteration at all) are Schoenberg means which are available for Kendall's reflection shape spaces. As a drawback, however, Schoenberg means, seem less sensitive for dimensionality of configurations considered than intrinsic or Ziezold means. In particular for problems involving small sample sizes $n$ and a large number of parameters $p$ as currently of high interest in statistcal applications, involving (nearly) degenerate data, Ziezold means may also be preferred over Schoenberg means due to higher power of tests. 
%

	Finally, note that Ziezold means may be defined for the shape spaces of planar curves introduced by \cite{ZR72}, which are currently of interest e.g. \cite{KSMJ04} or \cite{SCC06}. Employing Ziezold means there, a computational advantage greater than found here can be expected since the computation of iterates of intrinsic means involves computations of geodesics which themselves can only be found iteratively. 
\section*{Acknowledgment}
	The author would like to thank 
	Alexander Lytchak for helpful advice 
	on differential geometric issues. 

\appendix
\section{Proofs}


 	\begin{Lem}\label{hor_measurable_lift:lem} Let $U\subset M$ be a tubular neighborhood about $p\in M$ that admits a slice via $\exp_p D \times_{I_p} G \cong U$ in optimal position to $p$. Then, there is a measurable horizontal lift $L\subset \exp_pD$ of $\pi(U)$ in optimal position to $p$.
	\end{Lem}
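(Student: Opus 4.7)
The plan is to exploit the twisted product description $\exp_p D\times_{I_p}G\cong U$ to reduce the lemma to a measurable selection problem for the compact Lie group $I_p$ acting linearly on $H_pM$. Under the identification above, two points $\exp_p v,\exp_p v'\in \exp_p D$ have the same $G$-orbit in $U$ if and only if there is $h\in I_p$ with $v'=dh\,v$; and every $G$-orbit in $U$ meets $\exp_p D$. Consequently, conditions (1) and (3) of Definition \ref{hor-lift:def} reduce to selecting exactly one representative from each $I_p$-orbit of the isotropy representation $dh|_{H_pM}$ restricted to $D$, while condition (2) holds automatically because, by hypothesis on the slice, every point of $\exp_p D$ is already in optimal position to $p$.

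First, I would invoke property (E) from Section \ref{prelims:scn} applied to the compact group $I_p$ acting on the vector space $H_pM$ via the isotropy representation: only finitely many orbit types $(I_p/K_1),\ldots,(I_p/K_N)$ occur in $D$, and the orbit-type strata $D^{(K_j)}=\{v\in D:\mbox{isotropy of }v\mbox{ is conjugate to }K_j\}$ form a Borel partition of $D$ into smooth $I_p$-invariant submanifolds.

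Second, on each stratum $D^{(K_j)}$ the action of $I_p$ has constant orbit type, so $D^{(K_j)}\to D^{(K_j)}/I_p$ is a locally trivial fiber bundle with compact fiber $I_p/K_j$; in particular it admits smooth local sections over any trivializing open set. Cover the second countable quotient $D^{(K_j)}/I_p$ by countably many trivializing open sets $\{V_i^{(j)}\}_{i\in\mathbb N}$ with smooth sections $\sigma_i^{(j)}$, and define a Borel section $\sigma^{(j)}:D^{(K_j)}/I_p\to D^{(K_j)}$ by $\sigma^{(j)}(q)=\sigma_{i(q)}^{(j)}(q)$, where $i(q)$ is the least index $i$ with $q\in V_i^{(j)}$. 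Assembling these strata-wise sections over the finitely many orbit types yields a Borel measurable section $\sigma:D/I_p\to D$ of the quotient map.

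Finally, set $L:=\exp_p(\sigma(D/I_p))$. Then $L\subset \exp_p D$ is Borel measurable (as a countable union of images of Borel sets under the diffeomorphism $\exp_p$ restricted to the orbit-type strata), $\pi(L)=\pi(\exp_p D)=\pi(U)$, every $G$-orbit in $U$ meets $L$ in exactly one point, and every point of $L$ is in optimal position to $p$. The main obstacle will be the measurable assembly in step two: one must verify that the orbit-type decomposition of $D$ is genuinely a Borel partition and that the fiber bundle structure on each stratum affords smooth local, hence Borel global, sections. Both facts are the content of the slice theorem applied to the compact Lie group $I_p$ acting on $H_pM$, and are the key technical inputs.
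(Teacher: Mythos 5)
Your proposal is correct and follows essentially the same route as the paper: both reduce the lemma, via the twisted product, to choosing one representative per $I_p$-orbit in the slice, and both obtain a measurable choice from the finiteness of orbit types, local triviality (local slices/sections) on the strata of constant orbit type, and a countable cover with least-index disjointification. The only difference is organizational: the paper first builds the selection on the boundary sphere $S=\partial B$ of a ball in $\exp_pD$ and then extends it radially as a cone, using that the isotropy group is constant along horizontal geodesics $t\mapsto\exp_p(tv)$ away from $p$, whereas you stratify the disc $D$ directly, which is equally valid.
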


	\begin{proof}
	 If $p$ is regular, then $L=\exp_pD $ has the desired properties. 
	Now assume that $p$ is not of maximal orbit type.  W.l.o.g. assume that $D$ contains the closed ball $B$ of radius $r>0$ with bounding sphere $S = \partial B$ and that there are $p^1,\ldots, p^J \in \exp_p(S)$ having the distinct orbit types orccuring in $S$. $S^{j}$ denotes all points on $S$ of orbit type $(G/I_{p^j})$, $j=1,\ldots,J$, respectively. Observe that each $S^{j}$ is a manifold on which $I_{p}$ acts isometrically. Hence for every $1\leq j\leq J$, there is a finite ($K_j<\infty$) or countable ($K_j=\infty$) sequence of tubular neighborhoods $U^j_k\subset S^{j}$ covering  $S^{j}$, admitting trivial slices via 
	$$\exp^{S^j}_{p^j_k}D^j_k \times I_p/I_{p_k^j} \cong U^j_k, ~~p_k^j\in U_k^j,~~1\leq k \leq K_j\,.$$
	Here, $\exp^{S^j}_{p^j_k}$ denotes the Riemannian exponential of $S^j$. 
	Defining a disjoint sequence 
	\begin{eqnarray*}
	\widetilde{U}^{j}_1&:=& U_1^j\,,\quad
	\widetilde{U}^{j}_{k+1}~:=~ U_{k+1}^j\setminus \widetilde{U}^{j}_{k}\mbox{ for $1\leq k\leq K_j-1 $}
	\end{eqnarray*}
	exhausting $S^j$ we obtain a corresponding sequence of disjoint measurable sets $\exp^{S^j}_{p^j_k}\widetilde{D}^{j}_{k}$ with  
	$$\exp^{S^j}_{p^j_k}\widetilde{D}^j_k \times I_p/I_{p^j_k} \cong \widetilde{U}^j_k, ~~1\leq k \leq K_j\,.$$
	Hence, setting 
	$$L^{j}_{k} := \exp^{S^j}_{p^j_k}\widetilde{D}^j_k\mbox{ and } L^j := \bigcup_{k=1}^{K_j}L_k^j$$
	 observe that every $p' \in S^j$ has a unique lift in $L^j$ which is contained in a unique $L^{j}_{k}$. This lift is by construction (all $L_k^j$ are in $\exp_pD$) in optimal position to $p$. Moreover, if $p'\in L^j_k$ and $gp' \in L^{j}_{k'}$ for some $g\in G$ with $1\leq k',k\leq K_j$ we have by the disjoint construction of $\widetilde{U}^{j}_{k}$ and $\widetilde{U}^{j}_{k'}$ that $k=k'$, hence the isotropy groups of $gp'$ and $p'$ agree, yielding $gp'=p'$. In consequence, $L^j$ is a measurable horizontal lift of $S^j$ in optimal position to $p$.
	Since every horizontal geodesic segment $t\mapsto \exp_p(tv)$, $v\in H_pM$ contained in $\exp_pD$ features a constant isotropy group, except possibly for the initial point we obtain with the definition of
	$$L :=	\cup_{j=1}^JM^j\mbox{ with }M^j :=\{\exp_p(tv)\in \exp_p(D): v\in \exp_p^{-1}(L^j), t\geq 0\}$$	 
%
	a measurable horizontal lift of $\pi(U)$ in optimal position to $p$.
	\end{proof} 

	\paragraph{Proof of Theorem \ref{glob_hor_measurable_lift:thm}.}

	Since $M$ is connected, any two points $p,p'$ can be brought into optimal position $p,gp'$ and a closed minimizing horizontal geodesic segment $\gamma_{gp'}$  between $p,gp'$ can be found. If $[p']\in Q^{([p])}$ then also $\gamma_{gp'}\subset  Q^{([p])}$. In consequence, there are tubulars neighborhoods $U_{p}$ of $p$ and $U_{p'}$ of $\gamma_{gp'}$ admitting slices in optimal position to $p$, which by Lemma  \ref{hor_measurable_lift:lem}, have horizontal lifts $L_{p}$ and $L_{p'}$ in optimal position to $p$. Since $M$ is a manifold, there is a sequence $[p_0],\ldots \in  Q^{([p])}$, $g_j\in G, p_j \in M$ such that $p_0=p$ and that each $g_jp_j$ is in optimal position to $p$ ($j\in J,~J \subset\mathbb N$) and such that
	$$Q^{([p])}~\subset~ \bigcup_{j\in J\cup\{0\}} \pi(U_{p_j})$$
	with  measurable horizontal lifts $L_{p_j}$ of $\pi(U_{p_j})$.  Defining $L'_{p_0}:=L_{p_0}$ and recursively $L'_{p_{j+1}} := L_{p_{j+1}}\setminus L'_{p_{j}}$ for $j=1,\ldots$ a measurable horizontal lift
	$ L' := \cup_{j=0}^\infty L'_{p_j}$ of $Q^{([p])}$ in optimal position to $p$ is obtained. Finally, suppose that $p_j$ is in optimal position to $p$ for $p_j \in [p_j]\in A$ and set
	$L''_0 := L'$, $L''_j := L''_{j-1}\cup\{p_j\}$ if $[p_j]\cap L'_j = \emptyset$ and $L''_j = L''_{j-1}$ ($j\geq 1$) otherwise to obtain the desired measurable horizontal lift $L'':= \cup_{[p_j]\in A} L''_j$ in optimal position to $p$. 


	\paragraph{Proof of Theorem \ref{population_mean_iso_grp:thm}.} 

	In case of intrinsic means, with the hypotheses and notations of the above proof of Theorem \ref{glob_hor_measurable_lift:thm}, suppose that $L''$ is a measurable  horizontal lift of $Q^{([p])}\cup A$ 
	in optimal position to an intrinsic mean $p\in M$ of the random element $Y$ on $M$ defined as in Theorem \ref{pop_int_mean_bottom_int_mean_top:Th} with $[p]\in  E^{(d_Q)}(X)$. For notational simplicity we assume that $Q^{([p])}=\pi(U)$ with a single tubular neighborhood $U$ of $p$ admitting a slice. 

%
%
 	Then, additionally using the notation of the above proof of Lemma \ref{hor_measurable_lift:lem}, if the assertion of the Theorem would be false, w.l.o.g. there would be $g\in I_p$, $1\leq {j}\leq J$, $p_{j} \in S^{j}$ with $gp_{j} \neq p_{j} $ and $\mathbb P\{Y\in M^{j}\}>0$.
	In particular, in the proof Lemma \ref{hor_measurable_lift:lem}, we may choose a sufficiently small $U^{j}_{k}$ around $p_{j}$ such that in consequence of (\ref{intr-inj:cond})   
	\begin{eqnarray}\label{diff-exp:eq} \int_{M^{j}_{k}(\epsilon)} \big(\exp^{-1}_pY - \exp^{-1}_p (gY)\big) \,d\Prob_Y&\neq& 0\,\end{eqnarray}
	 with some $\epsilon, r>0$,
	$M_k^j(\epsilon) :=\{\exp_p(tv)\in \exp_p(D): v\in \exp_p^{-1}(L^j_k), |t-r|<\epsilon\}$ and $L_k^j$ obtained from $U_k^j$ as in the proof of Lemma \ref{hor_measurable_lift:lem}.
	Suppose that $L\subset L''$ is obtained as in the proof of Lemma \ref{hor_measurable_lift:lem} by using $L^{j}_{k}$ and suppose that $\widetilde{L''}$ is obtained from $L''$ by replacing the $M_k^j(\epsilon)$ part of $M_k^j$ with $\{\exp_p(tv)\in \exp_p(D): v\in \exp_p^{-1}(gL^j_k), |t-r|<\epsilon\}$. 
	Then $\widetilde{L''}$ is also a measurable horizontal lift in optimal position to $p$. Since we assume that $[p]$ is an intrinsic mean of $X$,  assertion (i) of Theorem \ref{pop_int_mean_bottom_int_mean_top:Th} teaches that $p$ is also an intrinsic mean of lift $\widetilde{Y}$ of $X$ to $\widetilde{L''}$, i.e. 
	\begin{eqnarray*}
	 0&=&\int_{L''} \exp^{-1}_pY\,d\Prob_Y - \int_{\widetilde{L''}} \exp^{-1}_p\widetilde{Y}\,d\Prob_Y'\\
	&=&\int_{M^{j}_{k}(\epsilon)} \big(\exp^{-1}_pY - \exp^{-1}_p (gY)\big) \,d\Prob_Y\,.
	\end{eqnarray*}
 	This is a contradiction	to (\ref{diff-exp:eq}) yielding the validity of the theorem for intrinsic means. 

	The assertion in case of Ziezold means is similarly obtained. Use the same horizontal lifts $L''$ and $\widetilde{L''}$ from above, replace $\exp^{-1}_pY$, $\exp^{-1}_p\widetilde{Y}$ and $\exp^{-1}_p (gY)$ by $ d f^Y_{ext}(p)$, $  df^{\widetilde{Y}}_{ext}(p)$ and $ d f_{ext}^{gY}(p)$, respectively, use the hypothesis (\ref{Ziez-inj:cond}) to obtain the analog of (\ref{diff-exp:eq}) and finally obtain the contradiction arguing with assertion (ii) of Theorem \ref{pop_int_mean_bottom_int_mean_top:Th}.

\bibliographystyle{../../BIB/elsart-harv}
\bibliography{../../BIB/shape}

\end{document}